\numberwithin{equation}{section}
\theoremstyle{plain}     
\newtheorem{theorem}{Theorem}
\newtheorem{corollary}{Corollary}
\newtheorem{lemma}{Lemma}
\newtheorem{proposition}{Proposition}
\theoremstyle{definition} 
\newtheorem{definition}{Definition}
\newtheorem{example}{Example}
\newtheorem{observation}{Observation}
\theoremstyle{remark} 
\newtheorem{claim}{Claim}
\def\section{\@startsection {section}{1}{\z@}{-3.5ex plus -1ex minus
 -.2ex}{2.3ex plus .2ex}{\large\bf}}
\def\bfm#1{\mbox{\boldmath$#1$}}
\def\0{\bfm 0}
\newcommand{\sm}{\hspace{-2.5pt}\setminus\hspace{-2.5pt}}
\DeclareMathAlphabet{\mathpzc}{OT1}{pzc}{m}{it}
\title{A Network Game of Dynamic Traffic \thanks{This work was supported in part by the National Natural Science Foundation of China (NSFC),
  under grant numbers 11601022,  11531014 and 11471326.} }
\author{Zhigang Cao\footnote{Academy of Mathematics and Systems Science, Chinese Academy of Sciences, Beijing, {100190}, China, Email: zhigangcao@amss.ac.cn}, \  \  Bo Chen\footnote{Warwick Business School, University of Warwick, Coventry, {CV4 7AL,} United Kingdom, Email: Bo.Chen@wbs.ac.uk}, \ \  Xujin Chen\footnote{Academy of Mathematics and Systems Science, Chinese Academy of Sciences, Beijing,  {100190}, China, Email: xchen@amss.ac.cn},\  \  Changjun Wang\footnote{Beijing Institute for Scientific and Engineering Computing, Beijing University of Technology, Beijing, {100124}, China, Email: wcj@bjut.edu.cn}}
\date{\today}
\begin{document}

\maketitle

\bibliographystyle{plain}

\newcounter{my}
\newenvironment{mylabel}
{
\begin{list}{(\roman{my})}{
\setlength{\parsep}{-0mm}
\setlength{\labelwidth}{8mm}
\usecounter{my}}
}{\end{list}}

\newcounter{my2}
\newenvironment{mylabel2}
{
\begin{list}{(\alph{my2})}{
\setlength{\parsep}{-1mm} \setlength{\labelwidth}{8mm}
\setlength{\leftmargin}{6mm}
\usecounter{my2}}
}{\end{list}}

\newcounter{my3}
\newenvironment{mylabel3}
{
\begin{list}{(\alph{my3})}{
\setlength{\parsep}{-0mm}
\setlength{\labelwidth}{8mm}
\setlength{\leftmargin}{14mm}
\usecounter{my3}}
}{\end{list}}

\newcounter{my4}
\newenvironment{mylabel4}
{
\begin{list}{(\alph{my4})}{
\setlength{\parsep}{-0mm}
\setlength{\labelwidth}{2mm}
\setlength{\leftmargin}{3mm}
\usecounter{my4}}
}{\end{list}}

\newcounter{my5}
\newenvironment{mylabel5}
{
\begin{list}{(\alph{my5})}{
\setlength{\parsep}{-0mm}
\setlength{\labelwidth}{4mm}
\setlength{\leftmargin}{4mm}
\usecounter{my5}}
}{\end{list}}

\newcounter{my6}
\newenvironment{mylabel6}
{
\begin{list}{(\roman{my6})}{
\setlength{\parsep}{-0mm}
\setlength{\labelwidth}{8mm}
\setlength{\leftmargin}{8mm}
\usecounter{my6}}
}{\end{list}}


\begin{abstract}

We study a network congestion game of discrete-time dynamic traffic of atomic agents with a single origin-destination pair. Any agent freely makes a dynamic decision at each vertex (e.g., road crossing) and traffic is regulated with given priorities on edges (e.g., road segments). We first constructively prove that there always exists a subgame perfect equilibrium (SPE) in this game. We then study the relationship between this model and a simplified model, in which agents select and fix an origin-destination path simultaneously. We show that the set of Nash equilibrium (NE) flows of the simplified model is a proper subset of the set of SPE flows of our main model. We prove that each NE is also a strong NE and hence weakly Pareto optimal. We establish several other nice properties of NE flows, including global First-In-First-Out. Then for two classes of networks, including series-parallel ones, we show that the queue lengths at equilibrium are bounded at any given instance, which means the price of anarchy of any given game instance is bounded, provided that the inflow size never exceeds the network capacity.
\end{abstract}

\section{Introduction}

Selfish routing is one of the fundamental models in the {study of network traffic systems} {\cite{r07,rt02,w52}.} Most of its literature
assumes {\em essentially static} traffic flows, {failing} to capture many dynamic characteristics of traffics, 
despite the usual justification 
that static flows are good approximations of steady
states of the traffic system. Recently, models of \emph{dynamic} flows {have been drawing much  attention in various research areas of selfish routing} 
\cite{cc15,hm11,ks11,sst16,whk14}.

\subsection{Model}

Let us start with an informal and intuitive description of our model, with a formal definition 
{postponed to} Section~\ref{sec:model}.
We are given an acyclic directed network, with two special vertices 
called the origin $o$ 
and the destination $d$, respectively, such that each edge is on at least one $o$-$d$ path. Each edge $e$ of this
network is associated with an integer capacity $\mathpzc c_e$ and   a {positive} 
integer free-flow transit time $\mathpzc t_e$. 
Time horizon is infinite and discretized as $1,2,\ldots$. At each 
{time point}, 
a set of {selfish agents} 
enter the network from the
origin $o$, 
trying to reach the destination $d$ as quickly as possible. When an agent uses an edge $e$, two costs are
incurred to him: 
a fixed transit cost $\mathpzc t_e$ 
and a variable waiting cost dependent on
the volume of the traffic flow on  $e$ and the capacity $\mathpzc c_e$  as well as the agent's position in the queue of agents waiting 
{at} $e$. The total cost to each agent, also referred to as his \emph{latency}, is the
sum of the two costs on all edges he uses (which form an $o$-$d$ path in the network). The time {an} agent reaches the destination is simply his departure time
plus his latency.

{A critical setting} of our model is how the waiting time of each agent is determined in a way called
{\em deterministic queuing}. At each time step, for each edge $e$ of the network, there is a (possibly empty) queue of agents waiting at the tail of $e$, 
and at most a 
capacity number $\mathpzc c_e$ of agents who have the
highest priority (according to the queue order) can 
{start to move  along $e$}. After a fixed period of transit time $\mathpzc t_e$,
each of these agents reaches the head of $e$, and enters the next edge (waiting at its tail for traveling along it) unless the destination $d$ has been reached. The queue at
each edge is updated according to the  First-In-First-Out (FIFO) rule and pre-specified Edge Priorities -- if two agents enter an edge $e$ simultaneously,
{their queue priorities are determined by 
the priorities of the preceding edges from which they   enter $e$.} To break{
ties when agents} enter an edge simultaneously from the same edge, we divide each edge $e$ into
 a  
 number $\mathpzc c_e$ of lanes, and assign 
 {priorities among these lanes.} 
Initially, 
the agents who enter the {network} 
at the same time are associated with original 
priorities \emph{among them}, which are
{\emph{temporarily valid}} {only} when they enter the network. 
This is a 
{crucial} difference between our model and {that} in
{\cite{hm11,sst16}} where pre-specified  priorities \emph{among all agents} are {{\em permanent}}, working globally throughout the game. 

Each agent makes a {routing decision} at {\emph{every}} nonterminal vertex he reaches. {This is
the second {crucial and fundamental} difference of our model from almost all in the literature {\cite{hm11,sst16,whk14}}, in which
 agents are assumed to make their routing decisions \emph{only} at the  origin $o$ as to which $o$-$d$ path to take.
In addition, it is usually assumed that agents' decisions upon entering the system {are independent of} the
decisions of those agents who {are currently} in the system. {In game-theoretic
terminology}, while the dynamic traffic problem is usually modeled as a game of normal form (or a simultaneous-move game),
we model it as a game of extensive form, which allows for agents' \emph{online} adjustments and error corrections.

\subsection{Results}

Let $\Gamma$ denote our {\em extensive-form} game of dynamic traffic,  for which we apply the  solution concept of (pure strategy) {\em Subgame Perfect Equilibrium} (SPE). 
{The} analysis of SPE of $\Gamma$ is much more complicated than that of {{\em Nash Equilibrium}} (NE) of   the \emph{normal-form} game, denoted as $\Gamma^n$, which
corresponds  to the simplified setting where agents choose their one-off $o$-$d$ paths
simultaneously  at the very start of the game.
{We} relate our study of
{SPEs} for $\Gamma$ to that of {NEs} for $\Gamma^n$, and
establish four sets of results as specified below. 


\paragraph{\bf SPE existence in $\bfm\Gamma$.} We {demonstrate} that $\Gamma$ always possesses a very special SPE (which particularly implies the {\em NE existence} of $\Gamma^n$). 
{To the best of our knowledge, this is the first SPE existence result in the field of dynamic atomic routing games.}
Since typically $\Gamma$ is not finite and
multiple agents may
move simultaneously at each time {step}, the usual backward induction method does not work {here}.
Instead, we construct iteratively an intuitively simple SPE such that, {\em in any subgame of $\Gamma$}, the path profile  realized according to the SPE is {\em iteratively dominating} in the following sense: given dominating paths determined for a set $D$ of agents (initially none), the next dominating path determined for another agent $i$ (from his current location to the destination $d$) satisfies the property that as long as agents in $D\cup\{i\}$ follow their dominating paths, no agent outside $D\cup\{i\}$ can overtake $i$ at any vertex 
of $i$'s dominating path {(particularly the destination $d$)},
regardless of the choices of all the other agents. 

\paragraph{{\bf Equilibrium relationship between $\bfm{\Gamma^n}$ and $\bfm\Gamma$.}}
Given any NE in $\Gamma^n$, we construct an SPE of $\Gamma$ such that its realized path profile is the
given NE. 
This result along with {a} 
counterintuitive example shows
that {the NE set of $\Gamma^n$ is a {\em proper} subset of the SPE outcome set of $\Gamma$.} 
{On the one hand, {the proper containment reaffirms the} 
intuition that
model $\Gamma$ is more flexible than $\Gamma^n$. On the other hand, {the result builds} 
a useful bridge between the more frequently analyzed 
model $\Gamma^n$ and the more realistic but much more complicated model~$\Gamma$.}

\paragraph{\bf 
{NE properties} in $\bfm{\Gamma^n}$.} We show that in $\Gamma^n$ the set of {NEs} and that of {\em strong {NEs}} are identical, which particularly implies that all {NEs} of $\Gamma^n$ are {\em weakly Pareto Optimal}. The result is a corollary of the following stronger properties 
{for any NE} routing of $\Gamma^n$. Let agents be grouped according
to their arrival times at the destination~$d$.
\begin{mylabel5}
\item[-] {\em Hierarchal Independence}:  The arrival times at any vertex of agents in earlier groups are independent of {\em any} choices of agents in later groups (even if the {later} agents do not care about their own latencies and move in a
coordinated way).
\item[-]  {\em Hierarchal Optimality}: The equilibrium latency of any agent is minimum for him among all routings in which all agents in earlier groups follow their NE paths.
 \item[-] {\em Global {FIFO:}}
 vertex of the network (particularly the origin $o$), then the former will reach
the destination $d$ no later than the latter. 
\end{mylabel5}
{In contrast to the above global property,   FIFO in the literature} usually serves as an {\em assumption}
to regulate the queue on each {\em edge}. We would {also} like to emphasize that {the satisfications of all the above nice properites are by {\em all NEs} on {\em all networks},  while usually  in the related literature only} properties of some {\em special NEs} are studied.   


\medskip The above three sets of results remain valid even if the network has {\em multiple  origins} and a single destination.
This is one of the key reasons that we are able to prove the existence of an SPE for $\Gamma$ {(note that the off-equilibrium
scenarios are essentially new problems with multiple origins).}

\paragraph{\bf Bounded queue  {lengths}.}  Since {there may be infinitely many agents in {dynamic routing} models like ours,} 
  a natural question is whether the queue length at each
edge at equilibria can be bounded by a constant that 
depends {\em only} on the input {\em finite} network, when the inflow size
never exceeds the minimum capacity of an $o$-$d$ cut in the network. 
This interesting theoretical challenge has been listed as {an} 
important open problem in the {literature} 
\cite{cc15,sst16}.  {For a special} class of series-parallel networks, the so-called chain-of-parallel networks,  Scarsini et al. \cite{sst16} {proved} that the queue lengths at UFR equilibrium (i.e., NE with earliest arrivals) are bounded for   seasonal inflows.

We prove that equilibrium queue lengths   are indeed bounded in {our games on two}
{classes of networks: {$\Gamma^n$ on} series-parallel networks, and $\Gamma$ on networks {in which the} in-degree does not exceed the out-degree
at any nonterminal vertex.} {The {result of} bounded queue length} {implies} that the price of {anarchy} (PoA) and price of  stability (PoS) w.r.t. certain naturally defined social welfare measures  \cite{kp99,ss03} are also bounded  for the two classes of  {networks}.

\subsection{Related literature}

Classical models of dynamic flows, a.k.a. flows over {time,} 
were 
{pioneered}
by Ford and Fulkerson {\cite{ff62,ff58}, who} studied in a discrete-time setting the problem of maximizing the amount of goods that can be transported from 
{the origin to the destination} within a given 
{period of time.} 
{They showed that} this dynamic problem 
 {can be} reduced to 
 {the standard (static) minimum-cost flow problem and thus polynomially solvable.} 
{Philpott~\cite{p90} and Fleischer and Tardos~\cite{ft98} showed that many results can be carried over to the continuous counterpart.}
More recent {developments along this line of research} can be found
in \cite{fs07,s09}. 

Games of dynamic flows, {a.k.a.} routing games over time, were initially studied by \cite{v69,y71}, {which focused on analyzing Nash equilibria for} 
small-sized
concrete examples. {As summarized by \cite{ks09,pz01}, most subsequent studies} can be classified into four categories {in terms of methodologies}: mathematical programming,
optimal control, variational inequalities, and simulation-based approaches. 
{Variational} inequality formulations \cite{fbstw93} turn out {to be} the most successful in investigating Nash equilibria of the {\em simultaneous departure-time route-choice models}, where each player has to make a decision on his departure time 
 {as well.} Unfortunately,  little is known about the {existence}, uniqueness and characterizations of equilibria {under the general formulation \cite{mw10}.}

Our model is a {variant} of the
{\em deterministic queuing} (DQ) model which {was introduced} 
by \cite{v69}, developed by
\cite{hk81}, and recently revivified by \cite{ks09}. 
The 
notational and inconsequential differences are in the locations used to store queues -- the {standard} DQ model uses head parts of edges, while our storages, as well as those in \cite{cc15}, are located on tail parts. 
{The DQ model usually assumes that the waiting queue has no physical length, and the problem seems to be harder to analyze if  otherwise assumed \cite{d98}.} 


\paragraph{Non-atomic dynamic traffic games.}
Koch and Skutella~\cite{ks09,ks11} are the first to use a DQ model 
to study non-atomic dynamic flow games. 
They investigated the continuous-time  single-origin single-destination case with uniform inflow rates, i.e., the {\em temporal routing game} as named by \cite{bf15}. It was shown in \cite{ks11} that the Nash equilibria, i.e., Nash flows over time, can be 
viewed as concatenations of special static flows, the so-called {\em thin flows with resetting}. The 
Nash flow was characterized by the so-called {\em universal FIFO condition} that no flow overtakes another, 
and equivalently by an analogue to Wardrop principle that flow is only sent along currently shortest paths. 
Cominetti et al.~\cite{cc15} proved in a constructive way the existence and uniqueness of the Nash equilibrium of temporal routing games in 
a more general setting with piecewise constant inflow rates.  For multi-origin multi-destination cases, Cominetti et al.~\cite{cc15} proved in a {nonconstructive} way the NE existence when the inflow rates belong to the space of $p$-integrable functions with $1<p<\infty$. 

{Building on the temporal routing model and results of \cite{ks11},} Bhaskar et al.~\cite{bf15}   investigated the price of anarchy in Stackelberg routing games{, where the network manager acts as the leader picking a capacity for each edge within its given limit, and the agents act as followers, each of whom picks a path. 
The authors gave a polynomial-time computable strategy for the leader in an acyclic directed network to reduce edge capacities such that the PoA w.r.t. minimum completion time is $e/(e-1)$, and the PoA w.r.t. minimum total delay is upper bounded by $2e/(e-1)$, provided that the original network is saturated by its earliest arrival flow. Macko et al.~\cite{mls10} showed that the PoA of the temporal routing game w.r.t. to the minimum maximum delay can be as large as $n-1$ in networks with $n$ vertices.}
Anshelevich and Ukkusuri~\cite{au09} considered a 
dynamic routing game whose monotone increasing edge-delay functions are more general than DQ models but still obey the local FIFO principle. They studied how a single splittable flow unit present at each origin at time 0 would travel across the network to the corresponding destination, assuming that 
each flow particle is controlled by a different agent. {They showed that in} 
the single-origin single-destination case, 
there is a unique Nash equilibrium, {and the} 
efficiency of this equilibrium can be arbitrarily bad; {in} 
the multi-origin multi-destination case, however, {the} existence of an NE is not guaranteed. 

\paragraph{Atomic dynamic traffic games.} 
The most related work to ours  
is \cite{sst16}, {with several notable differences.} 
{First,} to break ties {Scarsini et al.~\cite{sst16} placed} 
priorities on agents 
rather than on edges. 
 {Second,} they only {allowed} 
 each agent to make a decision at the origin rather than at each {intermediary} vertex.} 
{Third,} {they focused} 
on seasonal inflows and how the transient phases impact the
long-run steady outcomes, whereas their notion of steady outcome does not apply in our model 
{because} the inflows
we consider are not restricted to be  {seasonal.} 
{Finally, they {concentrated} on a special kind of NE named  {\em Uniformly Fastest Route} (UFR) equilibrium, which is an NE 
{such that each agent} reaches all vertices 
{on} his route as early as possible.}
Using an argument similar to that in \cite{whk14}, Scarsini et al.~\cite{sst16} proved that the game admits at least one {UFR equilibrium.}  When the inflow is a constant not exceeding the capacity of the network, they characterized the flows and costs generated by optimal routings in general networks, and those generated by UFR equilibria in parallel networks and more general chain-of-parallel networks. 
The results on parallel networks were extended to the setting with 
seasonal inflows. 

In \cite{whk14}, more variants of {atomic games of dynamic traffic} {were} 
considered {for finitely many agents under discrete-time DQ models}. Apart from the sum-type
{of latency functions} as {considered} in \cite{sst16} and in this paper, Werth et al.~\cite{whk14} also studied the bottleneck-type
objective {functions}{ for agents, where the cost of each agent equals his expense on the slowest edge of his chosen path}. 
{To break ties,} {the global priorities placed on agents  as in \cite{sst16} were discussed for both the sum-objective and bottleneck-objective models, while the
local priorities placed on edges   as in this paper {were} investigated only for {the} bottleneck-objective model. Werth et al.~\cite{whk14} {focused} 
 on computational issues on NE and optimization problems, while we concentrate on SPE existence and NE {properties.}

As one of the earliest papers studying dynamic atomic routing games, \cite{hm09,hm11} {was} 
concerned with {computational} {complexity properties of NE and best-response strategies of a finite network congestion} model, where each edge of the network is viewed as a machine with a   processing speed {and a scheduling policy, and} each agent is viewed as a task   with a  {positive length which has to be processed by the machines one after another along the path the agent {chooses.} 
While the {positive} transit costs of agents are determined by machine speeds  and  {task lengths},  {their waiting costs are}  determined by {scheduling policies}. 
Apart from {the (local) FIFO policy (on each machine) that tasks are processed non-preemptively in order of arrival, the policy of (non)-preemptive global ranking and that of fair time-sharing were {also} investigated. For  agents (tasks) with uniform {lengths} in a single-origin directed network, when the FIFO policy is coupled with the global agent priorities for tie-breaking, the network congestion game turns out to be a generalization of the sum-objective model of  \cite{whk14}, and admit a strong NE which can be computed efficiently. Somewhat {surprisingly}, for the more restrictive case of single-origin single-destination networks, computing best responses is NP-hard. 
Kulkarni and Mirrokni~\cite{km15} studied the robust PoA of a generalization of the model in \cite{hm11} under the so-called {\em highest-density-first}   forwarding
policy.}

The global priority scheme introduced in \cite{fov08} was considered by  Harks et al.~\cite{hpsk16} for a dynamic route-choice game under the discrete-time DQ model without the FIFO queuing rule on each edge. 
They analyzed the impact of agent priority ordering on the efficiency of NEs for minimizing the total latency of all {$k$} agents, as well as their computability. 
They proved that the asymmetric game has its PoS in $\Omega(\sqrt k)$ and PoA upper bounded by $1+k^3/2$, while the symmetric game has its PoS and PoA equal to 1 and $(k+1)/2$, respectively. 
They also showed that an NE and each agent's best response are polynomially {computable}. 
Koch~\cite{k12} analyzed the dynamic atomic routing game on a restricted class of the discrete DQ model with global agent priorities, where each edge has zero free-flow transit time and unit capacity. 



\bigskip

The rest of this paper is organized as follows. In Section~\ref{sec:model} we present a formal definition
of our network game model along with its {extensive-form} setting $\Gamma$ and {normal-form} game setting $\Gamma^n$. In Section~\ref{sec:existence} we establish the existence of an SPE in $\Gamma$. In
Section~\ref{sec:NE-vs-SPE} we study {the} 
 {properties} of all {NEs} in $\Gamma^n$, and then investigate the relationship between the equilibrium flows of $\Gamma$ and $\Gamma^n$. In
Section~\ref{sec:bounded-queue} we bound the lengths of equilibrium queues for two special classes {of} networks. Finally, we conclude the paper with remarks on future research directions in Section~\ref{sec:conclusion}. All missing proofs and details are provided in the appendix.

\section{The Model}\label{sec:model}

In this section we formally present our network game model for dynamic traffic followed with some necessary notations and preliminaries. 
By edge subdivisions and duplications, we assume 
 {w.l.o.g.} in the rest of this paper that {\em each edge of the networks has a unit capacity  and a unit length}.



\subsection{{Dynamic traffic in directed networks}}\label{sec:dynamic}
 Let {$G$} be a finite {\em acyclic} directed {multi-graph} with two distinguished vertices, {the origin $o$ and the destination $d$},  such that every edge of $E$ is
on some $o$-$d$ path in $G$. The dynamic network traffic is represented by \emph{players} moving through $G$. The {infinite} time 
is discretized as $0,1,2,\ldots$ and at each {integer} time point $r\ge 1$, a {(possibly empty)} set $\Delta_r$ of {finitely many} players enter $G$ from
their common origin $o$  and each of them will go through some $o$-$d$ path in $G$ and leave $G$ from $d$. Each player,
when reaching a vertex {$v$ ($\ne d$)}, {\em immediately} selects an edge {$e$ outgoing from $v$} 
and enters $e$ {\em at once}. 
At any {\em integer} time point $r$, all players (if any) who have entered $e$  but not exited yet {\em queue}
at  {\em the tail part of~$e$}, and  only the {\em unique} head of the queue  {will} leave  the queue. {This queue head will {spend} one time unit in traversing $e$ from its tail to its head and exit $e$  at time $r+1$}. 

\paragraph{{{An} extended infinite network.}}  For ease of description, we make a technical extension of $G$ to accommodate all players of $\Delta:=\cup_{r\ge1}\Delta_r$ at the {very beginning.}
 Suppose that $|\Delta_r|$ is upper bounded by a fixed
integer {$F$ for all} $r\ge1$. We construct an infinite network {$\bar{G}$} from $G$ as follows:
add a new vertex $\bar{o}$, 
and connect $\bar{o}$ to $o$ with $F$ {internally} disjoint $\bar{o}$-$o$ paths:
{$P^f:=\bar{o}\cdots o^f_{r}o^f_{r-1}\cdots o^f_2o^f_{1}o$, $f=1,2,\ldots,F$. Each path $P^f$ has}
  a set of an infinite number of edges $\{o_{r+1}^fo_{r}^f\,|\, r\ge 0\}$, where
$o_0^f=o$, {and intersects} $G$ only at $o$. At time 0, all players of {$\Delta$ are queuing {at} 
and sets off from distinct edges in  $\cup_{f=1}^FP^f\subseteq\bar{G}$ (so they are all queue heads)} such that {for any $r\ge1$, players of $\Delta_r$ 
are on the tail parts of edges
$o^f_{r}o^f_{r-1}$, $f=1,\ldots,|\Delta_r|$, respectively; they are all at a distance $r$ from vertex $o$, and set off from these edges for their common destination $d$.}  
The 
{traffic} on
{$\bar G$} naturally corresponds to the one on {${G}$} in a way that the restriction of the {former} to its
subnetwork $G$ is exactly the {latter}---for any $r\ge1$, all players in $\Delta_r$ reach $o$ at
time $r$. 

For every $v\in \bar V$, we use $\bar E^+(v)$ and $\bar E^-(v)$  to denote the set of outgoing edges from   $v$ and the set of
incoming edges to $v$ {in $\bar G$}, respectively. For every $e\in\bar E$, we use {$u_e, v_e\in\bar V$} to denote the tail and head of $e$,
respectively, {i.e.,} $e=u_e v_e$.

\paragraph{Traffic regulation.}
A complete priority order $\prec_v$ is pre-specified over all edges in $\bar{E}^-(v)$ for any {$v\in \bar V$} {({$e_1\prec_v e_2$} means that $e_1$ has a higher priority than $e_2$).}
If players
queue  {at} 
(the tail part of) edge $e$, they are prioritized for entering and therefore exiting the queue according to the
following {\emph{queuing rule}}: For any pair of players, whoever enters $e$ earlier   
has {a} 
higher
priority; if they enter $e$ at the same time, then they must do so through two different edges of
$\bar{E}^-(u_e)$, in which case their {priorities are} determined by the priority order $\prec_{u_e}$ on the two edges.

\paragraph{Configurations.} We use $Q^r_e$ to denote the queue on edge $e$ at time $r$, which is both a sequence of players and the corresponding set. 
We call $c_r=(Q^r_e)_{e\in\bar E}$ a {\em configuration} at time $r$ if $Q^r_e\cap Q^r_{e'}=\emptyset$ for different $e$ and $e'$. Throughout this paper, \begin{mylabel5}
\item[-]$\Delta(c_r):=\cup_{ e\in \bar{E}}Q_e^r$ denotes the set of players involved in configuration $c_r$.
\item[-] $c_0$ denotes the unique {\em initial configuration} given by queues at time 0.
\end{mylabel5}Let $\mathcal C_r$ denote the set of configurations at time $r$.

\paragraph{Action sets.} Given configuration $c_r=(Q^r_e)_{e\in\bar E}$, the {\em action set} of  player $i\in Q_e^r$, {denoted} $\bar E(i,c_r)$,   is defined as follows: 
\begin{mylabel5}
\item[-] If 
 the head $v_e$ of $e$ is the destination $d$ and $i$ queues first in $Q_e^r$, then $\bar E(i,c_r):=\emptyset$, i.e., $i$ simply exits the system at time $r+1$;
 \item[-] If $v_e\neq d$ and $i$ queues first in $Q_e^r$, then $\bar E(i,c_r):= \bar E^+(v_e)$, i.e., player $i$ selects the next edge that is available at $v_e$; 
\item[-] Otherwise (i.e., $i$ is not the head of {$Q^r_e$}), player $i$ has to stay on $e$ with $\bar E(i,c_r):=\{e\}$. 
\end{mylabel5}


\paragraph{Consecutive configurations.} Given a configuration $c_r$ and an action profile $\bfm a=(a_i)_{i\in\Delta(c_r)}$ with $a_i\in \bar E(i,c_r)$, the traffic rule leads to a a new configuration $c_{r+1}= (Q_e^{r+1})_{e\in \bar{E}}$ at time $r+1$, referred to as a {\em consecutive configuration} of $c_r$:

 \begin{mylabel5}
 \item[-]  {As a set, $Q_e^{r+1}=\{i\in \Delta(c_r)\,|\, a_i=e\}$ consists of players choosing $e$ in {action profile ${\bfm a}$.}} 

 \item[-] {As a sequence, $Q_e^{r+1}$  
 equals $Q_e^r$ with its head removed 
followed by {$Q_e^{r+1}\sm Q_e^{r}$} 
      whose positions are according to the traffic regulation
      priority $\prec_{u_e}$ at the tail vertex $u_e$ of edge~$e$.} 
\end{mylabel5}

\subsection{Extensive form game setting $\Gamma$}


Given a dynamic traffic problem on network $\bar G$ with initial configuration $c_0$ (which is equivalent to the problem on  network $G$ with incoming flows $\{\Delta_r: r\geq 0\}$), {we study   a natural {extensive-form} game $\Gamma=\Gamma(\bar{G})$ as specified below.} Given any nonnegative integer $k$, we write $[k]$ for the set of positive integers at most $k$. Particularly $[0]=\emptyset$.

\paragraph{Histories.} For each time point $r\ge0$, 
 {a sequence of consecutive}  configurations $h_r=(c_0,c_1,\ldots,c_r)$ starting from the initial configuration $c_0$
is called a {\em history} at time $r$. The set of all possible
histories at time $r$ is denoted as~$\mathcal{H}_r$.

\paragraph{Strategies.} {Each player $i\in \Delta$ needs to make a decision 
at every configuration in  histories $h_r=(c_0,\ldots,c_r)$ with $ i\in \Delta(c_r)$.\footnote{Since $\bar G$ is acyclic, $i$ stays in $\bar G$ for a finite period of time, i.e., $i\not\in\Delta(c_s)$ for all $c_s$ with large enough $s$.}} The  {\em strategy}  of player $i\in \Delta$ is a mapping $\sigma_i$ that maps  each history $h_r=(c_0,\ldots,c_r)$ with $i\in\Delta(c_r)$ to an edge $\sigma_i(h_r)\in\bar E(i,c_r)$.
The {\em strategy set} of player $i$ is denoted as ${\Sigma}_{i}$.
A vector 
$\sigma=(\sigma_i)_{i\in\Delta}$ is called a {\em strategy profile} of $\Gamma$.

\paragraph{Latencies.}  Each strategy profile $\sigma$ of $\Gamma$ gives each player $i\in\Delta$ a latency equal to his exiting time, denoted as $t_i(\sigma)$, from the system, i.e., the length of time he stays in $\bar G$. Each player tries to minimize his {latency}, {which is equivalent to minimizing his travelling time from $o$ to~$d$.\footnote{Note that for any {time} $r$ and any player {in $\Delta_r$}, his arrival time {at} $o$ will always be $r$ under any strategy profile $\sigma$.}} 

\paragraph{Game tree.} The game tree of $\Gamma$ is typically an infinite tree with nodes corresponding to histories.  At each game tree node $h_{r}=(c_0,c_1,\ldots,c_{r})$, players in $\Delta(c_r)$ need to make decisions simultaneously, and their action profile leads to a new node $h_{r+1}$, which is a child of $h_{r}$.  
Each subtree of the game tree   rooted at history $h_r$ can be viewed as a separate game,  referred to as a {\em subgame} of $\Gamma$. The restriction of each strategy $\sigma_i$ to a subgame tree is also a strategy of the subgame. Given strategy profile $\sigma$ of $\Gamma$,  the time  when player $i\in \Delta(c_r)$ exits $\bar G$ in the subgame started from $h_r$ is denoted as  $t_i(\sigma|h_r)$.


\begin{definition}[{Subgame Perfect
Equilibrium}] A strategy profile $\sigma=(\sigma_i)_{i\in \Delta}$ is a {\em Subgame Perfect Equilibrium} (SPE) of $\Gamma$ if for any $r\ge0$ and any history $h_r\in\mathcal H_r$,  $t_i(\sigma|h_r)\leq t_i(\sigma'_i, \sigma_{-i}|h_r)$ holds
 for all  {$i\in \Delta(c_r)$} and $\sigma'_i\in \Sigma_i$, where $\sigma_{-i}$ is the partial strategy profile of players other than $i$.
\end{definition}



\subsection{Normal form game setting $\Gamma^n$}

A variant game model  $\Gamma^n$ is more popular in the literature and will also be studied in this paper.  {$\Gamma^n$ serves both as an independent model and
as a technical approach facilitating our analysis of the main model {$\Gamma$}.} 
{The key assumption in $\Gamma^n$ is that all players select an origin-destination path simultaneously at time 0.} They
should follow  the chosen paths and are not allowed to deviate at any vertex.
{All the other settings are the same as in game $\Gamma$.} 

We frequently analyze the {\em interim game} $\Gamma^n(c_r)$ of $\Gamma^n$ for each configuration $c_{r}=(Q^r_e)_{e\in \bar{E}}$ defined as follows.  The player set is $\Delta(c_r)$.  
For each player $i\in\Delta(c_r)$,
suppose $e_i(c_r)$ is the edge {at} 
which $i$ queues, i.e., $i\in Q^r_{e_i(c_r)}$, and $o_i(c_r)$ is the tail of $e_i(c_r)$. {The strategy set of player $i\in\Delta(c_r)$, denoted as ${\mathcal P}_i[o_i(c_r),d]$,  is the set of $o_i(c_r)$-$d$ paths containing $e_i(c_r)$.}

{For any player $i\in \Delta(c_r)$ and any path profile $\bfm p=(P_i)_{i\in \Delta(c_r)}$ with $P_i\in\mathcal P[o_i(c_r),d]$, we use $t_i^d(\bfm p)$ to denote the arrival time of $i$ at destination $d$ under the corresponding routing determined by $\bfm p$.}

\begin{definition}[{Nash  Equilibrium}] A path profile $\bfm \pi$ of $\Delta(c_r)$ is a {\em Nash Equilibrium} (NE) of $\Gamma^n(c_r)$ if no player can gain by uniliteral deviation, i.e., $t_i^d(\bfm \pi)\leq t_i^d(P_i,\bfm \pi_{-i})$ for all $i\in \Delta(c_r)$ and $P_i\in \mathcal P[o_i(c_r),d]$, where $\bfm \pi_{-i}$ is the partial path profile of players  {in $\Delta(c_r)$} other than $i$.\end{definition} 

\subsection{A warmup example}

 Each strategy profile $\sigma$ of $\Gamma$ induces a realized path profile, which is a strategy profile of $\Gamma^n$.  When $\sigma$ is an SPE of $\Gamma$, {Example
\ref{eg:SPEnotNE}} below shows that the induced path {profile}  
is not necessarily an NE of $\Gamma^n$, demonstrating that richer phenomena may be observed in $\Gamma$ than in $\Gamma^n$. {In our illustrations, we only show the
{restrictions} of games $\Gamma$ and $\Gamma^n$ to $G$, which we denote as $\Gamma(G)$ and $\Gamma^n(G)$ respectively. Accordingly, the {{\em travelling cost} a player spends} in $G$ from $o$
to $d$ is his arrival time at $d$ minus that at $o$. }

\begin{example}\label{eg:SPEnotNE}
 The network   $G$ is as depicted in Figure \ref{fig:SPEnotNE}. 
Edge $e_1$ (resp. $e_2$) has a higher priority than  $e_3$ (resp. $e_4$). There are only two players in the game, who enter $G$ via origin~$o$ at the same time $r=1$. Player~$1$ has
a higher original priority than player $2$. In $\Gamma$, player $1$ takes a vicious strategy in the following sense.  He initially chooses edge {$ov$} and then tries to
block player 2 by choosing {$v\rightarrow w_1\rightarrow d$} if  player $2$ used edge  {$ou_1$} and  {$v\rightarrow w_2\rightarrow d$}
otherwise. Player $2$ always follows  {$o\rightarrow u_1\rightarrow w_1\rightarrow d$}. It is easy to check that {these yield a strategy profile that is
an SPE of $\Gamma(G)$ (off-equilibrium behaviors of the two players can be easily defined), {incurring a travelling} cost 3} for player 1 and 4 for player 2. However,  the induced path
profile, {$ovw_1 d$} for player 1 and {$ou_1w_1d$} for player $2$, is not an NE of {$\Gamma^n(G)$. Indeed, $\Gamma^n(G)$ admits in total six
NEs, all bringing} 
the two players the same {travelling} cost of 3.
\end{example}

\begin{figure}[h!]
  \centering
  \includegraphics[width=6cm]{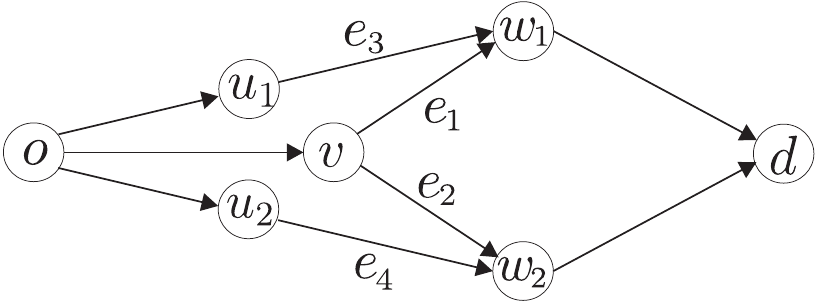}\\
  \caption{{An SPE of $\Gamma$ may not induce an NE of $\Gamma^n$.}}\label{fig:SPEnotNE}
\end{figure}

\section{Existence of SPE}\label{sec:existence}

In this section, we establish the existence of an SPE for game $\Gamma=\Gamma(\bar G)$ by using an algorithmic 
method.  Given any feasible configuration $c_r$, our algorithm (Algorithm \ref{proc:id} below) computes a path profile, which is a special NE, of $\Gamma^n(c_r)$. The special NEs computed for all configurations will assemble an SPE of $\Gamma$ in a way that for each configuration $c_r$ the path profile induced by $c_r$ and the SPE is exactly the NE computed for $\Gamma^n(c_r)$.  

The high-level idea behind our  algorithm is a natural greedy best-response strategy: iteratively computing the unique best path {based} on all paths that have been obtained. Similar approaches with noticeable differences have been adopted to deal with various routing games, e.g., 
\cite{hm11,sst16,whk14}. The special difficulties in our setting are imposed by 
subgames that start   from any feasible  configurations (which are essentially   multi-origin problems) and priorities that are placed on edges rather than on players. 

For  any $S\subseteq \Delta(c_r)$, we also use
$\bfm p=(P_i)_{i\in S}$ with $P_i\in\mathcal P[o_i(c_r),d]$
{to denote} a partial path profile, corresponding to a (partial) routing. Given any vertex $v\in \bar V$, edge $uv\in\bar E$, and  player $i\in S$, 

\begin{mylabel5}
\item[-]  {$t^{v}_i(\bfm p)$ (resp. $t^{uv}_i(\bfm p)$) denotes the time when $i$ reaches  $v$ (resp. enters  $uv$)  under   $\bfm p$;}
\item[-]  {$t^{v}_i(\bfm p):=\infty$ if $v\not\in P_i$, and $t^{uv}_i(\bfm p):=\infty$ if $uv\notin P_i$.}
\end{mylabel5}
{If $S'\subseteq S$, we often write $(P_i)_{i\in S'}$ as $\bfm p_{S'}$.}


 Given a feasible configuration $c_r$,  
our algorithm goes roughly as follows.  Initially, let player subset $D$ of $\Delta(c_r)$ and partial routing $\bfm\pi_D$ of players in $D$ be empty. 
Then recursively, assuming players in $D$   go along  their {\em selected} paths specified in $\bfm\pi_D$, 
we {enlarge $D$ with  a new player $i \in \Delta(c_r)\sm D$ and an associated path $\Pi_{i}\in\mathcal P[o_i(c_r),d]$ in the following way. For each $j \in \Delta(c_r)\sm D$, define $\tau_{j}^v=\min\{t^v_j(\bfm\pi_D,R_j)\,|\,R_j\in\mathcal P[o_j(c_r),d]\}$ as the {``ideal latency''} of player $j$ at {vertex} $v$ w.r.t. $\bfm\pi_D$.}  {Let $i\in \Delta(c_r)\sm D$ and  $\Pi_{i}\in\mathcal P[o_i(c_r),d]$ be such that (i) $t^d_i(\bfm\pi_D,\Pi_i)$ equals the smallest  {``ideal latency''} at $d$ among  all players in $\Delta(c_r)\sm D$, i.e., $t^d_i(\bfm\pi_D,\Pi_i)=\min_{j\in \Delta(c_r)\setminus D}\tau_{j}^v$, (ii) if there are more than one of such {candidates $(i,\Pi_i)$}, then {the choice of $(i,\Pi_i)$ is made such that} $\Pi_i$ has the highest priority w.r.t. $\prec_d$, (iii) if still more than one {candidates} satisfy (ii), in which case {the paths involved} must share the same ending edge and suppose $u$ is the tail vertex of this edge,  then $t^u_i(\bfm\pi_D,\Pi_i[o_i(c_r),u])$ is the smallest. Repeat the above process until $o_i(c_r)$ is reached, in which case all the remaining players must all {queue at}
$e_i(c_r)$ and $i$ is the head of this queue.} The process is repeated while $D$ and $\bfm\pi_D$ {become} larger and larger. A formal description of the process is  {presented in Algorithm \ref{proc:id}}. 


 \begin{algorithm} \SetAlgoNoLine \KwIn{a feasible configuration $c_r$ at time $r$.}
\KwOut{the iterative dominating path profile (routing) $\bfm\pi=(\Pi_i)_{i\in\Delta(c_r)}$ for $\Delta(c_r)$ along with the corresponding player indices 1,2,\ldots.}
\begin{mylabel5}
\item[\,\,\,1.] 
Initiate $D\leftarrow\emptyset$,
          $\bfm\pi_{[0]}\leftarrow\emptyset$, $i\leftarrow0$.
\item[\,\,\,2.]\vspace{-1mm} {$i\leftarrow i+1$ (NB: Start to search for a new dominator and his dominating path).}
\item[\,\,\,3.]\vspace{-1mm} 
For each player $j\in \Delta(c_r)\sm D$ and vertex $v\in\bar G$,
\begin{mylabel4}
\item[-] let $\tau_{j}^v=\min\{t^v_j(\bfm\pi_{[i-1]},R_j)\,|\,R_j\in\mathcal P[o_j(c_r),d]\}$
          be the earliest time for $j$ to reach vertex $v$ from his {\em current location in $c_r$}, assuming
          that \emph{all  other} players in {$\bar G$} are those in $D$ and they    go along their paths specified in $\bfm\pi_{[i-1]}$;
          \item[-] let $\mathcal P_{j}^v$ denote the set of all the \emph{corresponding} $o_j(c_r)$-$v$ paths for $j$ to reach $v$ at time          $\tau_{i}^v$. 
          \item[-] if there is no 
              {such path,} then set $\tau_j^v\leftarrow\infty$ and $\mathcal P_j^v\leftarrow\emptyset$.
          \end{mylabel4}

\item[\,\,\,4.]\vspace{-1mm} $w\leftarrow d$, $D'\leftarrow \Delta(c_r)\sm D$, $\Pi\leftarrow\emptyset$. (NB: Steps~3--10 are to help
          identify player {$i$ and path $\Pi_i$} in Steps~11 and 12; $D'$ holds the candidates for $i$; $\Pi$ is a subpath of $\Pi_i$ that will grow edge by edge starting from $d$.)
\item[\,\,\,5.]\vspace{-1mm} \textbf{While}  $\tau:=\min_{j\in D'}\tau_j^w \ge r+1$  \textbf{do}
\item[\,\,\,6.]\vspace{-1mm} \hspace{2mm} $D'\leftarrow\{j\in D'\,|\,\tau_{j}^w=\tau\}$;
\item[\,\,\,7.]\vspace{-1mm} \hspace{2mm} let $uw$ be the edge of highest priority  among all last edges of paths in $\cup_{j\in D'}\mathcal P_j^w$;
\item[\,\,\,8.]\vspace{-1mm} \hspace{2mm} $\Pi\leftarrow \Pi\cup\{uw\}$;
\item[\,\,\,9.]\vspace{-1mm} \hspace{2mm} $w\leftarrow u$;
\item[10.]\vspace{-1mm} \textbf{End-While} (NB: at the end of the while-loop all players in $D'$ are queuing on the starting edge of $\Pi$ in configuration $c_r$)
\item[11.]\vspace{-1mm} Let $i\in D'$ be the player who stands  first (among all players in $D'$) in line on the starting edge of $\Pi$ in configuration $c_r$. 
\item[12.]\vspace{-1mm} Let $i$ {\em select} $\Pi_i\leftarrow\Pi$.
\item[13.]\vspace{-1mm}   $D\leftarrow D\cup \{i \}$, $\bfm\pi_{[i]}\leftarrow(\bfm\pi_{[i-1]},\Pi_i)$. (NB: the algorithm outputs player $i$ and his path $\Pi_i$.)
\item[14.]\vspace{-1mm} Go to Step 2.
\end{mylabel5}
\caption{\sc(Iterative Dominating Path Profile)}\label{proc:id}
\end{algorithm}

{It is worth noting that, for each $i\ge1$, Algorithm \ref{proc:id} can identify player $i$ and path $\Pi_i$ in finite time via ignoring players and vertices that are sufficiently far from 
{the origin $o$} in configuration $c_r$.}



Suppose Algorithm \ref{proc:id} indexes  the players of  $\Delta(c_r)$  as $1,2, \ldots$ with the associated path profile $\bfm\pi=(\Pi_1,\Pi_2,\ldots)$.    As {can be seen from Lemma \ref{lem:ineq} below, each player $i\in\Delta(c_r)$ is a {\em dominator} in $\Delta(c_r)\sm[i-1]$ and $\Pi_i$ is a {\em dominating path} w.r.t. $\bfm\pi_{[i-1]}$ in the following sense: under the assumption that players in $[i-1]$ all follow $\bfm\pi_{[i-1]}$, as long as $i$ takes $\Pi_i$, he will be among
the  first {(within $\Delta(c_r)\sm [i-1]$)} to reach the destination {$d$} and no player in $\Delta(c_r)\sm [i]$ will be able to reach any of {$\Pi_i$'s} intermediary vertices earlier
than {$i$} does, regardless of the choices of players in $\Delta(c_r)\sm [i]$.
We may also call the above $\bfm\pi=(\Pi_1,\Pi_2,\ldots)$ an {\em iterative dominating path profile}.}

\begin{lemma}\label{lem:ineq}
{Given any feasible configuration $c_r$ at time $r$, let   players $1,2,\ldots$ of $\Delta(c_r)$ be as  indexed and path profile $\bfm\pi=(\Pi_i)_{i\in\Delta(c_r)}$ be as computed in Algorithm \ref{proc:id}. If players  $i,j\in \Delta(c_r)$ and player subset $S$ satisfy $j\in S\subseteq\Delta(c_r)\sm[i-1]$, then for any vertex $v\in\Pi_i$ and any path profile $\bfm p$ of $\Gamma^n(c_r)$, it holds that
\[
t_i^v(\bfm\pi_{[i]}, \bfm p_{S\,\sm\,\{i\}})=\min\{t_i^v(\bfm\pi_{[i-1]},R_i)\,|\,R_i\in\mathcal P[o_i(c_r),d]\}\le t_j^v(\bfm\pi_{[i-1]},
\bfm p_S).
\]}
\end{lemma}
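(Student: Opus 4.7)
The lemma formalizes the iterative dominating property of Algorithm~1: on $\Pi_i$, player $i$ attains his ideal arrival time $\tau_i^v := \min\{t_i^v(\bfm\pi_{[i-1]}, R_i): R_i \in \mathcal{P}[o_i(c_r), d]\}$ at every $v$ regardless of the paths chosen by later-indexed players, and no such later-indexed player can beat $i$ to $v$. My plan is to prove the equality and the inequality jointly by induction along $\Pi_i$, moving from $v_0 := o_i(c_r)$ to $v_m := d$.

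The base case $v = v_0$ is immediate since $\tau_i^{v_0} = r$ and no player reaches any vertex before time $r$. For the inductive step $v_\ell \to v_{\ell+1}$, I would combine two ingredients. The \emph{algorithmic} ingredient comes from unpacking Steps~5--10 of Algorithm~1 to establish (i) $\tau_i^{v_{\ell+1}} \le \tau_k^{v_{\ell+1}}$ for every $k \in \Delta(c_r)\sm[i-1]$ (directly from Step~6's restriction of $D'$, supplemented by a splicing argument to handle players removed at earlier backward stages), and (ii) the edge $v_\ell v_{\ell+1}$ is the highest-priority last edge selected in Step~7 when $w = v_{\ell+1}$. The \emph{monotonicity} ingredient is a separate auxiliary lemma asserting that, in any routing from a feasible configuration, inserting additional players or redirecting existing ones weakly increases every other player's arrival time at every vertex. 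Combining these: the inductive equality at $v_\ell$ places $i$ in the queue on $v_\ell v_{\ell+1}$ at time $\tau_i^{v_\ell}$; monotonicity paired with the inductive inequality at $v_\ell$ bars any $k \in S\sm\{i\}$ from entering $v_\ell v_{\ell+1}$ before $\tau_i^{v_\ell}$, and~(ii) bars $k$ from overtaking $i$ at simultaneous entry, which yields the equality at $v_{\ell+1}$. The inequality at $v_{\ell+1}$ follows from the chain $\tau_i^{v_{\ell+1}} \le \tau_j^{v_{\ell+1}} \le t_j^{v_{\ell+1}}(\bfm\pi_{[i-1]}, \bfm p_S)$, where the first inequality is~(i) and the second is monotonicity.

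I would prove the monotonicity lemma separately by a time-step induction, exploiting that under the FIFO-plus-edge-priority rule additional players can only slot into later queue positions and never bump existing ones forward. The chief obstacle is ingredient~(ii): converting Algorithm~1's Step~7 tie-breaking, which orders incoming edges at the \emph{target} $v_{\ell+1}$, into a queue-priority conclusion on $v_\ell v_{\ell+1}$ whose ties are broken by incoming priorities at $v_\ell$. Resolving this requires showing that a would-be overtaker $k$ entering $v_\ell v_{\ell+1}$ simultaneously with $i$ must, by~(i) and monotonicity, possess an optimal $o_k$-$d$ path whose last edge is $v_\ell v_{\ell+1}$; and then that the inductive tie-breaking at prior iterations processing $w = v_\ell, v_{\ell-1}, \ldots$ has already enforced $i$'s higher priority on the edge into $v_\ell$, so no such $k$ can outrank $i$ in the queue on $v_\ell v_{\ell+1}$.
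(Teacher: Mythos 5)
There is a genuine gap: the auxiliary monotonicity lemma on which your whole induction rests --- ``inserting additional players or redirecting existing ones weakly increases every other player's arrival time at every vertex'' --- is false in this model, and the paper's own Example~\ref{addaplayer} (Figure~\ref{better}) is a direct counterexample. There, removing player $k$ \emph{delays} player $j$ (from $r+5$ to $r+6$); equivalently, inserting $k$ strictly speeds $j$ up. The mechanism is exactly the cascade your proposed time-step induction ignores: an added player can legitimately slot \emph{ahead} of an existing player $i$ on some edge (by arriving earlier or entering from a higher-priority edge), the resulting delay to $i$ postpones $i$'s arrival at a downstream merge, and the vacated slot lets a third player $j$ through earlier. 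Both places where you invoke monotonicity --- barring $k\in S\sm\{i\}$ from entering $v_\ell v_{\ell+1}$ before $\tau_i^{v_\ell}$, and the step $\tau_j^{v_{\ell+1}}\le t_j^{v_{\ell+1}}(\bfm\pi_{[i-1]},\bfm p_S)$ --- therefore have no support as stated.

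What is actually true, and what the paper uses instead, is a \emph{restricted} monotonicity obtained from an outer induction on the player index $i$ that your proposal omits: assuming the lemma for indices $1,\ldots,i-1$, the players of $[i-1]$ following $\bfm\pi_{[i-1]}$ are completely unaffected by everybody else, occupy invariant queue positions at all times, and always queue ahead of the remaining players. Relative to this frozen background, extra players from $S$ can only add delay on top of the invariant delay caused by $[i-1]$; the cascade of Example~\ref{addaplayer} cannot occur because nobody can perturb $[i-1]$. This yields $t_i^v(\bfm\pi_{[i]},\bfm p_{S\sm\{i\}})\ge\tau_i^v$ and $\tau_j^v\le t_j^v(\bfm\pi_{[i-1]},\bfm p_S)$. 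The remaining direction $t_i^v(\bfm\pi_{[i]},\bfm p_{S\sm\{i\}})\le\tau_i^v$ is then proved by taking the first vertex of $\Pi_i$ at which it fails and the first edge $xy$ of $\Pi_i$ on which some $k\in S\sm[i]$ enters earlier than $i$, and splicing $X_k:=P_k[o_k(c_r),x]\cup\Pi_i[x,d]$ to contradict the algorithm's choice of $(i,\Pi_i)$ --- close in spirit to your ingredients (i) and (ii), but only sound once the invariant-influence hypothesis replaces general monotonicity. To repair your proof you would need to add this outer induction on $i$ and weaken your auxiliary lemma to hold only relative to the fixed routing of $[i-1]$.
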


{The iterative dominance of $\bfm\pi$  particularly implies that given the selected paths in $\bfm\pi_{[i-1]}$ of players in $[i-1]$, player  $i$ has no incentive to deviate from $\Pi_i$ for each $i$, saying that $\bfm\pi$ is an NE of $\Gamma^n(c_r)$. Furthermore,  a special SPE of $\Gamma$ can be constructed, building on the iterative dominating path profiles for all feasible configurations.}



\begin{theorem}\label{thm:SPE}
Game {$\Gamma$} admits an SPE.
\end{theorem}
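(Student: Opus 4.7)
The plan is to construct an SPE $\sigma$ by letting every player follow, at every history, the path prescribed by Algorithm~\ref{proc:id}. Precisely, at each history $h_r=(c_0,\ldots,c_r)$ I apply Algorithm~\ref{proc:id} to the feasible configuration $c_r$ to obtain the iteratively dominating path profile $\bfm\pi(c_r)=(\Pi_j(c_r))_{j\in\Delta(c_r)}$, and for each player $j\in\Delta(c_r)$ at a decision point I set $\sigma_j(h_r)$ to be the first edge on $\Pi_j(c_r)$ outgoing from $j$'s current vertex (non-head players have a forced singleton action). Since Algorithm~\ref{proc:id} operates on every feasible configuration, $\sigma$ is well-defined both on and off the equilibrium path.

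Before checking the SPE inequality, I would establish a consistency claim: when every player adheres to $\sigma$ from history $h_r$ onward, the play realizes precisely the path profile $\bfm\pi(c_r)$, so $t_j(\sigma|h_r)=t_j^d(\bfm\pi(c_r))$ for every $j\in\Delta(c_r)$. The proof is by induction on $s\ge r$: advancing one time step under $\sigma$ merely decreases every ``ideal latency'' $\tau_j^v$ used inside Algorithm~\ref{proc:id} uniformly by $1$, so its greedy ordering and all tie-breaking choices on $c_{s+1}$ reproduce, for each remaining player, the continuation of his path from $\bfm\pi(c_s)$.

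For the SPE condition, fix any history $h_r$, any player $i\in\Delta(c_r)$ whose Algorithm~\ref{proc:id} index at $c_r$ is $k$, and any deviation $\sigma'_i\in\Sigma_i$. Let $\bfm p^*=(P_j^*)_{j\in\Delta(c_r)}$ denote the path profile realized under $(\sigma'_i,\sigma_{-i})$ from $h_r$; the target inequality is $t_k^d(\bfm\pi(c_r))=t_i(\sigma|h_r)\le t_i(\sigma'_i,\sigma_{-i}|h_r)=t_k^d(\bfm p^*)$. First, I would show that $\bfm p^*_{[k-1]}=\bfm\pi_{[k-1]}$: inductively for $j=1,\ldots,k-1$, Lemma~\ref{lem:ineq} with $i$ replaced by $j$ and $S=\Delta(c_r)\sm[j-1]$ implies that no player indexed $\ge j$---including the deviator $i$---can reach any vertex of $\Pi_j$ before $j$ does, whatever they do; hence at each subsequent (possibly off-equilibrium) configuration Algorithm~\ref{proc:id} still identifies the same $j$-th dominator with the continuation of $\Pi_j$, and $\sigma_j$ keeps $j$ on $\Pi_j$. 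Second, granted $\bfm p^*_{[k-1]}=\bfm\pi_{[k-1]}$, Lemma~\ref{lem:ineq} applied once more with $i=k$, $v=d$, $S=\Delta(c_r)\sm[k-1]$, and $\bfm p=\bfm p^*$ gives $t_k^d(\bfm p^*)=t_k^d(\bfm\pi_{[k-1]},\bfm p^*_S)\ge\min_{R_k}t_k^d(\bfm\pi_{[k-1]},R_k)=t_k^d(\bfm\pi(c_r))$, which is exactly what is wanted.

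The main obstacle I anticipate is the first step of the SPE verification---establishing that the iterative dominance produced by Algorithm~\ref{proc:id} at $c_r$ survives across the off-equilibrium configurations generated by $i$'s deviation, so that $\sigma_{-i}$ consistently reselects the same prefix $1,\ldots,k-1$ of dominators with the same paths. This demands careful bookkeeping of how action sets, queue priorities, and the algorithm's tie-breaking criteria evolve under a deviation; the decisive ingredient is that Lemma~\ref{lem:ineq} holds uniformly over \emph{every} opponent profile $\bfm p$, which precludes any scenario in which a later-indexed player might displace a prefix dominator at some intermediate configuration.
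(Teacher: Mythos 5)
Your proposal is correct and follows essentially the same route as the paper's own proof: construct the strategy profile by running Algorithm~\ref{proc:id} at every configuration, verify that the algorithm's outputs are consistent across consecutive configurations (so the realized paths are the dominating paths), argue that a unilateral deviation by the player indexed $k$ leaves the prefix $\bfm\pi_{[k-1]}$ of dominating paths intact, and then invoke Lemma~\ref{lem:ineq} to bound the deviator's arrival time at $d$ from below by $\min_{R_k}t^d_k(\bfm\pi_{[k-1]},R_k)=t^d_k(\bfm\pi)$. The paper states the consistency and prefix-preservation claims as consequences "deduced from Lemma~\ref{lem:ineq} and Algorithm~\ref{proc:id}" at the same level of detail as your sketch, so there is no substantive difference.
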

\begin{proof}Given any history $h_r=(c_0,\ldots,c_r)\in\mathcal H_r$ for any time point $r\ge 0$,  suppose the players in $\Delta(c_r)$ are named as $1_r,2_r,\ldots$ such that player $i_r$ is the $i$th player {added to $D$} in Step~13 of Algorithm~\ref{proc:id} (with input $c_r$). For each player $i_r\in\Delta(c_r)$, let {$\Pi^{c_r}_{i_r}$ denote the path selected by $i_r$ in Algorithm~\ref{proc:id}; recall that under $c_r$ player $i$ {queues at} 
the first edge   of $\Pi^{c_r}_{i_r}$.}
A configuration in $\mathcal C_{r+1}$ will {result from} $c_r$ according to action profile {$\bfm a^{c_r}$} defined as {follows:
\[\text{The action of }i=\left\{\begin{array}{ll}\text{the first edge of }\Pi^{c_r}_{i_r},&\text{if $i$ queues after someone else;}\\
\emptyset,&\text{if $i$ queues first on the last edge of }\Pi^{c_r}_{i_r};\\
\text{the second edge of }\Pi^{c_r}_{i_r},&\text{otherwise}.\end{array}\right.\]
The set {$\cup_{r\ge0}\cup_{c_r\in\mathcal C_r}\bfm a^{c_r}$}} of action profiles  defines a strategy profile $\sigma^*=(\sigma^*_i)_{i\in\Delta}$ of {$\Gamma$}. We will prove that {$\sigma^*$} is {an} SPE of $\Gamma$.

Let $(c_r,c_{r+1},c_{r+2},\ldots)$ be the list of configurations and {$(P^*_i)_{i\in\Delta(c_r)}$} be the path profile induced by {$h_r$} and $\sigma^*$. It can be deduced from Lemma \ref{lem:ineq} and {Algorithm \ref{proc:id}} that
\begin{mylabel}
\item[$\bullet$] For any $s\ge r+1$, player sequence $1_s,2_s,\ldots$ is a subsequence of $1_{s-1},2_{s-1},\ldots$ such that $\Delta(c_{s-1})\sm\Delta(c_{s})$ consists of the first $|\Delta(c_{s-1})\sm\Delta(c_{s})|$ players of $1_{s-1},2_{s-1},\ldots$
\item[$\bullet$] For any $s\ge r+1$ and $i\in\Delta(c_s)$, {$\Pi^{c_s}_{i}\subseteq\Pi^{c_{s-1}}_i$.}
\end{mylabel}
Therefore, the path {$P^*_i$} formed by the actions of each player $i\in\Delta(c_r)$ is exactly {$\Pi^{c_r}_i$}. According to Lemma \ref{lem:ineq}, we have, for any {$i\geq 1$,
\begin{center}
$\displaystyle t_{i_r}(\sigma^*|h_r)=\min\{t^d_{i_r}(\Pi^{c_r}_{1_r},\ldots,\Pi^{c_r}_{(i-1)_{r}}, P_{i_r})\,|\,P_{i_r}\in\mathcal P[o_{i_r}{(c_r)},d]\}$ {for each $i\ge1$.}
\end{center}
 Moreover,  for any $j\ge 1$ and any strategy profile $\sigma'$ of $\Gamma$ with $\sigma'_{i_r}=\sigma^*_{i_r}$} for all $i\in[j]$, the path profile $(P'_i)_{i\in\Delta(c_r)}$ induced by {$h_r$} and $\sigma'$ satisfies {$P'_{i_r}=\Pi^{c_r}_{i_r}$} for all $i\in[j]$.

Now given any $k\ge 1$ and any $\sigma_{k_r}'\in\Sigma_{k_r}$, we consider strategy profile  {$\sigma'=(\sigma_{k_r}',\sigma^*_{-k_r})$} and the path profile  {$\bfm p'=(P'_{i})_{i\in \Delta(c_r)}$} induced by {$h_r$} and $\sigma'$. We have {$P'_{i_r}=\Pi^{c_r}_{i_r}$} for all $i\in[k-1]$, and
\[t_{k_r}(\sigma'_{k_r},\sigma^*_{-k_r}|h_r)=t_{k_r}^d(\bfm p')=t_{k_r}^d(\Pi^{c_r}_{1_r},\ldots,\Pi^{c_r}_{(k-1)_r},\bfm p'_{\Delta(c_r)\,\sm\,\{1_r,\ldots,(k-1)_r\}}).\]
It follows from Lemma \ref{lem:ineq} {that
\begin{center}$t_{k_r}(\sigma'_{k_r},\sigma^*_{-k_r}|h_r)\ge \min\{t^d_{k_r}(\Pi_{1_r},\ldots,\Pi_{(k-1)_{r}}, P_{k_r})\,|\,P_{k_r}\in\mathcal P[o_{k_r}(c_r),d]\}=t_{k_r}(\sigma^*|h_r)$.\end{center}
 The} arbitrary choices of $k$ and $\sigma'_{k_r}$ imply that {$\sigma^*$} is an SPE of $\Gamma$.
\end{proof}

We would like to remark that 
placing  priorities on edges is crucial to 
the existence of an SPE. As the following example shows, {the guarantee of SPE existence would be impossible if priorities were placed on players.} 

\begin{example}Consider a subgame starting with a configuration illustrated in Figure~\ref{noSPE}. There are {9 players in total} represented by small rectangles on the {edges}, with players $i,j,k$ being our focus (the remaining 
{6 players} do not need to make substantial decisions). 
The global priorities placed on the players are such that $i$ ranks higher than $j$ and $j$ higher than $k$. However, $k$ reaches $v_1$  {(or $v_2$)} one time unit earlier than $i$ (if they choose to pass the same vertex) and adds $1$ to the waiting time of $i$. Therefore, $i,j,k$ have a Rock-Paper-Scissors like relationship. It can be checked that this subgame does not have any NE. Since each of the three players $i,j,k$ only needs to make one decision, it follows that an SPE does not exist in this subgame.  
\end{example}

\begin{figure}[h!]
  \centering
  \includegraphics[width=12cm]{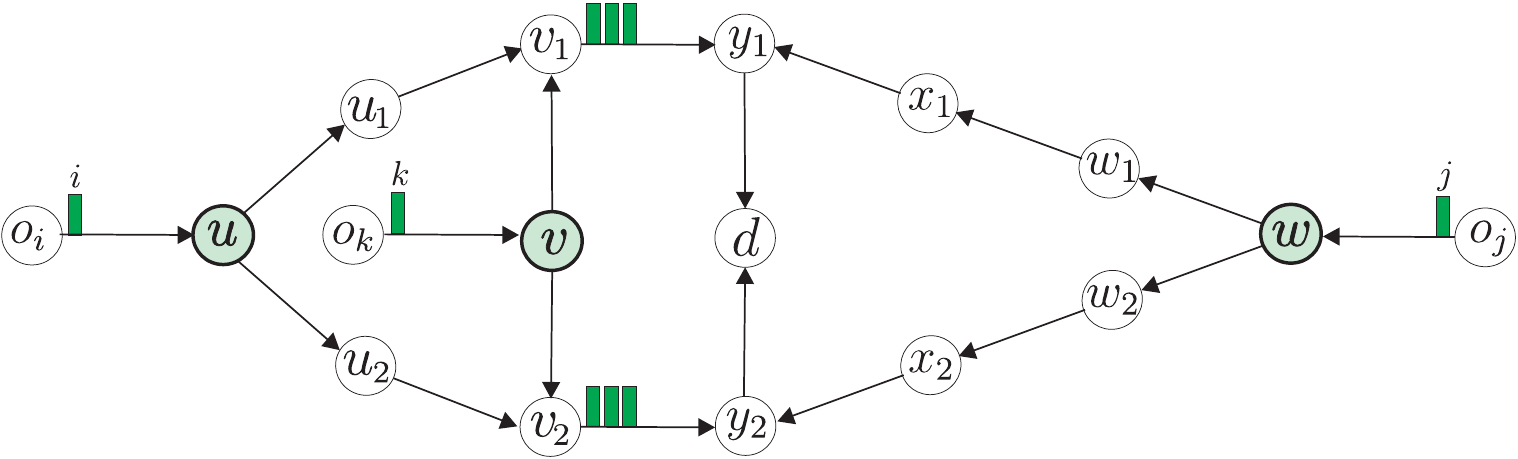}\\
  \caption{SPE 
  {might} not exist if priorities 
  {were} placed on players. In the subgame, player $i$ ranks higher than player $j$, and $j$ ranks higher than player $k$.}\label{noSPE}
\end{figure}

In contrast, the subgame in  Figure \ref{noSPE} will have an NE if priorities are placed on edges as in our model. Suppose, e.g.,  $v_1y_1\prec_{y_1}x_1y_1$ and $x_2y_2\prec_{y_2}v_2y_2$, then it can be seen that the following implies an NE:  $k$ chooses path $o_kvv_1y_1d$, $j$ chooses path $o_jww_2x_2y_2d$,  and $i$ makes an arbitrary choice. If in addition $y_2d\prec_d y_1d$, then \[(v_2y_2d,v_1y_1d,v_2y_2d,v_1y_1d,v_2y_2d,v_1y_1d,o_jw w_2 x_2y_2d,o_kvv_1y_1d,o_iuu_2v_2y_2d)\] is the iteratively dominating path profile for the configuration.

\section{Equilibria of {interim games}} 
\label{sec:NE-vs-SPE}

Being  natural and reasonable, models that are quite similar to $\Gamma^n$ have been the focus in the {studies of}
dynamic traffic games \cite{hm09,sst16,whk14}. While only special {NEs} were
studied in {the literature}, we {establish some general properties of all NEs of} $\Gamma^n$.
Using these properties, we prove that the NE outcomes of game $\Gamma^n$  form a proper subset of the SPE outcomes of game $\Gamma$.
Therefore, $\Gamma^n$ can serve not only as an independent model but also
as a technical approach that facilitates our analysis of the main game model $\Gamma$.

Given any configuration $c_r$ at time $r$, when studying {some} special
SPE of game $\Gamma$ in Section~\ref{sec:existence}, we obtain an iterative dominating path profile{,} which is a
{\em special} NE of the interim game $\Gamma^n(c_r)$, where players 
in $\Delta(c_r)$ are {\em completely  ordered}
such that 
{the ones} with {smaller indices}
have advantages over those with {larger indices}.
For studying  {\em every} NE of $\Gamma^n$, it turns out that batching players according to their arrival times at the destination {$d$}
is useful.
{For any} NE $\bfm\pi$  of $\Gamma^n(c_r)$, {let} 
$\tau(\bfm\pi,1)<\tau(\bfm\pi,2)<\tau(\bfm\pi,3)<\cdots$ be the arrival times of all players in $\Delta(c_r)$ at $d$ under $\bfm\pi$. For each {integer} $k\ge1$, let
 \[\Delta(\bfm\pi,k):=\{i\in{\Delta(c_r)}\,|\,t^d_i(\bfm\pi)=\tau(\bfm\pi,k)\}\] be the set of players in $\Gamma^n(c_r)$ who {reach} $d$ at {the $k$th earliest time}, i.e., $\tau(\bfm\pi,k)$ under $\bfm\pi$; {we often refer to $\Delta(\bfm\pi,k)$ as the {\em $k$th batch}. We use}
\[\Delta(\bfm\pi,{[k]}):=\cup_{j\in[k]}\Delta(\bfm\pi,j)\] {to denote} {the set of players} reaching $d$ no later than $\tau(\bfm\pi,k)$, {i.e., those in the first $k$ batches}. {For notational convenience, we set $\Delta(\bfm\pi,[0]):=\emptyset$ {to be the $0$th batch, and let} $\Delta(\bfm\pi,[\infty]):=\Delta(c_r)$ {denote the disjoint union of all batches}.}





\subsection{A motivating example}
In studying NE of $\Gamma^n$, we need to frequently analyze what happens if one player unilaterally {deviates} by choosing a different path. 
This is a quite tricky issue {in general. Players} may affect {one another}
in quite unexpected ways. As the following 
 {extreme} example shows, adding a player to the system may
weakly improve the performances of other players {(nobody is worse off and at least one is better off)}, or equivalently, removing a player may weakly harm the performances of other players.

\begin{example}\label{addaplayer}
The configuration $c_r$ at time $r$ is  {as depicted} in  Figure \ref{better}. There are 10 players (shown as small rectangles on edges) in total, with $i,j,k$ being our focus (note that Figure \ref{better} {shows} only a part of the whole network). 
Edge $e_1$ has a higher priority than  $e_2$. Let  player $k$ {choose the top} 
path $u_1u_2u_3u_4u_5d$, player $i$ {choose} the middle path  $u_1u_2v_3v_4d$, and player $j$ {follow the bottom} 
path $v_1v_2v_3v_4v_5d$. It can be checked that all the three players reach $d$ at time $r+5$ (and the three chosen paths  along with the trivial paths of the {other 7} players constitute an NE for $\Gamma^n(c_r)$).  Suppose now player $k$ is removed from the system, and $i$ and $j$ keep their chosen paths. This removing  makes $i$ arrive at vertices {$u_2$, $v_3$ and $v_4$} one unit of time earlier. Since $i$ {has} to pay one  extra {unit of} waiting cost on edge $v_4d$, {he} reaches $d$ still at time $r+5$. However, the earlier arrival of $i$ at $v_3$ delays $j$, making {him} reach $d$ at time $r+6$. {Note that both path profiles are NEs.}
\end{example}
\begin{figure}[h!]
  \centering
  \includegraphics[width=8cm]{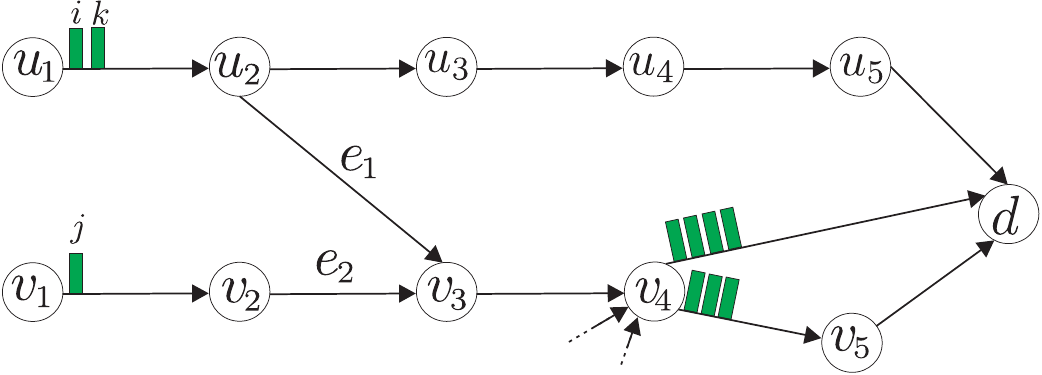}\\
  \caption{{Removing a player (weakly) harms the performances of other players}.}\label{better}
\end{figure}

\subsection{A critical lemma}

Throughout this subsection, we {fix} a strategy profile $\bfm\alpha=(A_i)_{i\in\Delta(c_r)}$ of $\Gamma^n(c_r)$
and {consider} a fixed player $\zeta\in\Delta(c_r)$ {and the scenario} where only player $\zeta$ changes his strategy. Let {$\bfm\alpha_{-\zeta}=(A_i)_{i\in\Delta(c_r)\setminus\{\zeta\}}$} denote the partial strategy profile of players in $\Delta(c_r)$ other than $\zeta$.
For each vertex $v$, define
$$
\tau^v:=\min_{A'_{\zeta}\in \mathcal P[o_\zeta(c_r),d]}\{t_{\zeta}^v(A'_{\zeta},\bfm{\alpha}_{-\zeta})\}
$$
as the earliest time at which player $\zeta$ can {reach} vertex $v$ by unilaterally changing his {path} 
(if there is no path from $o_\zeta(c_r)$ to $v$, then {set} $\tau^v:=+\infty$). {Analogously, for each player} {$j\in\Delta(c_r)\sm\{\zeta\}$} and vertex $v\in A_j$, define
$$\tau_{j}^v:=\min_{A'_{\zeta}\in \mathcal P[o_\zeta(c_r),d]}\{t_{j}^v(A'_{\zeta},\bfm{\alpha}_{-\zeta})\}$$ {as the earliest} time at which player $j$ can {reach} vertex $v$ when  player $\zeta$ unilaterally {changes his path. We emphasize that $j$ keeps his path at $A_j$ in the definition of $\tau_{j}^v$.}

In the following, for any non-degenerated path $P$ in $\bar G$ and vertex $v\in P$, we use $e_v(P)$ to {denote} the
edge with head $v$ in $P$.

\begin{definition}
For every player  {$j\in\Delta(c_r)\sm\{\zeta\}$} and vertex $v\in A_j$, we say player $\zeta$ {\em dominates} player $j$ at vertex $v$ under $\bfm\alpha$ {if} 
\begin{mylabel}
\item  $\tau^v<\tau_j^v$; or
\item $\tau^v=\tau_j^v$,  and   {there exists an} $o_\zeta(c_r)$-$v$-$d$ path $A^*_\zeta$ such that $t_{\zeta}^v(A^*_{\zeta},\bfm{\alpha}_{-\zeta})=\tau^v$ and $e_v(A^*_\zeta)\preceq_v e_v(A_j)$.
\end{mylabel}
\end{definition}

In words, player $\zeta$ dominates $j$ at $v$ if and only if either the earliest possible time that $\zeta$ reaches $v$ is earlier than the earliest possible time that $j$ reaches $v$, or the two earliest arrival times are equal and $\zeta$ has a corresponding path that allows {him} to enter $v$ from an edge with a higher or equal priority over the edge taken by $j$,  assuming that only $\zeta$ is allowed to deviate his path from the given path profile $\bfm\alpha$.

\begin{lemma}\label{dominating}
For any player  {$j\in\Delta(c_r)\sm\{\zeta\}$} and vertex $v\in A_j$, if there exists a path $A'_\zeta\in \mathcal P[o_\zeta(c_r),d]$ such that $t_j^v(A'_\zeta,\bfm{\alpha}_{-\zeta})\neq t_j^v(\bfm{\alpha})$,
i.e., player $\zeta$ can influence player $j$'s arrival time {at} $v$ by unilaterally changing his {path,} 
{then $\zeta$ dominates $j$ at $v$}.
\end{lemma}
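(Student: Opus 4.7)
The plan is to proceed by induction on the position of $v$ along the path $A_j$ starting from $o_j(c_r)$. The base case $v = o_j(c_r)$ is vacuous, since by the model convention $t_j^{o_j(c_r)}(\bfm p) = r$ for every path profile $\bfm p$, so the hypothesis that $\zeta$ changes $t_j^v$ is never satisfied. For the inductive step, let $u$ be the predecessor of $v$ on $A_j$, so that $e_v(A_j) = uv$, and split on whether $\zeta$'s deviation also changes $j$'s arrival time at $u$.

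In the case $t_j^u(A'_\zeta, \bfm\alpha_{-\zeta}) \ne t_j^u(\bfm\alpha)$, the induction hypothesis yields that $\zeta$ dominates $j$ at $u$. Taking a witness path of $\zeta$ realizing $\tau^u$ and extending it through $uv$ to $d$, a queue-dynamics analysis on the edge $uv$---using FIFO, unit capacity, and the edge-priority tie-breaking rule---should propagate strict domination $\tau^u < \tau_j^u$ to strict domination $\tau^v < \tau_j^v$, since $\zeta$ can then be scheduled onto $uv$ strictly before $j$. In the tied case $\tau^u = \tau_j^u$ with $e_u(A^*_\zeta) \preceq_u e_u(A_j)$, the propagated conclusion is $\tau^v = \tau_j^v$ with $e_v(A^*_\zeta) \preceq_v e_v(A_j)$; indeed, once $\zeta$'s witness path and $A_j$ meet at $u$ with the prescribed priority, both traverse the shared edge $uv$ in the queue order fixed at $u$, and the edge-priority comparison at $v$ reduces to the trivial $uv \preceq_v uv$.

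In the other case, $t_j^u$ is unchanged but $t_j^v$ changes, so $j$ enters $uv$ at the same instant in both dynamics and the queue ahead of $j$ on $uv$ must differ. Under unit capacity, this difference can only be caused by either (a) $\zeta$ itself using $uv$ in exactly one of $A_\zeta, A'_\zeta$, which gives a direct witness path for $\zeta$ through $uv$ to $v$; or (b) some player $\eta \ne \zeta$ with $uv \in A_\eta$ entering $uv$ at different times in the two profiles, whence $t_\eta^u$ differs. In case (b), applying the inductive hypothesis to $(\eta, u)$ yields that $\zeta$ dominates $\eta$ at $u$, and extending through $uv$ as in the previous paragraph---now first comparing $\zeta$ with $\eta$ at $u$ and then with $j$ on their shared edge $uv$---supplies the required witness at $v$.

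The main obstacle is the queue-dynamics bookkeeping needed to transfer the domination relation across $uv$. The delicate point is the tied case $\tau^u = \tau_j^u$, where the second clause of the definition of domination is in force: one has to verify that when $\zeta$'s witness path and $A_j$ (or $A_\eta$) arrive at $u$ simultaneously, the priority inequality on incoming edges of $u$ translates into the required priority inequality $e_v(A^*_\zeta) \preceq_v e_v(A_j)$ on incoming edges of $v$. Unit capacity together with the local FIFO-plus-edge-priority tie-breaking rule is precisely what makes this transfer deterministic, and this is where the reduction to unit capacity and unit length at the start of Section~\ref{sec:model} will be used crucially.
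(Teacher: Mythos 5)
Your overall strategy---induct along the acyclic order, look at the predecessor $u$ of $v$ on $A_j$, and either invoke the inductive hypothesis at $u$ (for $j$ itself or for some intermediate player $\eta$ whose arrival at $u$ changed) or catch $\zeta$ using $uv$ directly---is the same skeleton as the paper's proof. But there are two problems, one structural and one substantive.

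The structural one: you induct ``on the position of $v$ along the path $A_j$,'' yet in your case (b) you apply the inductive hypothesis to the pair $(\eta,u)$ for a \emph{different} player $\eta$. The position of $u$ along $A_\eta$ has nothing to do with the position of $v$ along $A_j$, so as stated the induction is not well-founded. The fix is what the paper does: induct on the topological order of vertices in the set $V_\zeta=\{v : \tau^v\neq\infty\}$ and prove the statement for \emph{all} affected players at each vertex simultaneously, so that the claim at $u$ is available for every player when you treat $v$.

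The substantive gap is the step you describe as ``a queue-dynamics analysis on the edge $uv$ \dots should propagate \dots domination.'' This is where essentially all the work in the paper's proof lives, and your sketch of it is not correct as stated. The quantities $\tau^v$ and $\tau^v_j$ are minima over \emph{different} deviations of $\zeta$, so you cannot argue by scheduling $\zeta$ and $j$ onto $uv$ in a single run and comparing their queue positions; in particular, your claim that $\tau^u<\tau^u_j$ forces $\tau^v<\tau^v_j$ fails when $\zeta$ meets a queue on $uv$ (the paper only gets $\tau^v\le\tau^v_j$ there, and domination holds via the tie clause because the entering edge is $uv$ for both). The paper's actual argument takes the witness path $\bar A_\zeta=A^*_\zeta[o_\zeta(c_r),u]\cup A_j[u,d]$, lets $I'$ be the set of players queuing before $\zeta$ on $uv$ under $\bar A_\zeta$, and shows that $\zeta$ dominates \emph{no one} in $I'$ at $u$; by the (contrapositive of the) inductive hypothesis their passage through $uv$ is therefore \emph{invariant under every deviation of $\zeta$}. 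Since $\zeta$ does dominate $j$ at $u$ while the players of $I'$ are not dominated, $j$ must queue behind all of $I'$ under every deviation, which yields $\tau^v_j\ge t^v_i+1=t^v_\zeta(\bar A_\zeta,\bfm\alpha_{-\zeta})\ge\tau^v$ with $i$ the last player of $I'$. Without this invariance argument the comparison between the two minima does not go through, so as it stands your proposal has a genuine hole at its central step.
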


The above lemma implies in particular that if player $\zeta$ can influence  the arrival time of player $j$ at 
{vertex} $v$, then $\zeta$ is able to {reach} $v$ earlier than $j$ {does}. Note that this result would be impossible if priorities were placed on players{, which can be seen from Figure~}\ref{noSPE}. We focus on player $k$ and suppose the path profile $\bfm{\alpha}$ is such that $i$ and $j$ both choose their upper paths and $k$ chooses his lower path. Then $k$ is able to influence player $j$'s arrival time at node $y_1$ by deviating to the upper path. However, $k$ is unable to {reach} $y_1$ earlier than $j$ {does}.

{From the  proof of Lemma \ref{dominating}, we also have the following stronger result.}

\begin{corollary}\label{alldominating}
If player $\zeta$ dominates some player {$j\in \Delta(c_r)\sm\{\zeta\}$} at vertex $v\in A_j$, then $\zeta$ dominates $j$ at all the other vertices on the subpath $A_j[v,d]$.
\end{corollary}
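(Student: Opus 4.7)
The plan is to argue by induction along the vertices of $A_j[v,d]$, with the base case being the hypothesis at $v$. For the inductive step, I would assume $\zeta$ dominates $j$ at some vertex $v'\in A_j[v,d]$ with $v'\ne d$, and let $w$ be the next vertex on $A_j$ after $v'$, so that $e_w(A_j)=v'w$. The goal is then to show $\zeta$ dominates $j$ at $w$.

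The central tool I would use is a hybrid deviation. Fix a witness $A^*_\zeta$ for dominance at $v'$: $t^{v'}_\zeta(A^*_\zeta,\bfm{\alpha}_{-\zeta})=\tau^{v'}$ and either (i) $\tau^{v'}<\tau_j^{v'}$, or (ii) $\tau^{v'}=\tau_j^{v'}$ and $e_{v'}(A^*_\zeta)\preceq_{v'} e_{v'}(A_j)$. Define
\[
\hat A_\zeta := A^*_\zeta[o_\zeta(c_r),v']\cup A_j[v',d],
\]
so $\zeta$ uses the dominating prefix and then mirrors $A_j$ to $d$. Under $\hat A_\zeta$, $\zeta$ arrives at $v'$ at time $\tau^{v'}$ while $j$, still following $A_j$, arrives at $v'$ at time $\ge\tau_j^{v'}$. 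In case (i), $\zeta$ enters $v'w$ strictly before $j$. In case (ii), they enter $v'w$ simultaneously, but necessarily through different incoming edges (unit capacity forbids two players exiting the same incoming edge at the same step), so the priority $\preceq_{v'}$ is strict and $\zeta$'s edge outranks $j$'s. In either case, FIFO and unit capacity on $v'w$ place $\zeta$ strictly ahead of $j$ in the queue, yielding $t^w_\zeta(\hat A_\zeta)\le t^w_j(\hat A_\zeta)-1$.

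A convenient observation is that $e_w(\hat A_\zeta)=v'w=e_w(A_j)$, so the priority clause at $w$ needed for the equality case of dominance is automatically in force; it therefore suffices to establish $\tau^w\le\tau_j^w$. For this step I would re-invoke the queue-dynamics argument used to prove Lemma~\ref{dominating}: starting from an arbitrary deviation $A''_\zeta$ achieving $\tau_j^w$, splice in the dominating prefix $A^*_\zeta[o_\zeta(c_r),v']$ and extend through $v'w$ to obtain a deviation that brings $\zeta$ to $w$ no later than $j$ arrives at $w$ under $A''_\zeta$, using that an earlier (or equal-with-priority) arrival of $\zeta$ at $v'$ cannot produce a larger wait for $\zeta$ on $v'w$ than the wait that $j$ must endure.

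The main obstacle is precisely this last step. The difficulty is that $\tau^w$ and $\tau_j^w$ are minima over potentially different deviations, and the queue on $v'w$ is sensitive to which prefix $\zeta$ adopts (since other users of $v'w$ have arrival times at $v'$ that depend on $\zeta$'s choices upstream). Handling this rigorously calls for a careful ``queue-swap'' analysis—essentially rerunning the proof of Lemma~\ref{dominating} at vertex $w$ with the hybrid $\hat A_\zeta$ playing the role of the witness—which explains why the paper frames the corollary as a consequence of the \emph{proof}, rather than only the statement, of Lemma~\ref{dominating}.
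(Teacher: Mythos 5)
Your proposal is correct and follows essentially the same route as the paper: the paper gives no standalone proof of this corollary but reads it off from the inductive ($u\rightarrow w$) step inside the proof of Lemma~\ref{dominating}, which uses exactly your hybrid witness $\bar A_\zeta=A^*_\zeta[o_\zeta(c_r),u]\cup A_j[u,d]$ together with the observation that $e_w(\bar A_\zeta)=e_w(A_j)$ makes the priority clause at $w$ automatic. The ``queue-swap'' step you defer is carried out there by introducing the set $I'$ of players ahead of $\zeta$ on $uw$ under $\bar A_\zeta$ and showing $I'\cap I_u=\emptyset$ --- this uses the \emph{full} conclusion of Lemma~\ref{dominating} at $u$ (that $\zeta$ dominates every influenced player there, not merely $j$), so the passage of $I'$ through $uw$ is invariant under $\zeta$'s deviations and $j$ must always trail the last such player --- which gives the needed chain $\tau_j^w\ge t^w_\zeta(\bar A_\zeta,\bfm{\alpha}_{-\zeta})\ge\tau^w$ rather than the within-one-routing comparison $t^w_\zeta(\hat A_\zeta)\le t^w_j(\hat A_\zeta)-1$ you rightly flag as insufficient.
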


As an immediate result of Corollary \ref{alldominating}, we have the following lemma.
\begin{lemma}\label{nodominating}
Let $\bfm{\pi}=(\Pi_i)_{i\in{\Delta(c_r)}}$ be an NE of $\Gamma^n(c_r)$. For every $k\ge1$ and   every player $j\in\Delta(c_r)\sm\Delta(\bfm{\pi},[k])$, $j$ {cannot} dominate any player $i\in\Delta(\bfm{\pi},[k])$ at any vertex of path $\Pi_i$.
\end{lemma}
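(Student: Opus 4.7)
The plan is to argue by contradiction and push any hypothetical dominance all the way to the destination $d$ via Corollary \ref{alldominating}, then extract a profitable unilateral deviation for $j$ that contradicts the NE property of $\bfm\pi$.

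Concretely, suppose toward contradiction that some $j \in \Delta(c_r) \setminus \Delta(\bfm\pi,[k])$ dominates some $i \in \Delta(\bfm\pi,[k])$ at some vertex $v \in \Pi_i$, in the sense of the definition with $\bfm\alpha := \bfm\pi$ and $\zeta := j$. Since the destination $d$ lies on $\Pi_i$ and thus on $\Pi_i[v,d]$, Corollary \ref{alldominating} yields that $j$ dominates $i$ also at $d$. I would now unpack this using the $\tau$-values (relative to $\zeta=j$). In both clauses of the dominance definition one has $\tau^d \le \tau_i^d$; moreover, $\tau_i^d \le t_i^d(\bfm\pi)$ since the identity choice $A'_j = \Pi_j$ is an admissible candidate in the minimum defining $\tau_i^d$.

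Next I would exploit the batching: because $i \in \Delta(\bfm\pi,[k])$ and $j \notin \Delta(\bfm\pi,[k])$, the sorted arrival times satisfy
\[
t_i^d(\bfm\pi) \le \tau(\bfm\pi,k) < \tau(\bfm\pi,k+1) \le t_j^d(\bfm\pi).
\]
Chaining this with the inequalities above gives $\tau^d < t_j^d(\bfm\pi)$. Since $\tau^d$ is by definition attained by some $A_j^{*} \in \mathcal P[o_j(c_r),d]$, the unilateral deviation of $j$ from $\Pi_j$ to $A_j^{*}$ strictly reduces $j$'s arrival time at $d$, contradicting the NE property of $\bfm\pi$. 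The argument is essentially a one-step consequence of Corollary \ref{alldominating}; the only substantive ingredient beyond that corollary is the clean separation of arrival times at $d$ furnished by the batch ordering, so I do not anticipate any real obstacle here.
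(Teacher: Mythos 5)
Your proposal is correct and follows essentially the same route as the paper's proof: push the dominance to $d$ via Corollary \ref{alldominating}, bound $\tau^d \le \tau_i^d \le t_i^d(\bfm\pi) \le \tau(\bfm\pi,k) < t_j^d(\bfm\pi)$, and conclude that the path attaining $\tau^d$ is a profitable deviation for $j$, contradicting the NE property. (Your explicit use of $t_i^d(\bfm\pi)\le\tau(\bfm\pi,k)$ rather than equality is in fact slightly more careful than the paper's wording.)
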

\begin{proof}
Suppose on the contrary that there does exist a player  $j\in\Delta(c_r)\sm\Delta(\bfm\pi ,[k])$ and player $i\in\Delta(\bfm{\pi},[k])$ such that $j$ dominates $i$ at some vertex $v\in\Pi_i$.  Then by Corollary \ref{alldominating}, $j$ also dominates $i$ at vertex $d$. This means that there exists a path $\Pi_j^*$ such that  $$t_j^d(\Pi_j^*,\bfm{\pi}_{-j})\le\min_{\Pi_j'\in \mathcal P[o_j(c_r),d]}\{t_i^d(\Pi_j',\bfm{\pi}_{-j})\}\le t_i^d(\bfm{\pi})=\tau(\pi,k).$$  However, {$t^d_j(\bfm\pi)>\tau(\bfm\pi ,k)$ due to $j\in\Delta(c_r)\sm\Delta(\bfm\pi ,[k])$, indicating that $j$ has an incentive to deviate to 
{$\Pi_j^*$} and hence violating the fact that $\bfm\pi$ is an NE.} 
\end{proof}

\subsection{Computing the earliest-arrival best response}

{For any configuration $c_r$ and the corresponding 
{normal-form game} $\Gamma^n(c_r)$, suppose now we are given a path profile $\bfm\alpha$. We show in this subsection how to compute a special best-response defined below.}

\begin{definition}
Given a partial path profile $\bfm\alpha_{-\zeta}$, an $o_\zeta(c_r)$-$d$ path $P^*$ of player $\zeta\in\Delta(c_r)$ is called the {\em earliest-arrival best response} if (i) $\zeta$ {arrives} at each node of $P^*$ the earliest among all  $o_\zeta(c_r)$-$d$ paths, and (ii) when there are more than one earliest ways to arrive at a node of  $P^*$, $P^*$ always selects an entering edge with the highest 
{priority}.
\end{definition}

{We shall see  that the earliest-arrival best response exists and is unique. Observe that each dominating path computed in Algorithm \ref{proc:id} is an earliest-arrival best response. It can also be seen that, when all players take the earliest-arrival best response to each other in $\bfm \alpha$, then $\bfm \alpha$ is exactly the NE computed in Algorithm~\ref{proc:id}.}

{We find that the earliest-arrival best {responses}   can be computed via a dynamic programming that resembles the classical Dijkstra's algorithm for the shortest-path problem. This algorithm is polynomial  when $\Delta(c_r)$ is finite. Note that there may well be best-responses that are not earliest-arrival. However, the earliest-arrival best response possesses several nice properties that are not owned by general best responses.}


{Under the path profile $\bfm\alpha$,  for any edge $e=uv$, recall that $Q_{e}^t$ is the set of players queuing on edge $e$ at time $t\ge r$. For an edge $e'=wu$ whose ending 
{vertex} $u$ is the starting 
{one} of $e$, we use  $$Q_e^t(t,e')\subseteq Q_e^t$$ to denote the subset of players in $Q_e^t$ who entered $e=uv$ at time $t$ and from edges with priorities no higher than $e'=wu$. For any vertex $v\in \bar V$ with $\tau^v\neq 0$, denote $e^*(v)$ as the edge  with the highest priority w.r.t  vertex $v$ that  $\zeta$ can use to 
{reach} $v$ at time $\tau^v$. Initially, we set $Q_e^r(r,e^*(o_\zeta(c_r)))$ as the set of players in $Q_e^r$ that queue after $\zeta$ when $e=e_\zeta(c_r)$, where $e_\zeta(c_r)$ is the initial edge that $\zeta$ rests on in $c_r$.}


\begin{lemma}\label{lem:dp}
For any vertex $v\in \bar V$, if $\tau^v\neq\infty$, then
$$
\tau^v=\min_{uv\in \bar E^-(v)}\Big\{\tau^u+1+|Q_{uv}^{\tau^u}|-|Q_{uv}^{\tau^u}(\tau^u,e^*(u))|\Big\}
$$
and
$$
e^*(v)=\arg\min_{uv\in \bar E^-(v)}\!\!\!{}^*\Big\{\tau^u+1+|Q_{uv}^{\tau^u}|-|Q_{uv}^{\tau^u}(\tau^u,e^*(u))|\Big\},
$$
where $\arg\min^*$ selects the edge with the highest priority among all those achieving the minimum.
\end{lemma}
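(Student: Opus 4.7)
I would establish the recursion by induction on a topological ordering of $\bar V$, which exists since $\bar G$ is acyclic. The base case $v=o_\zeta(c_r)$ is immediate: $\tau^v=r$ and $e^*(v)$ is the placeholder fixed by the initial setup. In the inductive step at $v$, assuming the characterization of $\tau^u$ and $e^*(u)$ is correct for every proper predecessor $u$ of $v$, I verify the two inequalities $\tau^v\le \min_{uv}\{\tau^u+1+|Q_{uv}^{\tau^u}|-|Q_{uv}^{\tau^u}(\tau^u,e^*(u))|\}$ and its reverse, and then identify $e^*(v)$ with the $\arg\min^*$ expression.

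For the upper bound, for each $uv\in\bar E^-(v)$ with $\tau^u<\infty$ I would exhibit an explicit deviation $A'_\zeta$: follow any $o_\zeta(c_r)$-$u$ path that reaches $u$ at time $\tau^u$ with last edge $e^*(u)$ (such a path exists by the very definition of $e^*(u)$), then traverse $uv$. Under $A'_\zeta$, $\zeta$ enters $uv$ at time $\tau^u$ via $e^*(u)$, and his position in the queue on $uv$ is determined by FIFO and the priority of $e^*(u)$. Partitioning $Q_{uv}^{\tau^u}$ into the players who entered $uv$ strictly before $\tau^u$, those who entered at $\tau^u$ from edges with priority strictly higher than $e^*(u)$, and the set $Q_{uv}^{\tau^u}(\tau^u,e^*(u))$ shows that the first two groups are exactly the queue positions strictly ahead of $\zeta$, whence $\zeta$ waits $|Q_{uv}^{\tau^u}|-|Q_{uv}^{\tau^u}(\tau^u,e^*(u))|$ time units before crossing $uv$ in one more unit.

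For the lower bound, take any deviation of $\zeta$ reaching $v$ at time $T$ with last edge $uv$, and let $t_u\ge\tau^u$ be his arrival time at $u$. If $t_u=\tau^u$, the edge used to enter $u$ has priority no higher than $e^*(u)$ by the maximality in the definition of $e^*(u)$, so $\zeta$ is no better placed in the queue on $uv$ than in the upper-bound scenario, giving $T\ge\tau^u+1+|Q_{uv}^{\tau^u}|-|Q_{uv}^{\tau^u}(\tau^u,e^*(u))|$. If $t_u>\tau^u$, a FIFO-budget argument applies: at most $t_u-\tau^u$ players can have started crossing $uv$ during $[\tau^u,t_u)$, so at least $\max\{0,\,|Q_{uv}^{\tau^u}|-|Q_{uv}^{\tau^u}(\tau^u,e^*(u))|-(t_u-\tau^u)\}$ of the originally higher-priority players still sit ahead of $\zeta$ at time $t_u$, from which $T\ge\tau^u+1+|Q_{uv}^{\tau^u}|-|Q_{uv}^{\tau^u}(\tau^u,e^*(u))|$ follows. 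Minimizing over $uv$ closes the lower bound, and the $\arg\min^*$ identification of $e^*(v)$ is then immediate, since the edges attaining equality are exactly the last edges of $\zeta$'s time-$\tau^v$ paths into $v$.

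The main technical obstacle is to justify that $Q_{uv}^{\tau^u}$, defined under $\bfm\alpha$, correctly counts what $\zeta$ actually encounters under the deviation. Lemma~\ref{dominating} is the bridge: any $j\ne\zeta$ in the ``higher-priority group'' contributing to $|Q_{uv}^{\tau^u}|-|Q_{uv}^{\tau^u}(\tau^u,e^*(u))|$ satisfies $t_j^u(\bfm\alpha)\le\tau^u$ and hence $\tau_j^u\le\tau^u$, ruling out the strict case $\tau^u<\tau_j^u$ of the dominator condition; moreover his entering edge $e_j$ at $u$ has priority strictly higher than $e^*(u)$, which defeats the priority clause of the dominator condition as well. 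By the contrapositive of Lemma~\ref{dominating}, $j$'s arrival at $u$---and hence his entry time and queue position on $uv$---is invariant under $\zeta$'s deviation. A short extra check handles $\zeta$'s own possible membership in $Q_{uv}^{\tau^u}$ under $\bfm\alpha$ (whose entering edge must itself have priority no higher than $e^*(u)$, so $\zeta$ is absorbed into $Q_{uv}^{\tau^u}(\tau^u,e^*(u))$ and the counts still match) and the scenario where $\zeta$'s deviation displaces the original $e^*(u)$-user from the queue, both of which leave the formula unchanged.
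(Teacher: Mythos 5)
Your proposal is correct and follows essentially the same route as the paper's proof: induction on a topological order, the contrapositive of Lemma~\ref{dominating} combined with the maximality of $e^*(u)$ to show that players in $Q_{uv}^{\tau^u}\setminus Q_{uv}^{\tau^u}(\tau^u,e^*(u))$ cannot be influenced by $\zeta$'s deviation, and then matching upper and lower bounds on the arrival time at $v$ through edge $uv$. The only difference is that you spell out the FIFO-budget case $t_u>\tau^u$ and the self-membership check explicitly, which the paper leaves implicit in its ``consequently'' step.
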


{Suppose now $\Delta(c_r)$ is finite. Then, the values of $Q_{e}^t$ and $Q_e^t(t,e')$ are polynomially computable through simulating the traffic process. Therefore, the earliest-arrival best response of each player is polynomially computable in this case.}

\subsection{NE properties}

The following lemma is very powerful. It helps us to show that the interactions between players
of different batches at an NE are relatively clear and hierarchal, bearing many similarities
to those in the special SPE studied in Section~\ref{sec:existence} and Appendix~\ref{apx:propspe}.

\begin{lemma}\label{NEproperty'}
Let $\bfm{\pi}$ be an NE of $\Gamma^n(c_r)$. For any batch index $k\ge1$,  player {$i\in B:= \Delta(\bfm\pi,[k])$},
vertex  {$v\in \Pi_i$}, player $j\in\Delta(c_r)\sm B$, and partial path profile {$\bfm r$} for players in $\Delta(c_r)\sm B$,
it holds that
\begin{equation}\label{NEproperty:eq1}
t^v_i(\bfm{\pi})=t^v_i(\bfm{\pi}_{ B}, \bfm r)\le t^v_j(\bfm{\pi}_{ B}, \bfm r) \mbox{ and }
t_j^d(\bfm{\pi}_{B}, \bfm r)\ge \tau(\bfm{\pi},k+1)>t_i^d(\bfm{\pi}_{B}, \bfm r).
\end{equation}
\end{lemma}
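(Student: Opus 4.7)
The plan is to induct on the batch index $k$, with the base case $k=0$ trivially holding since $B=\emptyset$. In the inductive step, write $B:=\Delta(\bfm\pi,[k])$ and $B':=\Delta(\bfm\pi,[k-1])$. By applying the inductive hypothesis to $B'$ with the partial profile $(\bfm\pi_{\Delta(\bfm\pi,k)},\bfm r)$ for $\Delta(c_r)\sm B'$, the conclusions of (\ref{NEproperty:eq1}) are already secured for every $i\in B'$. It therefore suffices to handle the players $i$ in the newest batch $\Delta(\bfm\pi,k)$.

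For that I would prove, by an inner induction on time $T\ge r$, the following strengthened temporal claim: for every $i\in\Delta(\bfm\pi,k)$ and $v\in\Pi_i$ with $t^v_i(\bfm\pi)=T$, (i) $t^v_i(\bfm\pi_B,\bfm r)=T$; and (ii) no player $j\in\Delta(c_r)\sm B$ reaches $v$ at or before time $T$ in $(\bfm\pi_B,\bfm r)$ via an incoming edge whose $\prec_v$-priority is at least that of $e_v(\Pi_i)$. The priority refinement in (ii) is essential for the induction to close, since equal arrival times at $v$ must still yield a definite queuing order on the outgoing edges of $v$. Direction (i) then follows smoothly from the hypothesis at the predecessor $u$ of $v$ on $\Pi_i$: (i) at $u$ pins down $i$'s entry time into $uv$, and (ii) at $u$ forbids any outsider from slotting into the queue on $uv$ ahead of $i$; hence $i$ exits $uv$ at time $T$ exactly as under $\bfm\pi$.

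\textbf{Main obstacle.} The crux is establishing part (ii). If some $j\in\Delta(c_r)\sm B$ were to violate (ii), I would take the earliest such violation and aim for a contradiction to Lemma \ref{nodominating} by exhibiting a unilateral deviation of $j$ from $\bfm\pi$ that dominates $i$ at $v$. The fundamental difficulty is that $(\bfm\pi_B,\bfm r)$ is obtained from $\bfm\pi$ by \emph{simultaneous} path changes of many outsiders, whereas Lemma \ref{nodominating} only constrains \emph{single-player} deviations from the NE. I would address this by interpolating between $\bfm\pi$ and $(\bfm\pi_B,\bfm r)$ through a sequence of one-outsider-at-a-time path changes and maintaining a strengthened no-dominance invariant along the way, invoking Lemma \ref{dominating} at each step and Corollary \ref{alldominating} to propagate dominance forward along $\Pi_i$. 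The delicate point is that intermediate profiles are not NEs, so the invariant must be carried combinatorially rather than deduced from the NE condition at each step.

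Finally, $t^d_j(\bfm\pi_B,\bfm r)\ge\tau(\bfm\pi,k+1)$ requires a separate argument beyond pairwise dominance with an $i\in B$ (which only gives $\ge t^d_i(\bfm\pi)\le\tau(\bfm\pi,k)$). I would argue that if some outsider $j$ could reach $d$ before $\tau(\bfm\pi,k+1)$ in $(\bfm\pi_B,\bfm r)$, then combining (i)--(ii) on all of $\Pi_i$ (so that $B$'s trajectories and blocking behaviour are frozen to their NE values) with the sequential-deviation argument used for (ii) would produce a unilateral deviation of $j$ from $\bfm\pi$ reaching $d$ strictly before $\tau(\bfm\pi,k+1)\le t^d_j(\bfm\pi)$, violating the NE condition. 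The remaining inequality $t^d_i(\bfm\pi_B,\bfm r)<\tau(\bfm\pi,k+1)$ is immediate from (i) and the definition of $B$.
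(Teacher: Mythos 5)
Your outer structure (take the earliest violation, aim to contradict Lemma~\ref{nodominating} by exhibiting a unilateral deviation from $\bfm\pi$) matches the paper's, and you have correctly isolated the crux: $(\bfm\pi_B,\bfm r)$ differs from $\bfm\pi$ by \emph{simultaneous} deviations of many outsiders, while Lemma~\ref{nodominating} and the NE condition only constrain single-player deviations. The gap is in your proposed resolution of that crux. Interpolating one outsider at a time and invoking Lemma~\ref{dominating} at each step only yields dominance of $\zeta_\ell$ over $i$ \emph{relative to the intermediate profile} $\bfm\alpha^{\ell-1}$; the quantities $\tau^v$ and $\tau^v_j$ in the definition of dominance are minima taken with all other players frozen at $\bfm\alpha^{\ell-1}_{-\zeta_\ell}$, and these bear no monotone relation to the corresponding minima under $\bfm\pi_{-\zeta_\ell}$ once other outsiders have already left their equilibrium paths (Example~\ref{addaplayer} shows arrival times can move in either direction when the population changes). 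So the ``strengthened no-dominance invariant carried combinatorially'' is precisely the statement you still need to prove, and you supply no mechanism for maintaining it across interpolation steps; as written the argument does not close.

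The paper resolves this with two moves that are absent from your plan. First, Lemma~\ref{lem:preempt} is a \emph{deletion} argument: at the earliest weak preemption of a $B$-player under $(\bfm\pi_B,\bfm r)$, all other outsiders can simply be removed (before that time they never preempt $B$, so removing them only accelerates the preempting outsider $j$), reducing matters to the two-sided game $\Gamma^n(c_r,B\cup\{j\})$. Second --- and this is the key idea you are missing --- the outsider $j$ found this way still cannot be fed into the NE condition, because in the reduced game the other outsiders are absent rather than following $\bfm\pi$. The paper therefore reruns the iterative-dominating-path machinery (Algorithm~\ref{alg:gernal}, Lemma~\ref{lem:basedominant}) with the preemption vertex $u$ in place of $d$, selecting a possibly different outsider $j^*$ whose route to $u$ at time $\tau^*$ is robust against \emph{arbitrary} behavior of the remaining outsiders; only then is $(P_{j^*}\cup\Pi_{i^*}[u,d],\bfm\pi_{-j^*})$ a genuine unilateral deviation from $\bfm\pi$, and the contradiction with Lemma~\ref{nodominating} or the NE inequality goes through. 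The same machinery, applied with base $U=B$ and $\bfm b=\bfm\pi_B$ once the invariance of $B$'s trajectories is secured, is what delivers the uniform lower bound $t_j^d(\bfm\pi_B,\bfm r)\ge\tau(\bfm\pi,k+1)$ for every $\bfm r$; your sketch of that part inherits the same unresolved conversion problem.
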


One of our main results in this section is the following theorem. {Recall that a strong NE is an NE such that no subset of players are able to strictly better off via group deviation.}

\begin{theorem}\label{cor:nobetter'}
Let $\bfm{\pi}$ be an NE of $\Gamma^n(c_r)$. {The following  properties are satisfied}.
\begin{mylabel6}
\item {\em Hierarchal Independence.} If players in {a batch and those in earlier batches all} follow their equilibrium
strategies as in $\bfm\pi$, {then their arrival times at any vertex are independent of other players' strategies.}

\item {\em Hierarchal Optimality.} The {latency} of each player in {the first batch  $\Delta(\bfm{\pi},1)$}   
 is the smallest among the {latencies} of all players under any   {routing} of $\Gamma^n(c_r)$. 
In general, for all $k\ge 2$, the {latency} of each player in {the $k$th batch  {$\Delta(\bfm{\pi},k)$}}  is the smallest among the {latencies} of all players in $\Delta(c_r)\sm\Delta(\bfm\pi,[k-1])$ under any {routing of $\Gamma^n(c_r)$} in which 
players {in the first {$k-1$} batches} $\Delta(\bfm{\pi},[k-1])$ follow their routes specified by {$\bfm{\pi}$}. 


\item {\em Global {FIFO.}} 
Under $\bfm\pi$, if 
     {there exists a vertex $v\in \bar{V}$ (particularly the origin $o$) such that player $i$ reaches $v$ earlier than $j$ {does} or they reach $v$ at the same time but $i$ comes from an edge   with a higher priority than $j$ does,} 
    then~$i$ reaches the destination $d$ no later than $j$ does.
\item {\em Strong NE.} In $\Gamma^n(c_r)$, every NE is also a {strong NE.} 
That is, in any NE of $\Gamma^n(c_r)$, no set of players can be strictly better off by deviating together to other paths. In particular, this means that each NE of $\Gamma^n(c_r)$ is weakly Pareto Optimal.
    \end{mylabel6}
\end{theorem}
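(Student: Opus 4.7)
The plan is to derive all four properties (i)--(iv) from the critical Lemma~\ref{NEproperty'}, occasionally aided by the NE deviation condition. Hierarchal Independence (i) is essentially the first equality in that lemma: setting $B=\Delta(\bfm\pi,[k])$ gives $t_i^v(\bfm\pi)=t_i^v(\bfm\pi_B,\bfm r)$ for every $i\in B$, every $v\in\Pi_i$, and every partial path profile $\bfm r$ for players outside $B$, which is precisely the independence claim.

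For Hierarchal Optimality (ii) with $k\ge 2$, I apply Lemma~\ref{NEproperty'} with batch index $k-1$ and $B:=\Delta(\bfm\pi,[k-1])$, obtaining $t_j^d(\bfm\pi_B,\bfm r)\ge\tau(\bfm\pi,k)$ for every $j\in\Delta(c_r)\sm B$ and every partial profile $\bfm r$ for players in $\Delta(c_r)\sm B$. Since each $i\in\Delta(\bfm\pi,k)$ has $t_i^d(\bfm\pi)=\tau(\bfm\pi,k)$, this is exactly the required inequality. The case $k=1$ corresponds to the natural boundary $B=\emptyset$: the vertex statement of the lemma becomes vacuous, and the latency statement reads $t_j^d(\bfm r)\ge\tau(\bfm\pi,1)$ for every routing $\bfm r$ and every $j\in\Delta(c_r)$, established by a direct repetition of the lemma's proof with empty $B$.

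For Global FIFO (iii), I argue by contradiction. Suppose the premise holds at some $v$ (either $i$ arrives strictly earlier, or they arrive simultaneously with $e_v(\Pi_i)\prec_v e_v(\Pi_j)$) yet $t_i^d(\bfm\pi)>t_j^d(\bfm\pi)$. Then $j\in\Delta(\bfm\pi,k)$ and $i\in\Delta(c_r)\sm\Delta(\bfm\pi,[k])$ for some $k$, and Lemma~\ref{NEproperty'} applied with $B=\Delta(\bfm\pi,[k])$ and $\bfm r=\bfm\pi_{\sm B}$ delivers $t_j^v(\bfm\pi)\le t_i^v(\bfm\pi)$, immediately contradicting the strict-time premise. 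In the tie case $t_i^v(\bfm\pi)=t_j^v(\bfm\pi)$, I consider $i$'s unilateral deviation to $P:=\Pi_i[o_i(c_r),v]\cup\Pi_j[v,d]$. Up to $v$ nothing changes, so under $(\bfm\pi_{-i},P)$ both $i$ and $j$ still arrive at $v$ simultaneously; because $e_v(\Pi_i)\prec_v e_v(\Pi_j)$, $i$ is inserted strictly ahead of $j$ in the queue of the first edge of $\Pi_j[v,d]$. A step-by-step propagation along $\Pi_j[v,d]$ then shows that $i$'s trajectory in $(\bfm\pi_{-i},P)$ is, at each subsequent vertex, at least as fast as $j$'s original trajectory under $\bfm\pi$, yielding $t_i^d(\bfm\pi_{-i},P)\le t_j^d(\bfm\pi)$. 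The NE property gives $t_i^d(\bfm\pi)\le t_i^d(\bfm\pi_{-i},P)\le t_j^d(\bfm\pi)$, contradicting the assumption. I expect this tie-case propagation to be the main technical obstacle, as it requires careful bookkeeping to ensure that once $i$ occupies $j$'s slot, the subsequent delay of $j$ does not feed back to slow $i$.

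Finally, Strong NE (iv), from which weak Pareto optimality follows by specializing to $S=\Delta(c_r)$, is a direct consequence of (ii). Suppose for contradiction that some $S\subseteq\Delta(c_r)$ strictly improves via a joint deviation $(R_i)_{i\in S}$, and let $k$ be the smallest batch index with $\Delta(\bfm\pi,k)\cap S\ne\emptyset$. By minimality of $k$, $S\cap\Delta(\bfm\pi,[k-1])=\emptyset$, so players in $\Delta(\bfm\pi,[k-1])$ still follow $\bfm\pi$ under the deviated routing. Applying Hierarchal Optimality to batch $k$ then forces every $i\in\Delta(\bfm\pi,k)\cap S$ to satisfy $t_i^d\ge\tau(\bfm\pi,k)=t_i^d(\bfm\pi)$ under the deviation, contradicting the assumed strict improvement of $i$.
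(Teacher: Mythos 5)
Your derivations of (i), (ii) and (iv) are correct and essentially the paper's own route: everything is read off from Lemma~\ref{NEproperty'} (for (ii) the paper re-runs the dominator construction via Algorithm~\ref{alg:gernal} and Lemma~\ref{lem:basedominant} with $U=\Delta(\bfm\pi,[k-1])$, which is exactly your ``empty-$B$'' repetition made uniform in $k$, so your handling of $k=1$ is fine), and (iv) is verbatim the paper's minimal-batch argument.

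The genuine gap is in the tie case of Global FIFO (iii). Your deviation $P:=\Pi_i[o_i(c_r),v]\cup\Pi_j[v,d]$ is the right object, and the causality argument up to time $t_i^v(\bfm\pi)$ is sound, but the ``step-by-step propagation'' claim $t_i^w(P,\bfm\pi_{-i})\le t_j^w(\bfm\pi)$ along $\Pi_j[v,d]$ is precisely the statement that this model does not grant for free. Once $i$ vacates the edges of $\Pi_i(v,d]$, third players queued behind him there are sped up; they may then reach a later edge of $\Pi_j[v,d]$ earlier than they did under $\bfm\pi$ and insert themselves ahead of $i$ at a merge point where, under $\bfm\pi$, they arrived after $j$. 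This is exactly the non-monotone chain effect exhibited in Example~\ref{addaplayer} (removing a player harms others), and ruling it out is the entire content of the domination machinery (Lemma~\ref{dominating}, Corollary~\ref{alldominating}, Lemma~\ref{lem:preempt}) that underlies the proof of Lemma~\ref{NEproperty'}. Your induction on vertices of $\Pi_j[v,d]$ cannot close without an auxiliary statement of that type, so flagging it as ``the main technical obstacle'' is accurate but leaves the proof incomplete. The intended fix is not a new deviation argument at all: the proof of Lemma~\ref{NEproperty'} establishes the strictly stronger fact that no player outside $B=\Delta(\bfm\pi,[k])$ can \emph{weakly preempt} any player of $B$ (i.e., neither arrive strictly earlier at a vertex of his path nor tie while entering from a higher-priority edge), and the tie case of (iii) is the weak-preemption half of that statement applied with $\bfm r=\bfm\pi_{\Delta(c_r)\sm B}$; the displayed inequality $t^v_i(\bfm\pi_B,\bfm r)\le t^v_j(\bfm\pi_B,\bfm r)$ alone, which is all you use, only covers the strict case.
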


\subsection{Equilibrium relationship between $\Gamma$ and $\Gamma^n$}
\label{sec:relation}

In this subsection, {we establish in a constructive way that each NE outcome of $\Gamma^n$ is an SPE outcome
of $\Gamma$.} Given any NE $\bfm\pi$ of $\Gamma^n$, we construct {\em for {every} history}
$h_r=(c_0,\ldots,c_r)$, {$r\ge0$,} an NE $\bfm\pi(h_r)$ of interim game {$\Gamma^n(c_r)$} with the {\em particular setting
$\bfm\pi(h_0):=\bfm\pi$}. Then, we construct an SPE of $\Gamma$ by assembling {these} NEs such that starting
from any history $h_r$ ($r\ge0)$ the outcome of the SPE is exactly the NE $\bfm\pi(h_r)$. (Note that the reference of
each NE constructed is {a} history {rather than a} configuration. Since different histories may
have the same ending configuration, we may construct multiple NEs 
{for} the same interim game.)

Such an {NE-based} assembling is more difficult than the one in Section~\ref{sec:existence}{, which}
constructs {a} special SPE {restricted to nothing}.
What is more complicated here is that we are unable to design a
{Markovian SPE (cf.~Appendix~\ref{apx:propspe})}. In particular, the natural idea {of} constructing
the NEs $\bfm\pi(h_r)$, $r\ge1$ {rather} than $\bfm\pi(h_0)=\bfm\pi$
directly using Algorithm~\ref{proc:id} does not work, because, e.g., players not in the first batch under $\bfm\pi$
may have incentives to deviate at the game tree root.

Our construction of the NEs is done iteratively on the game tree of $\Gamma$ from the root $h_0=(c_0)$. Initially, the constructed NE $\bfm\pi(h_0)$ for $h_0$ is simply $\bfm\pi$. For each $r\geq 1$, suppose inductively that for a history $h_{r-1}=(c_0,\ldots,c_{r-1})\in\mathcal H_{r-1}$, the NE $\bfm\pi(h_{r-1})$ of game $\Gamma^n(c_{r-1})$, written for convenience as $\bfm\rho= (P_i)_{i\in \Delta(c_{r-1})}$, has  been constructed. We construct below
the NE $\bfm\pi(h_r)$}, denoted $\bfm\rho'=(P'_i)_{i\in\Delta(c_{r})}$,  for each  child history $h_{r}=(c_0,\ldots,c_{r-1},c_r)$ of $h_{r-1}$. Suppose {that}
\begin{mylabel5}
\item[-] $(e_i)_{i\in\Delta(c_{r-1})}$ is the action profile at game tree node $h_{r-1}$ determined by $\bfm\rho$,
 i.e,  no action in the profile  deviates from $\bfm\rho$, {and}
\item[-]  $(e'_i)_{i\in\Delta(c_{r-1})}$ is the action profile {that} leads history $h_{r-1}$  to {its}
 child history $h_{r}$ (or {equivalently} leads $c_{r-1}$ to $c_r$).
\end{mylabel5}
We construct $\bfm\rho'$ in two steps as follows.

\medskip\noindent{\bf CONSTRUCTION I:}
Let $\Bbbk\ge0$ be the {maximum} nonnegative integer 
{such} that the action of each player of $\Delta(\bfm\rho,[\Bbbk])$ under {$(e'_i)_{i\in\Delta(c_{r-1})}$} is the same as 
{under} {$(e_i)_{i\in\Delta(c_{r-1})}$}, i.e., 
\begin{equation}\label{eq:defk}
\Bbbk:=\sup\{k\,|\,e_i=e_i' \text{ for all } i\in\Delta(\bfm\rho,[k])\}.
\end{equation}
({Note that it is possible $\Bbbk=0$ with $\Delta(\bfm\rho,[0])=\emptyset$ or $\Bbbk=\infty$ with
$\Delta(\bfm\rho,[\infty])=\Delta(c_{r-1})$}.) {We let players in the first $\Bbbk$ batches  who are still in the system
at time $r$ under $\bfm\rho$ keep their paths under $\bfm\rho'$. To be more specific, we set}
\begin{equation}\label{eq:construct1}
 P_i':=P_i[o_i(c_r),d]\text{ for all }i\in\Delta(\bfm\rho,[\Bbbk])\cap\Delta(c_r).
\end{equation}

\medskip\noindent{\bf CONSTRUCTION II:}
{Based on}  {the equilibrium strategies kept for players in $\Delta(\bfm\rho,[\Bbbk])\cap\Delta(c_r) $ as specified} in (\ref{eq:construct1}), {which particularly guarantees invariant arrival times at any vertex for these players regardless of other players' choices (see the {\em hierarchal independence} in {Theorem} \ref{cor:nobetter'}(i))}, we 
find  an {\em iterative dominating path  profile}  $(P'_i)_{i\in \Delta(c_r)\setminus \Delta(\bfm\rho,[\Bbbk])}$  for {the remaining players,} 
using a process that is more general than Algorithm \ref{proc:id} {(see Appendices  \ref{apx:general} and \ref{apx:construct} for more details)}.
\medskip

Intuitively,  $\bfm\rho'$ is  {a combination of a part of $\bfm\rho$ and} an iterative dominating partial path profile. It can be shown that the constructed $\bfm\rho'$ is indeed an NE of $\Gamma^n(c_r)$ (see Lemma~\ref{lem:inductive} {in the Appendix}), which completes our {\em inductive} construction. {Furthermore,} the partial hierarchal independence and iterative domination guaranteed by Constructions I and II enable {us} to accomplish our task of assembling all the NEs constructed into an SPE of $\Gamma$.

\begin{theorem}\label{thm:equilibrium-relation}
If $\bfm \pi$  is  an NE of game $\Gamma^n$, then there exists an SPE $\sigma$ of  game $\Gamma$, such that the path profile induced by the initial history $h_0$ and $\sigma$ is exactly $\bfm \pi$.
\end{theorem}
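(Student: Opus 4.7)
The plan is to assemble the inductively constructed family $\{\bfm\pi(h_r)\}_{h_r\in\mathcal H_r,\,r\ge0}$ into a single strategy profile $\sigma$ of $\Gamma$, and then verify the SPE condition by appealing to the hierarchal independence (Theorem~\ref{cor:nobetter'}(i)) and the iterative dominance (Lemma~\ref{lem:ineq}) baked into each $\bfm\pi(h_r)$. Concretely, for each history $h_r$ and each player $i\in\Delta(c_r)$ with path $P_i$ in $\bfm\pi(h_r)$, I define $\sigma_i(h_r)$ exactly as in the proof of Theorem~\ref{thm:SPE}: stay on $e_i(c_r)$ if $i$ is not the head of its queue; take the null action if $i$ is the head and the head of $e_i(c_r)$ is $d$; or move to the second edge of $P_i$ otherwise.

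First I would verify, by induction on $r$, that if every player follows $\sigma$ starting from $h_0$, then at each successive transition the parameter $\Bbbk$ in~(\ref{eq:defk}) equals $\infty$, Construction~II is vacuous, and $\bfm\pi(h_r)$ is exactly the restriction of $\bfm\pi(h_{r-1})$ to $\Delta(c_r)$. Unwinding this recursion, the path profile induced by $h_0$ and $\sigma$ is $\bfm\pi(h_0)=\bfm\pi$, as required.

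The main obstacle is proving $\sigma$ is an SPE. Fix a history $h_r$, a player $i\in\Delta(c_r)$ whose batch index in $\bfm\pi(h_r)$ is $B^*$, and an arbitrary deviation $\sigma_i'\in\Sigma_i$. I would trace the history sequence $h_r=g_0,g_1,g_2,\ldots$ generated under $(\sigma_i',\sigma_{-i})$. Since only $i$ may deviate from $\sigma$, the parameter $\Bbbk_s$ governing the transition $g_{s-1}\to g_s$ is either $\infty$ (when $i$ plays consistently with $\bfm\pi(g_{s-1})$) or equals (the batch index of $i$ in $\bfm\pi(g_{s-1})$)$\,-\,1$ (when $i$ actually deviates). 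Construction~I propagates the first $\Bbbk_s$ batches of $\bfm\pi(g_{s-1})$ unchanged into $\bfm\pi(g_s)$; chaining Theorem~\ref{cor:nobetter'}(i) along $g_0,g_1,\ldots$ then shows that every player in the first $B^*-1$ batches of $\bfm\pi(h_r)$ keeps his original NE path and his original arrival time at every vertex throughout the deviation-driven trajectory.

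The crux is bounding $i$'s exit time from $\bar{G}$ under $(\sigma_i',\sigma_{-i})$. If $i$'s actual actions agree with $\sigma_i$ at every reached history, then his realised path is $\pi_i[o_i(c_r),d]$ and his latency is $\tau(\bfm\pi(h_r),B^*)=t_i(\sigma|h_r)$, and we are done. Otherwise, let $g_s$ be the first history at which $i$'s action disagrees with $\bfm\pi(g_{s-1})$; then $\Bbbk_s=B^*-1$, the first $B^*-1$ batches of $\bfm\pi(h_r)$ are preserved verbatim in $\bfm\pi(g_s)$, and Construction~II builds an iterative dominating sub-profile for the remaining players (including $i$). Lemma~\ref{lem:ineq} applied to this sub-profile gives that the arrival time at $d$ of its first dominator equals $\tau(\bfm\pi(h_r),B^*)$ — by hierarchal independence the preserved players behave exactly as in $\bfm\pi(h_r)$, so the earliest time at which any non-preserved player can reach $d$ is unchanged. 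Applying Lemma~\ref{lem:ineq} inductively at each subsequent $g_{s+1},g_{s+2},\ldots$ where $i$ again deviates shows that $i$'s arrival at $d$ under the whole trajectory is bounded below by this common value, giving $t_i(\sigma_i',\sigma_{-i}|h_r)\ge\tau(\bfm\pi(h_r),B^*)=t_i(\sigma|h_r)$. The principal technical subtlety lies in rigorously chaining hierarchal independence across histories where $\Bbbk_s$ drops and in handling the case where $i$ alternates between deviating and playing $\sigma_i$; both are best treated by a careful induction on $s$ keyed to the batch-index of $i$ in the current $\bfm\pi(g_s)$.
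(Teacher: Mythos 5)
Your proposal is correct and follows essentially the same route as the paper: assemble the family $\{\bfm\pi(h_r)\}$ into $\sigma$, use Construction I together with Hierarchal Independence to show that the first $B^*-1$ batches keep their paths and arrival times along any unilateral-deviation trajectory, and then invoke Hierarchal Optimality and the iterative-dominance lemma (the paper uses the generalized Lemma~\ref{lem:basedominant} rather than Lemma~\ref{lem:ineq}, since Construction II builds on the base of preserved paths) to conclude $t_i(\sigma_i',\sigma_{-i}|h_r)\ge\tau(\bfm\pi(h_r),B^*)=t_i(\sigma|h_r)$. The explicit induction on the deviation trajectory that you outline is exactly what the paper compresses into its direct comparison of the induced path profile $\bfm\chi$ with $\bfm\rho'$.
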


Combining with Example~\ref{eg:SPEnotNE}, the above theorem shows
that the NE outcome set of $\Gamma^n$ is typically a proper subset of the SPE outcome set of $\Gamma$, reaffirming an intuition that
model $\Gamma$ is more flexible than $\Gamma^n$. Since model $\Gamma^n$ is relatively easier to study, also natural
and more frequently analyzed in the literature, Theorem~\ref{thm:equilibrium-relation} can serve as a useful
bridge between $\Gamma$ and $\Gamma^n$.

\section{Bounding equilibrium queue lengths}
\label{sec:bounded-queue}

{In this section, we consider two classes of networks, and prove that the queue lengths
at any NE of the game $\Gamma^n(G)$ are bounded above by a finite number that \emph{only}
depends on the number $m=|E|$ of edges in $G$, provided that the inflow size never exceeds the network capacity.}

\begin{definition}\label{def:sp}
A  network $G$ with origin $o$ and destination $d$ is {\em $o$-$d$ series-parallel} or simply {\em series-parallel} if
\begin{mylabel}
\item {$G$ consists of a single edge {$od$};} or
\item $G$ is obtained by connecting two smaller $o_i$-$d_i$ series-parallel networks $G_i$, $i=1,2$, {\em in series} ---
 identifying $d_1$ and $o_2$, and naming $o_1$ as $o$, and $d_2$ as $d$;  or
\item $G$ is obtained by connecting two smaller $o_i$-$d_i$ series-parallel networks $G_i$, $i=1,2$, {\em in parallel} ---
 identifying $o_1$ and $o_2$ to form $o$ and identifying $d_1$ and $d_2$ to form $d$.
\end{mylabel}
\end{definition}

We reserve $L$ for the length of a longest $o$-$d$ path in $G$, and $\Lambda$ for the maximum in-degree of vertices
in $G$. Then clearly, $\max\{L,\Lambda\}\le m$. The following observation is trivial but important.

\begin{observation}\label{ob:easy}
Under any routing, at most $\Lambda$ players can reach the same vertex (in particular, {the destination} $d$) at the same time.
\end{observation}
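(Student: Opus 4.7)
The plan is a short direct counting argument exploiting the unit-capacity normalization already assumed in Section~\ref{sec:model}. Recall that by edge subdivision and duplication we may assume every edge of $G$ has capacity exactly $1$, so at any integer time point at most one player can be the head of the queue on any given edge, and hence at most one player can exit any given edge at any given time.

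First I would argue that reaching a vertex $v$ at a time $t\ge 1$ means exiting some edge whose head is $v$. Indeed, by the dynamic-traffic description, a player appears at a vertex $v$ exactly when he finishes traversing some edge $uv$ with $u_{e}=u$, $v_{e}=v$; traversal of such an edge $e$ takes one time unit and the head of the queue exits at the next integer time. Therefore every player who arrives at $v$ at time $t$ is identifiable with the (unique) queue head on some in-edge $uv\in \bar E^-(v)$ at time $t-1$.

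Next I would combine this with the unit-capacity observation. Since at most one player can be queue-head of any edge at any time, the number of players reaching $v$ at time $t$ is at most $|\bar E^-(v)|$, which equals the in-degree of $v$ in $G$ (for $v\in V(G)$). By the definition of $\Lambda$ as the maximum in-degree of vertices in $G$, this quantity is at most $\Lambda$. In particular, taking $v=d$ yields the claim for the destination.

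There is essentially no obstacle here; the only thing to be mildly careful about is the extended network $\bar G$. For a vertex $v\in V(G)\setminus\{o\}$ the in-edge sets in $G$ and $\bar G$ coincide, so the bound $\Lambda$ applies without change. For $v=o$ the in-degree in $\bar G$ is enlarged by the $F$ auxiliary paths $P^1,\dots,P^F$, but arrivals at $o$ are determined by the initial placement of inflow players and are therefore controlled by the assumption $|\Delta_r|\le F$; since the observation is meant to be applied to vertices of $G$ (and is explicitly stated ``in particular the destination $d$''), this case does not concern us. Thus the whole proof reduces to: unit capacity $\Rightarrow$ at most one arrival per in-edge per time step $\Rightarrow$ at most $\Lambda$ arrivals per vertex per time step.
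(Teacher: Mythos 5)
Your proof is correct and is precisely the intended argument: the paper states this observation without proof (calling it trivial), and the evident justification is exactly your counting — after the unit-capacity normalization at most one player exits each edge per time step, so simultaneous arrivals at a vertex are bounded by its in-degree and hence by $\Lambda$. Your remark about the extended network $\bar G$ and the origin is a sensible precaution but not needed for the vertices of $G$ to which the observation is applied.
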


In view of Definition~\ref{def:sp}, the series-parallel network $G$ is obtained from $m$ edges by performing a
{sequence} of $m-1$ series or parallel {connection} operations. Each of these operations
connects two series-parallel subnetworks $G_1$ and $G_2$ of $G$ into a bigger series-parallel subnetwork $G_3$
of $G$. {Fix any such sequence of $m-1$ connection operations that leads to $G$ and let} $\mathfrak S$
be the set of all the {subnetworks} $G_1$, $G_2$ {and} $G_3$ {that appear
during the whole process of the sequence of $m-1$ connection operations.} Then clearly,
\begin{center}
$G\in\mathfrak S$ and $|\mathfrak S|=2m-1$.
\end{center}

Henceforth, we study an arbitrary NE, denoted as {$\bfm\pi=(\Pi_i)_{i\in\Delta}$}, of game {$\Gamma^n$}.
For each {player} $i\in\Delta$ and each vertex $v\in \Pi_i$, we use
$\Delta^v_i:=\{j\in\Delta\,|\,t^v_j(\bfm\pi)=t^v_i(\bfm\pi)\}$ to denote the set of players in $\Delta$ who reach $v$
under $\bfm \pi$ at the same time as $i$ does. Note from Observation~\ref{ob:easy} that
\[
|\Delta^v_i|\le\Lambda\text{ for every }i\in \Delta\text{ and each vertex }v\in \Pi_i.
\]
 Using the observation and global FIFO property in {Theorem} \ref{cor:nobetter'}(iii), we can upper bound the ratio between the {population sizes} in any pair of subnetworks connecting in parallel, as the following lemma states.
\begin{lemma}\label{lem:degree}
Suppose that $G_1,G_2\in\mathfrak S$ are connected in parallel to form a series-parallel
network in $\mathfrak S$. Under $\bfm \pi$ and at any time point $t$, if
$G_1$ and $G_2$ accommodate $n_1$ and $n_2$ players respectively, then $n_i\le2\Lambda
L(2\Lambda +n_{j})\le2m^2(2m+n_{j})$ for $\{i,j\}=\{1,2\}$.
\end{lemma}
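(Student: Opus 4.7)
The plan is to locate a specific ``slow'' player $p^*$ inside $G_1$ whose NE latency between the common endpoints $o^*, d^*$ of $G_3 := G_1 \| G_2$ is forced to be at least $n_1/\Lambda$ by the throughput bottleneck at $d^*$, and then use the NE property to transfer this lower bound to the cost $p^*$ would pay if it switched its path to $G_2$; the structure of $G_2$ will then control the latter in terms of $n_2$.

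Without loss of generality I prove $n_1 \le 2\Lambda L(2\Lambda + n_2)$; the bound $\le 2m^2(2m + n_j)$ follows from $\Lambda, L \le m$. Let $I \subseteq \Delta$ be the set of $n_1$ players currently in $G_1$ at time $t$, and pick $p^* \in I$ arriving at $o^*$ latest under $\bfm\pi$, with lowest-priority tie-break on the entry edge of $o^*$. Write $\tau^* \le t$ for this arrival time and $T^* > t$ for $p^*$'s arrival time at $d^*$. By Global FIFO (Theorem~\ref{cor:nobetter'}(iii)) every member of $I$ arrives at $d^*$ during $[t+1, T^*]$, and by Observation~\ref{ob:easy} arrivals at $d^*$ are capped at $\Lambda$ per step, so $T^* - \tau^* \ge T^* - t \ge \lceil n_1/\Lambda \rceil \ge n_1/\Lambda$. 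Next, let $P^*$ be $p^*$'s earliest-arrival best response in $G_2$ starting at $o^*$ at time $\tau^*$ (existence guaranteed by Lemma~\ref{lem:dp}). Since $\bfm\pi$ is an NE, the corresponding traversal time $\delta$ must satisfy $\delta \ge T^* - \tau^* \ge n_1/\Lambda$, for otherwise $p^*$ would profitably deviate.

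The crux is to upper-bound $\delta$ by $2L(2\Lambda + n_2)$, whence $n_1 \le \Lambda \delta \le 2\Lambda L(2\Lambda + n_2)$ as required. Write $n_2(s)$ for the $G_2$-population at time $s$, and decompose $\delta \le L + W$ into path length plus total queue-wait. The wait on each edge $e \in P^*$ is at most the queue length on $e$ when $p^*$ reaches its tail, bounded by the number of $G_2$-players that can delay $p^*$ on $e$. Using that $P^*$ is the \emph{earliest-arrival} best response, no $o^*$-entrant arriving strictly after $\tau^*$ can overtake $p^*$ at any vertex of $P^*$: if some $q$ entering at $s>\tau^*$ were to reach a vertex $v \in P^*$ before $p^*$, then $p^*$ could replace the prefix of $P^*$ up to $v$ by $q$'s prefix and arrive earlier, contradicting the optimality of $P^*$ against the fixed NE paths. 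So the blockers of $p^*$ lie in the set of $n_2(\tau^*)$ players already in $G_2$ at $\tau^*$, plus at most $\Lambda$ co-entrants at step $\tau^*$, each contributing at most $L$ to $W$, yielding $W \le L(n_2(\tau^*) + \Lambda)$. Flow balance at $o^*$ and $d^*$ (each of throughput $\le \Lambda$) gives $n_2(\tau^*) \le n_2 + \Lambda(t - \tau^*)$, and a counting argument combining Global FIFO with the choice of $p^*$ as the \emph{latest} $I$-entrant shows that every $o^*$-entrant during $(\tau^*, t]$ must sit in $G_2$ at time $t$, which absorbs $\Lambda(t-\tau^*)$ into $n_2$ up to an $O(\Lambda L)$ additive slack. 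Assembling these bounds yields $\delta \le L + W \le 2L(2\Lambda + n_2)$.

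The main obstacle is the no-overtaking claim in the deviation scenario: Theorem~\ref{cor:nobetter'}(iii) applies to the NE itself, whereas the deviation in which only $p^*$ reroutes is not an NE. The resolution invokes the earliest-arrival optimality of $P^*$ against the \emph{fixed} NE paths of all other players, together with priority tie-breaks at merge vertices handled exactly as in Algorithm~\ref{proc:id} and Lemma~\ref{lem:ineq}; modulo those care-taking steps, the remaining computation is linear algebra and the lemma follows.
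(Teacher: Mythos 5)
Your high-level plan (pick the latest $o^*$-entrant still in $G_1$, lower-bound via the throughput cap at $d^*$, compare against a reroute through $G_2$) matches the paper's, but two of your load-bearing steps do not hold as stated. First, Global FIFO (Theorem~\ref{cor:nobetter'}(iii)) orders arrivals at the \emph{final destination} $d$, not at the intermediate vertex $d^*$: a player who reached $o^*$ before $p^*$ may perfectly well reach $d^*$ after $p^*$ and still reach $d$ no later. So ``every member of $I$ arrives at $d^*$ during $[t+1,T^*]$'' is unjustified, and the bound $T^*-\tau^*\ge n_1/\Lambda$ on $p^*$'s own traversal time does not follow. The paper avoids this by lower-bounding the $d'$-arrival times of the $\Lambda$ \emph{latest-to-$d'$} players $H\subseteq S_1$ by pure counting (at most $\Lambda-1$ arrivals at $d'$ per step), with no FIFO needed at that stage. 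Second, and more seriously, the NE condition controls only $t^d_{p^*}$, not $t^{d^*}_{p^*}$: a deviation that brings $p^*$ to $d^*$ earlier need not bring him to $d$ earlier (he can be re-delayed downstream of $d^*$), so ``$\delta\ge T^*-\tau^*$, for otherwise $p^*$ would profitably deviate'' is a non sequitur. The paper's fix is structural: it exhibits $\eta\in H$ reaching $d$ \emph{strictly} earlier than $\zeta$ (hence lying in a strictly earlier batch), and then Lemma~\ref{NEproperty'} guarantees that $\zeta$ cannot preempt $\eta$ at \emph{any} vertex under \emph{any} unilateral deviation, which is what transfers the $d'$ lower bound to the rerouted $\zeta$.

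Your upper bound on the reroute cost has a similar hole. The claim that no $o^*$-entrant arriving after $\tau^*$ can overtake $p^*$ on $P^*$ is exactly the kind of FIFO-in-the-off-equilibrium-routing statement that fails in this model (cf.\ Example~\ref{addaplayer}); earliest-arrival optimality of $P^*$ against the fixed NE paths does not by itself prevent later entrants from blocking $p^*$ after the deviation. This is why the paper splits into two cases: if no player enters $G_2$ during $(t^{o'}_\zeta,\,t+L(2\Lambda+n_2)]$, it can certify that the blockers of the rerouted $\zeta$ lie in $S_2\cup\Delta^{o'}_\zeta\cup\Delta^{d}_\zeta$; otherwise it abandons the deviation argument entirely and instead bounds the \emph{equilibrium} $G_2$-traversal of the first such entrant $\psi$ and compares $t^{d'}_\psi$ with $t^{d'}_h$ for some $h\in H$ via Consecutive Exiting. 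Your ``$O(\Lambda L)$ additive slack'' remark does not substitute for this case analysis. In short, the skeleton is right, but each of the three key inequalities needs the batch/domination machinery rather than FIFO or naive NE reasoning.
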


By {an $o$-$d$ {\em cut} or simply} a {\em cut} of $G$ we mean a set of  edges whose removal from $G$ leaves the
graph unconnected {from $o$ to $d$}. We say that a minimal cut of $G$ is {\em full} at time $t$ if each edge of the cut
has some player on it at time $t$ {(i.e., there is a nonempty queue at each edge). Let $C=\{e_1,\ldots,e_k\}$ and
$C'=\{e'_1,\ldots,e'_k\}$ be two minimum cuts of $G$. We say that $C$ is {\em on the left} of $C'$ if for each $i\in[k]$
there exists an $o$-$d$ path in $G$ which contains $e_i,e'_i$ and visits $e_i$ before $e'_i$.}

{For any $o'$-$d'$ series-parallel network $G'$,} let {$\Xi(G')$} denote the leftmost minimum {$o'$-$d'$} cut of $G'$.
In particular, $|\Xi(G')|$ is the minimum cut size of $G'$. The cut $\Xi(G')$ {partitions} $G'$ into two
parts: the {left part $G'^{l}$ that contains $o'$, and the right $G'^{r}$ that contains} $d'$.

\begin{definition}
Let $\mathfrak F$ denote the set of series-parallel subnetworks $  G_i\in\mathfrak S$
such that there exist  finite integers $F^l_i$ and $F^r_i$ satisfying  {the following two conditions (for any inflow):}
\begin{mylabel}
\item $\Xi(G_i)$ is full as long as $G_i^l$ accommodates more than $F^l_i$ players,\footnote{It means that the full
cut and the accommodation are observed at the same time.} and
\item $G_i^r$ can accommodate at most $F^r_i$ players at any time.
\end{mylabel}
\end{definition}

{First, we can see the set $\mathfrak F$ is {nonempty because} every single edge apparently belongs to $\mathfrak F$.
In the two lemmas below, we are given  $G_1,G_2\in\mathfrak F$, such that $G_i$ is an $o_i$-$d_i$ series-parallel
network  for $i=1,2$.}

\begin{lemma}\label{lem:series1}
If $G_s\in\mathfrak S$ is the combination of $G_1$ and $G_2$ in series, then $G_s\in\mathfrak F$.
\end{lemma}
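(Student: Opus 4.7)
The plan is to distinguish two cases according to whether $|\Xi(G_1)|\leq|\Xi(G_2)|$ or $|\Xi(G_1)|>|\Xi(G_2)|$, as this determines which of the two leftmost minimum cuts becomes $\Xi(G_s)$ and how $G_s$ decomposes as $G_s^l\cup G_s^r$. In the first case $\Xi(G_s)=\Xi(G_1)$ with $G_s^l=G_1^l$ and $G_s^r=G_1^r\cup G_2$; in the second case $\Xi(G_s)=\Xi(G_2)$ with $G_s^l=G_1\cup G_2^l$ and $G_s^r=G_2^r$. For each case I will exhibit finite constants $F_s^l$ and $F_s^r$ assembled from $F_1^l,F_1^r,F_2^l,F_2^r$ and the cut sizes $|\Xi(G_1)|,|\Xi(G_2)|$ witnessing $G_s\in\mathfrak F$.

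In the first case, condition (i) is free with $F_s^l:=F_1^l$, since $G_s^l=G_1^l$, $\Xi(G_s)=\Xi(G_1)$, and $G_1\in\mathfrak F$. The work is to bound the accommodation of $G_s^r=G_1^r\cup G_2$. The piece $G_1^r$ contributes at most $F_1^r$ directly from $G_1\in\mathfrak F$. For $G_2$, I will observe that the per-step inflow at the interface $d_1=o_2$ equals the per-step outflow of $G_1$, and since every such player must have crossed $\Xi(G_1)$ on an edge of unit capacity, this inflow is at most $|\Xi(G_1)|\leq|\Xi(G_2)|$. Hence $G_2\in\mathfrak F$ gives $|G_2^r|\leq F_2^r$, and a one-step induction starting from $|G_2^l|=0$ at time $0$ yields $|G_2^l|\leq F_2^l+|\Xi(G_2)|$: as soon as $|G_2^l|$ exceeds $F_2^l$ the cut $\Xi(G_2)$ becomes full, so exactly $|\Xi(G_2)|$ players leave $G_2^l$ per step while at most $|\Xi(G_2)|$ enter, preventing further growth. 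Setting $F_s^r:=F_1^r+F_2^l+|\Xi(G_2)|+F_2^r$ closes Case~1.

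In the second case I take $F_s^r:=F_2^r$, which handles (ii) at once. For (i) I argue the contrapositive: at any time when $\Xi(G_2)$ is not full, $|G_s^l|$ is bounded. The hypothesis $G_2\in\mathfrak F$ then forces $|G_2^l|\leq F_2^l$, and it remains to bound $|G_1|=|G_1^l|+|G_1^r|$. The per-step inflow into $G_1$ equals that into $G_s$ and is at most $|\Xi(G_s)|=|\Xi(G_2)|<|\Xi(G_1)|$, so $G_1\in\mathfrak F$ yields $|G_1^r|\leq F_1^r$. A parallel induction bounds $|G_1^l|$: once $|G_1^l|>F_1^l$ the cut $\Xi(G_1)$ is full by $G_1\in\mathfrak F$, hence $|\Xi(G_1)|$ players leave $G_1^l$ per step against at most $|\Xi(G_2)|<|\Xi(G_1)|$ entering, so $|G_1^l|$ strictly decreases; combined with the single-step influx bound this gives $|G_1^l|\leq F_1^l+|\Xi(G_2)|$. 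Setting $F_s^l:=F_1^l+|\Xi(G_2)|+F_1^r+F_2^l$ finishes Case~2.

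The main obstacle will be the bookkeeping at the interface $d_1=o_2$. I must argue that queues on edges in $G_2$ carry no physical length constraint, so that $G_2$ cannot mechanically throttle $G_1$'s outflow and the standalone dynamics described by $G_1\in\mathfrak F$ remain valid within $G_s$; conversely, I must verify that the inflow patterns I am feeding to the hypotheses $G_i\in\mathfrak F$ are indeed covered by the ``for any inflow'' clause in the definition of $\mathfrak F$, in particular the possibly bursty per-step flow arriving at $o_2$ in Case~2 where $G_1$ may momentarily release up to $|\Xi(G_1)|>|\Xi(G_2)|$ players in a single step.
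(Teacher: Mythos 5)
Your case split, the identification of $\Xi(G_s)$, $G_s^l$, $G_s^r$ in each case, and the two easy halves ($F_s^l=F_1^l$ in Case~1, $F_s^r=F_2^r$ in Case~2) all match the paper. But the quantitative core of both cases rests on a false rate claim: that the number of players arriving at the interface vertex per time step is bounded by the size of an upstream cut. Crossing $\Xi(G_1)$ at rate at most $|\Xi(G_1)|$ does \emph{not} bound the arrival rate at $d_1=o_2$, because players who crossed the cut at different times can bunch up downstream and reach $d_1$ simultaneously (e.g.\ take $G_1$ to be one edge followed by two parallel paths of lengths $1$ and $2$: here $|\Xi(G_1)|=1$, yet two players entering one step apart can both reach $d_1$ at the same instant). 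The per-step arrival at $d_1$ is only bounded by $|E_1^-(d_1)|\le\Lambda$, which can exceed both $|\Xi(G_1)|$ and $|\Xi(G_2)|$. Consequently your Case~1 induction step ``at most $|\Xi(G_2)|$ enter $G_2^l$ while $|\Xi(G_2)|$ leave'' fails, and the claimed bound $|G_2^l|\le F_2^l+|\Xi(G_2)|$ does not follow (indeed $G_1^r$ can hold up to $F_1^r$ players and release them in a burst, so any correct bound must involve $F_1^r$). The same problem recurs in Case~2 with the inflow at $o_1$: when $G_s$ sits inside a larger $G$, up to $\Lambda$ players can arrive at $o_1$ in one step regardless of $|\Xi(G_s)|$, so ``$|G_1^l|$ strictly decreases once above $F_1^l$'' is unjustified. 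You flagged exactly this burstiness as ``the main obstacle,'' but the proof as written does not overcome it.

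The paper's proof closes the gap in two different ways. In Case~1 it never bounds $G_2^l$ in isolation: it tracks the \emph{combined} region $G_1^r\cup G_2^l$, whose only entrance is the cut $\Xi(G_1)$ itself (where the unit-capacity rate bound $|\Xi(G_1)|$ per step is genuine) and whose only exit is $\Xi(G_2)$; since $G_1\in\mathfrak F$ gives $|G_1^r|\le F_1^r$, having more than $F_1^r+F_2^l$ players in the combined region forces $\Xi(G_2)$ full, and $|\Xi(G_1)|\le|\Xi(G_2)|$ then makes the combined population nonincreasing. In Case~2 a one-step argument cannot work at all, and the paper instead does a global count over the buildup interval: if $G_1^l$ holds $F$ players at a time $t$ when $\Xi(G_2)$ is not full, then reaching that level from $F_1^l$ took at least $(F-F_1^l-F_1^r-F_2^l)/\Lambda$ steps, during every one of which $\Xi(G_1)$ was full and pumped a net surplus of at least $|\Xi(G_1)|-|\Xi(G_2)|\ge1$ player per step into $G_1^r\cup G_2^l$; since that region holds at most $F_1^r+F_2^l$ players, the interval length and hence $F$ are bounded. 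You will need both of these ideas (or equivalents) to make your outline into a proof.
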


\begin{lemma}\label{lem:parallel}
{If  $G_p\in\mathfrak S$ is the combination of $G_1$ and $G_2$ in parallel, then $G_p\in\mathfrak F$.}
\end{lemma}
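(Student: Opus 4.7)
The plan is to proceed by induction on the parallel construction, first establishing the structural identity $\Xi(G_p) = \Xi(G_1)\cup\Xi(G_2)$. Since every $o$-$d$ path of $G_p$ lies entirely inside $G_1$ or entirely inside $G_2$, any $o$-$d$ cut of $G_p$ decomposes into an $o_1$-$d_1$ cut of $G_1$ together with an $o_2$-$d_2$ cut of $G_2$, and the leftmost minimum cut is obtained by taking the leftmost minimum cut in each branch. Consequently $G_p^l = G_1^l\cup G_2^l$ (meeting only at $o$) and $G_p^r = G_1^r\cup G_2^r$ (meeting only at $d$), giving a clean decomposition of the two halves into the two branches.

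Condition (ii) then follows immediately: the edges of $G_p^r$ are the disjoint union of those of $G_1^r$ and $G_2^r$, and by the induction hypothesis $G_i\in\mathfrak F$ each $G_i^r$ carries at most $F^r_i$ players, so setting $F^r_p := F^r_1 + F^r_2$ works. The real work is choosing $F^l_p$ for condition (i). At the instant in question, let $M_i$ denote the number of players on $G_i^l$ and $N_i$ the number on $G_i$, so that $N_i \le M_i + F^r_i$ by the induction hypothesis and $G_p^l$ carries $M_1 + M_2$ players. Assuming by symmetry that $M_1 \ge M_2$, we get $M_1 > F^l_p/2$. I will then invoke Lemma~\ref{lem:degree} on the parallel pair $G_1,G_2\in\mathfrak S$: from $N_1 \le 2m^2(2m + N_2)$, rearrangement gives $N_2 \ge N_1/(2m^2) - 2m$, whence
\[
M_2 \;\ge\; N_2 - F^r_2 \;\ge\; \frac{M_1}{2m^2} - 2m - F^r_2 \;\ge\; \frac{F^l_p}{4m^2} - 2m - F^r_2.
\]
Choosing
\[
F^l_p \;:=\; \max_{i\in\{1,2\}}\bigl\{\,2(F^l_i+1),\;\; 4m^2(F^l_i + F^r_i + 2m + 1)\,\bigr\}
\]
then forces both $M_1 > F^l_1$ and $M_2 > F^l_2$ simultaneously; the symmetric sub-case $M_2 \ge M_1$ is covered by the same formula. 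Once $M_i > F^l_i$ for each $i\in\{1,2\}$, the induction hypothesis $G_i\in\mathfrak F$ guarantees that each $\Xi(G_i)$ is full, and therefore so is $\Xi(G_p) = \Xi(G_1)\cup\Xi(G_2)$, completing the verification of (i).

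The step I expect to be most delicate is the population transfer in the middle paragraph. Lemma~\ref{lem:degree} balances the \emph{full} populations $N_1, N_2$ of the two parallel branches, whereas condition (i) is phrased in terms of the \emph{left-half} populations $M_1, M_2$. The induction hypothesis supplies exactly what is needed to bridge the gap: the right-half bound $F^r_i$ converts between $M_i$ and $N_i$ up to an additive constant, which is precisely why the two parameters $F^l_i$ and $F^r_i$ must be propagated together through the series--parallel construction rather than treated in isolation.
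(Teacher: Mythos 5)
Your proof is correct and follows essentially the same route as the paper's: identify $\Xi(G_p)=\Xi(G_1)\cup\Xi(G_2)$ and $F^r_p=F^r_1+F^r_2$, then use Lemma~\ref{lem:degree} together with the right-half bounds $F^r_i$ to show that a large population in one left half forces a large population in the other, so both subnetwork cuts are simultaneously full. The only difference is presentational (you argue the contrapositive with an explicit formula for $F^l_p$, while the paper bounds the total population in the case where one left half is small), which is logically equivalent.
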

\begin{proof}
{It is clear that $\Xi(G_p)=\Xi(G_1)\cup\Xi(G_2)$, $G^l_p=G^l_1\cup G_2^l$, and $G^r_p=G^r_1\cup G_2^r$. Therefore, $G^r_p$ accommodates at most $F_p^r\equiv F^r_1+F^r_2$ players at any time.}
To prove $G_p\in\mathfrak F$, we only need to show that $\Xi(G_1)\cup\Xi(G_2)$ is full as long as $G^l_1\cup G_2^l$ accommodates more than a certain finite number  of players. As $G_1,G_2\in\mathfrak F$, it suffices to consider the time $t$ when one of $G^l_1$ and $ G_2^l$, say $G^l_1$, accommodates at most $F^l_1$ players. Suppose  that $  G_2^l$ accommodates $F$ players at time~$t$. It follows  from Lemma \ref{lem:degree} that $G_1$  accommodates at least $F/(2m^2)-2m$ players at time $t$. As $G_1\in\mathfrak F$, there are at least $F/(2m^2)-2m-F^r_1$ players inside $G_1^l$ at time $t$. It follows from $F/(2m^2)-2m-F^r_1\le F^l_1$ that $F\le 2m^2(2m+F^l_1+F^r_1)$. \end{proof}

\begin{theorem}\label{thm:queue-length-sp-network}
Let $G$ be a series-parallel network. If $|\Delta_t|\le|\Xi(G)|$ for all $t\ge 1$, then there exists a finite number
such that, for any NE of $\Gamma^n(G)$, the number of players in $G$ is upper bounded by this number, implying that the
latency of any player {at any} NE is upper bounded.
\end{theorem}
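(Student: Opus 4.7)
The plan is to combine structural induction on the series-parallel decomposition with a flow-balance argument on the leftmost minimum cut.

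First, I would show that $G\in\mathfrak F$. Recall that $\mathfrak S$ is built up from $m$ single edges by a sequence of $m-1$ series/parallel connections, and that $G\in\mathfrak S$. Every single edge trivially belongs to $\mathfrak F$ (take $F^l=0$ and $F^r=1$: the single edge is its own leftmost minimum cut, the ``left part'' is empty, and the ``right part'' is a single edge carrying at most one player). By Lemma~\ref{lem:series1} and Lemma~\ref{lem:parallel}, the class $\mathfrak F$ is closed under both series and parallel compositions, so a routine induction along the construction sequence of $G$ gives $G\in\mathfrak F$. Hence there exist finite constants $F^l$ and $F^r$ such that $\Xi(G)$ is full whenever $G^l$ holds more than $F^l$ players and $G^r$ accommodates at most $F^r$ players at any time.

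Next, I would bound the population in $G^l$ using the inflow condition $|\Delta_t|\le|\Xi(G)|$. Let $N(t)$ denote the number of players inside $G^l$ at time $t$. At each time step the inflow into $G^l$ (through the origin $o$) is at most $|\Xi(G)|$, while whenever $\Xi(G)$ is full, exactly $|\Xi(G)|$ players cross $\Xi(G)$ and leave $G^l$, because each edge has unit capacity and unit transit time and the head of each edge of $\Xi(G)$ lies in $G^r$. Therefore, whenever $N(t)>F^l$, we have outflow at least inflow, so $N(t{+}1)\le N(t)$; and whenever $N(t)\le F^l$, the worst-case increase gives $N(t{+}1)\le F^l+|\Xi(G)|$. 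In either case,
\[
N(t)\le F^l+|\Xi(G)|\le F^l+m \quad\text{for all }t\ge 0.
\]
Combining with $G^r$ holding at most $F^r$ players at any time, the total population inside $G$ is at most $N^*:=F^l+F^r+m$, a finite constant depending only on $m$.

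Finally, I would derive the latency bound. Fix any player $i$ taking path $\Pi_i$ from $o$ to $d$ under the NE. The path has length at most $L\le m$, and on each of its edges $i$ waits behind at most $N^*-1$ other players, since the queue length on any edge is dominated by the global population bound $N^*$. Hence the latency of $i$ satisfies $t^d_i(\bfm\pi)-r_i\le m+m\cdot N^*=m(1+N^*)$ where $r_i$ is his entrance time at $o$, giving the claimed uniform latency bound. The main obstacle in this argument is the clean flow-balance step on $\Xi(G)$: it depends critically on $\Xi(G)$ being the leftmost minimum cut (so its heads actually lie in $G^r$ and each full edge contributes to outflow) and on the tight inflow cap $|\Delta_t|\le|\Xi(G)|$, which together ensure that excess mass in $G^l$ cannot accumulate beyond $F^l+|\Xi(G)|$.
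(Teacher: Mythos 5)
Your proposal is correct and follows essentially the same route as the paper: establish $G\in\mathfrak F$ by induction along the series-parallel construction via Lemmas~\ref{lem:series1} and~\ref{lem:parallel}, then use the flow-balance/monotonicity argument on $\Xi(G)$ (inflow at most $|\Xi(G)|$ versus outflow exactly $|\Xi(G)|$ once the cut is full) to cap the population of $G^l$ at $F^l+|\Xi(G)|$, and add $F^r$ for $G^r$. Your final latency step just spells out more explicitly the bound the paper states in one line.
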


\begin{proof}
Since $|\mathfrak S|=2m-1$, combining Lemmas~\ref{lem:series1} and \ref{lem:parallel}, an inductive argument shows that $\mathfrak S=\mathfrak F$. In particular, $G\in\mathfrak F$ says that $\Xi(G)$ is full as long as $G^l$ accommodates more than $F^l$ players,  and $G^r$ can accommodate at most $F^r$ players at any time, where $F^l$ and $F^r$ are finite numbers. Similar to the argument used in the proof of Lemma~\ref{lem:series1}, we  consider  any time point $t$ such that $G^l$ accommodates at most $F^l$ players at time $t$, and more than $F^l$ players at time $t+1$. Then $G^l$ accommodates at most $F^l+|\Delta_{t+1}|\le F^l+|\Xi(G)|$ players and $\Xi(G)$ is full at time $t+1$. Thus  $|\Xi(G)|$ players   leave $G^l$ at time $t+2$ while   $|\Delta_{t+2}|$ ($\le|\Xi(G)|$) players enter $G^l$. It follows  that the number of players inside $G^l$ is nonincreasing unless the number decreases below $F^l+1$. Therefore, at any time $G^l$  can accommodate at most $F^l+|\Xi(G)|$ players, and $G$ can accommodate at most $F^l+|\Xi(G)|+F^r$ players.  {This also means the total latency for any player {travelling} from $o$ to $d$ under any NE is bounded by a finite number since any queue length is upper bounded by the finite number $F^l+m+F^r$.}
\end{proof}

{If we take the  average traveling time from $o$ to $d$ for all players  as a measure of the social welfare as in \cite{sst16}, then the boundedness of the queues implies the PoA of game $\Gamma^n(G)$ is also bounded. 
By Theorem \ref{thm:equilibrium-relation}, this means that the PoS of game $\Gamma(G)$ is also bounded.} 

\medskip {In} closing this section, {we establish boundedness of any {\em SPE queue lengths} for another type of networks.}

\begin{theorem}\label{thm:queue-length-SPE}
Given a dynamic routing game $\Gamma$ {on} an acyclic network $G=(V,E)$ with origin $o$ and destination
$d$, suppose $|E^+(v)|\ge |E^-(v)|$ for any vertex $v\in V\sm\{o,d\}$. Let $\sigma$ be {any} 
SPE of game
$\Gamma(G)$. If the inflow size {$|\Delta_r|$ never exceeds} the size of a minimum $o$-$d$ cut of $G$,  then
there exists a finite number $U$ depending only on $|E|$ such that $t_{i}(\sigma)\le U$ all $ i\in \Delta$.
\end{theorem}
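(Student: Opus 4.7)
The plan is to establish the bound via a defender-strategy argument, reducing a statement about every SPE to the existence of a single ``safe'' deviation. Since $\sigma$ is an SPE, for every player $i\in\Delta$ and every history in which $i$ has not yet departed, $i$'s realized exit time is at most his exit time under $(\tau_i,\sigma_{-i})$ for any $\tau_i\in\Sigma_i$. Consequently, to bound $t_i(\sigma)$ uniformly, it suffices to exhibit a single strategy $\tau_i\in\Sigma_i$, and a number $U$ depending only on $|E|$, such that the exit time of $i$ under $(\tau_i,\sigma'_{-i})$ is at most $r_i+U$ for \emph{any} partial profile $\sigma'_{-i}$ of the remaining players.

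I will construct $\tau_i$ by backward induction on the (reverse) topological order of $G$. For each vertex $v\in V$ I define a bound $U(v)$ with $U(d)=0$, and the rule of $\tau_i$ at $v$ is: pick the outgoing edge $vw\in E^+(v)$ whose current queue is shortest, with a canonical tie-breaking based on $\prec_w$ and on the recursive bound $U(w)$. Inductively, if $i$ reaches $v$ at time $t$ then he will reach $d$ by time $t+U(v)$ with $U(v)\le M+1+\max_{w:vw\in E^+(v)}U(w)$, where $M$ is a constant to be bounded below. Iterating along any $o$-$d$ path (of length at most $|E|$) then gives $U:=U(o)\le (M+1)\cdot|E|$, which depends only on $|E|$.

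The pivotal lemma is that, for every vertex $v\ne d$, the minimum queue length on its outgoing edges, $q^*(v,t):=\min_{w:vw\in E^+(v)}|Q^t_{vw}|$, is bounded by a constant $M$ depending only on $|E|$. The argument uses the aggregate queue $A_v(t):=\sum_{w:vw\in E^+(v)}|Q^t_{vw}|$ together with a dichotomy. If some outgoing edge of $v$ is currently empty (Case~2), then $q^*(v,t)=0$, so $i$ incurs zero waiting at $v$; and although $A_v$ may still grow in Case~2, its growth is incurred only on edges $i$ does not enter. If instead all outgoing edges of $v$ are non-empty (Case~1), then the outflow rate from $\cup_w Q_{vw}$ equals $|E^+(v)|$, which by hypothesis is at least the inflow rate $|E^-(v)|$ (bounded by the incoming-edge capacity), so $A_v$ is non-increasing in Case~1. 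Hence the only way $q^*$ can be positive is at a Case~2-to-Case~1 transition, and a step-by-step bookkeeping shows that the single empty edge being filled at the transition acquires at most $|E^-(v)|$ new arrivals in that step, so $q^*$ at the transition is at most $|E^-(v)|\le |E|$; thereafter $A_v$ does not grow, so $q^*$ does not increase either.

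The main obstacle lies precisely in justifying the transition bookkeeping under a fully adversarial $\sigma'_{-i}$: in principle the adversary could try to drive $A_v$ to a huge value during Case~2 and then abruptly flood $v$ so that all outgoing queues are simultaneously large. I plan to rule this out by combining the per-step cap $|E^-(v)|$ on arrivals at $v$ with an inductive hypothesis traced back through the topological order: because $|E^+(u)|\ge|E^-(u)|$ at every intermediate upstream vertex $u$, sustaining enough inflow to $v$ to build all outgoing queues in parallel would require the adversary to concurrently occupy its incoming edges, which in turn forces empty outgoing edges at upstream vertices that $i$ will exploit at earlier steps of his path. This upstream propagation, together with the inflow cap $|\Delta_r|\le|E^+(o)|$ that follows from the min-cut condition, bounds the adversary's ability to load $v$ and yields the uniform constant $M$, closing the induction and the proof.
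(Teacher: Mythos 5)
Your reduction is sound as far as it goes ($t_i(\sigma)\le t_i(\tau_i,\sigma_{-i})$, so a bound on $i$'s exit time under $(\tau_i,\sigma'_{-i})$ valid for \emph{all} $\sigma'_{-i}$ would suffice), but the pivotal lemma it rests on is false: no strategy $\tau_i$ can guarantee a bounded arrival time against an arbitrary adversarial profile $\sigma'_{-i}$, and in particular $q^*(v,t)=\min_{w:vw\in E^+(v)}|Q^t_{vw}|$ is \emph{not} bounded by a constant depending only on $|E|$. Take $G$ with two parallel $o$-$v$ edges and two parallel $v$-$d$ edges $f_1,f_2$ (minimum cut $2$, $|E^+(v)|=|E^-(v)|=2$) and inflow $2$ per step. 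The adversary routes one player per step onto each $o$-$v$ edge, so two players reach $v$ every step; for $T$ steps it sends all of them to $f_1$, so $|Q_{f_1}|$ grows to about $T$ while $f_2$ stays empty (this is your Case~2, in which you concede $A_v$ may grow); it then switches and sends everyone to $f_2$. Since each non-empty edge releases exactly one player per step, $f_2$ now gains one per step while $f_1$ drains by one per step, and after roughly $T/2$ further steps both outgoing queues have length about $T/2$; any player who must cross $v$ at that moment waits about $T/2$ no matter which edge he picks. Your transition bookkeeping misses exactly this: after the Case-2-to-Case-1 transition the adversary keeps feeding all $|E^-(v)|$ arrivals into the short queue, so the minimum of the two queues climbs to roughly $A_v/2$, and $A_v$ was made arbitrarily large beforehand. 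The ``upstream propagation'' you invoke cannot repair it, because the adversary never exceeds the per-step arrival cap $|E^-(v)|$ at $v$, so no upstream constraint is violated.

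The missing idea is that you must exploit the selfishness of the \emph{other} players, not only of $i$. The paper's proof inducts backward along the topological order: once all queues on edges downstream of $u$ are bounded, every $v$-$d$ path from an out-neighbor $v$ of $u$ has cost at most some $U_v$, and then the SPE property applied to the players queuing at $u$ forces any two outgoing queues of $u$ to differ by at most $U_u:=1+\max_v U_v$ (a player at the tail of a much longer queue would profitably switch to the shorter one and still finish within $U_v$ afterwards). Only after this equilibrium-driven balance is in place does the degree condition $|E^+(u)|\ge|E^-(u)|$ enter: once the total outgoing queue exceeds $(|E^+(u)|-1)U_u$ every outgoing edge is non-empty, so the outflow matches or exceeds the inflow and the total never exceeds $(|E^+(u)|-1)U_u+|E^-(u)|$. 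In the adversarial scenario above the other players are not best-responding, which is precisely why the balance step, and hence your constant $M$, fails.
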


\section{Concluding remarks}
\label{sec:conclusion}
In this paper, we have studied an atomic network congestion game of discrete-time dynamic traffic.
This is a relatively unexplored area in the study of congestion games. The most prominent feature
of our model is the great flexibility agents enjoy so as to make online decisions at all intermediary
vertices and, accordingly, SPE serves as the default solution concept. We have shown that this more
flexible model has close connections with the corresponding game of normal form, which is 
{more often}
studied in the literature. We have identified many surprisingly nice properties of the NE flows of
the latter model, the equivalence between NEs and strong NEs and a global FIFO, to name a few.

This paper is our first attempt in understanding the consequences of the introduction of agents'
flexibility of online decision making in dynamic traffic games. Many interesting problems are widely open.
For example, {is there} an upper bound on the SPE (or NE) queue lengths for general networks? What
are more accurate bounds on PoA and PoS  w.r.t.\ either NEs or SPEs? {Does a long-run steady
state exist} when the inflow is 
{constant} or seasonal? How efficient is this steady state if it does exist?
What if agents have multi-origins and multi-destinations? What if agents are allowed to choose their departure
times? Exploring these problems will undoubtedly help us better understand atomic games of dynamic traffic.

\bibliography{traffic}

\begin{thebibliography}{10}

\bibitem{au09}
Elliot Anshelevich and Satish Ukkusuri.
\newblock Equilibria in dynamic selfish routing.
\newblock In {\em International Symposium on Algorithmic Game Theory}, pages
  171--182. Springer, 2009.

\bibitem{a59}
Robert~J Aumann.
\newblock Acceptable points in general cooperative n-person games.
\newblock {\em Contributions to the Theory of Games (AM-40)}, 4:287, 1959.

\bibitem{bf15}
Umang Bhaskar, Lisa Fleischer, and Elliot Anshelevich.
\newblock A stackelberg strategy for routing flow over time.
\newblock {\em Games and Economic Behavior}, 92:232--247, 2015.

\bibitem{cc15}
Roberto Cominetti, Jos{\'e} Correa, and Omar Larr{\'e}.
\newblock Dynamic equilibria in fluid queueing networks.
\newblock {\em Operations Research}, 63(1):21--34, 2015.

\bibitem{d98}
Carlos~F Daganzo.
\newblock Queue spillovers in transportation networks with a route choice.
\newblock {\em Transportation Science}, 32(1):3--11, 1998.

\bibitem{fov08}
Babak Farzad, Neil Olver, and Adrian Vetta.
\newblock A priority-based model of routing.
\newblock {\em Chicago Journal of Theoretical Computer Science}, 1, 2008.

\bibitem{fs07}
Lisa Fleischer and Martin Skutella.
\newblock Quickest flows over time.
\newblock {\em SIAM Journal on Computing}, 36(6):1600--1630, 2007.

\bibitem{ft98}
Lisa Fleischer and {\'E}va Tardos.
\newblock Efficient continuous-time dynamic network flow algorithms.
\newblock {\em Operations Research Letters}, 23(3):71--80, 1998.

\bibitem{ff62}
D.~R. Ford and D.~R. Fulkerson.
\newblock {\em Flows in Networks}.
\newblock Princeton University Press, Princeton, NJ, USA, 1962.

\bibitem{ff58}
Lester~R Ford and Delbert~Ray Fulkerson.
\newblock Constructing maximal dynamic flows from static flows.
\newblock {\em Operations Research}, 6(3):419--433, 1958.

\bibitem{fbstw93}
Terry~L. Friesz, David Bernstein, Tony~E. Smith, Roger~L. Tobin, and B.~W. Wie.
\newblock A variational inequality formulation of the dynamic network user
  equilibrium problem.
\newblock {\em Operations Research}, 41(1):179--191, 1993.

\bibitem{hpsk16}
Tobias Harks, Britta Peis, Daniel Schmand, and Laura~Vargas Koch.
\newblock {Competitive Packet Routing with Priority Lists}.
\newblock In Piotr Faliszewski, Anca Muscholl, and Rolf Niedermeier, editors,
  {\em 41st International Symposium on Mathematical Foundations of Computer
  Science (MFCS 2016)}, volume~58 of {\em Leibniz International Proceedings in
  Informatics (LIPIcs)}, pages 49:1--49:14, Dagstuhl, Germany, 2016. Schloss
  Dagstuhl--Leibniz-Zentrum fuer Informatik.

\bibitem{hk81}
Chris Hendrickson and George Kocur.
\newblock Schedule delay and departure time decisions in a deterministic model.
\newblock {\em Transportation Science}, 15(1):62--77, 1981.

\bibitem{hm09}
Martin Hoefer, Vahab~S Mirrokni, Heiko R{\"o}glin, and Shang-Hua Teng.
\newblock Competitive routing over time.
\newblock In {\em International Workshop on Internet and Network Economics},
  pages 18--29. Springer, 2009.

\bibitem{hm11}
Martin Hoefer, Vahab~S Mirrokni, Heiko R{\"o}glin, and Shang-Hua Teng.
\newblock Competitive routing over time.
\newblock {\em Theoretical Computer Science}, 412(39):5420--5432, 2011.

\bibitem{k12}
Ronald Koch.
\newblock {\em Routing games over time}.
\newblock Ph.d. thesis, Technische Universit\"at Berline, 2012.

\bibitem{ks09}
Ronald Koch and Martin Skutella.
\newblock Nash equilibria and the price of anarchy for flows over time.
\newblock In {\em International Symposium on Algorithmic Game Theory}, pages
  323--334. Springer, 2009.

\bibitem{ks11}
Ronald Koch and Martin Skutella.
\newblock Nash equilibria and the price of anarchy for flows over time.
\newblock {\em Theory of Computing Systems}, 49(1):71--97, 2011.

\bibitem{kp99}
Elias Koutsoupias and Christos Papadimitriou.
\newblock Worst-case equilibria.
\newblock In {\em Annual Symposium on Theoretical Aspects of Computer Science},
  pages 404--413. Springer, 1999.

\bibitem{km15}
Janardhan Kulkarni and Vahab Mirrokni.
\newblock Robust price of anarchy bounds via lp and fenchel duality.
\newblock In {\em Proceedings of the Twenty-sixth Annual ACM-SIAM Symposium on
  Discrete Algorithms}, SODA '15, pages 1030--1049, Philadelphia, PA, USA,
  2015. Society for Industrial and Applied Mathematics.

\bibitem{mls10}
Martin Macko, Kate Larson, and {\v{L}}ubo{\v{s}} Steskal.
\newblock {\em Braess's Paradox for Flows over Time}, pages 262--275.
\newblock Springer Berlin Heidelberg, Berlin, Heidelberg, 2010.

\bibitem{mw10}
Fr\'ed\'eric Meunier and Nicolas Wagner.
\newblock Equilibrium results for dynamic congestion games.
\newblock {\em Transportation Science}, 44(4):524--536, 2010.

\bibitem{pz01}
Srinivas Peeta and Athanasios~K Ziliaskopoulos.
\newblock Foundations of dynamic traffic assignment: The past, the present and
  the future.
\newblock {\em Networks and Spatial Economics}, 1(3):233--265, 2001.

\bibitem{p90}
A.~B. Philpott.
\newblock Continuous-time flows in networks.
\newblock {\em Math. Oper. Res.}, 15(4):640--661, October 1990.

\bibitem{r07}
Tim Roughgarden.
\newblock Routing games.
\newblock {\em Algorithmic Game Theory}, 18:459--484, 2007.

\bibitem{rt02}
Tim Roughgarden and {\'E}va Tardos.
\newblock How bad is selfish routing?
\newblock {\em Journal of the ACM}, 49(2):236--259, 2002.

\bibitem{sst16}
Marco Scarsini, Marc Schr{\"o}der, and Tristan Tomala.
\newblock Dynamic atomic congestion games with seasonal flows.
\newblock {\em HEC Paris Research Paper No. ECO/SCD-2013-1016. Available at
  SSRN: https://ssrn.com/abstract=2278203 or
  http://dx.doi.org/10.2139/ssrn.2278203}, 2016.

\bibitem{ss03}
Andreas~S Schulz and NS~Moses.
\newblock On the performance of user equilibria in traffic networks.
\newblock In {\em Proc. 14th ACM-SIAM Symposium on Discrete Algorithms}, pages
  12--14, 2003.

\bibitem{s09}
Martin Skutella.
\newblock An introduction to network flows over time.
\newblock In {\em Research Trends in Combinatorial Optimization}, pages
  451--482. Springer, 2009.

\bibitem{v69}
William~S Vickrey.
\newblock Congestion theory and transport investment.
\newblock {\em The American Economic Review}, 59(2):251--260, 1969.

\bibitem{w52}
John~Glen Wardrop.
\newblock Road paper. some theoretical aspects of road traffic research.
\newblock In {\em ICE Proceedings: engineering divisions}, volume~1, pages
  325--362. Thomas Telford, 1952.

\bibitem{whk14}
TL~Werth, M~Holzhauser, and SO~Krumke.
\newblock Atomic routing in a deterministic queuing model.
\newblock {\em Operations Research Perspectives}, 1(1):18--41, 2014.

\bibitem{y71}
Samuel Yagar.
\newblock Dynamic traffic assignment by individual path minimization and
  queuing.
\newblock {\em Transportation Research}, 5(3):179--196, 1971.

\end{thebibliography}

\medskip
\begin{appendix}
\begin{center}
{\bf \Large {Appendix}}
\end{center}

\section{Details in Section 3}

\subsection{Iterative dominations}
To facilitate our discussions, the proof of Lemma \ref{lem:ineq} in particular, we introduce several notations. Given any (directed) path $P$ in $\bar{G}$, and vertices $a,b$ in $P$ such that $a$ is passed no later than $b$ by path
$P$, we use $P[a,b]$ to denote the sub-path of $P$ from $a$ to $b$. We write $P(a,b]=P[a,b]\sm\{a\}$,
$P[a,b)=P[a,b]\sm \{b\}$ and $P(a,b)=P[a,b]\sm \{a,b\}$.

Given any feasible configuration $c_r$ at time $r$, let   players $1,2,\ldots$ of $\Delta(c_r)$ be as  indexed and path profile $\bfm\pi=(\Pi_i)_{i\in\Delta(c_r)}$ be as computed in Algorithm \ref{proc:id}. For any player indices $i,j$ with $i<j$ and any vertex   $v\in\bar G$, let 
{\begin{equation}\label{eq:deftau}
\tau_j^v\llbracket i\rrbracket:=\min\{t_j^v(\bfm\pi_{[i]},R_j)\,|\,R_j\in\mathcal P[o_j(c_r),d]\}\end{equation}}
denote the value {$\tau^v_j$  computed for player $j$} in Step 3 at the $(i+1)$th iteration of Algorithm \ref{proc:id}, {i.e.,} the earliest time for player
$j$ to reach vertex $v$ starting from edge $e_j(c_r)$, based only on the partial routing $\bfm\pi_{[i]}$ of players in $[i]$.

\begin{lemma}[Restatement of Lemma \ref{lem:ineq}]
Let player indices $i,j$ and player subset $S$ satisfy $j\in S\subseteq\Delta(c_r)\sm[i-1]$. Then
\begin{equation}\label{eq:conclusion}
t_i^v(\bfm\pi_{[i]}, \bfm p_{S\,\sm\,\{i\}})=\tau_i^v\llbracket i-1\rrbracket \le t_j^v(\bfm\pi_{[i-1]},
\bfm p_S)\end{equation} holds for every vertex $v\in\Pi_i$ and every path profile $\bfm p$ of $\Gamma^n(c_r)$.
\end{lemma}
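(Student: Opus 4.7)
The plan is to prove the lemma by strong induction on $i$, with the hypothesis (IH) that the claim already holds for every $i'<i$. The immediate consequence of (IH) we will exploit is that for each $i'\in[i-1]$ and every $v\in\Pi_{i'}$, the arrival time of player $i'$ at $v$ equals $\tau_{i'}^v\llbracket i'-1\rrbracket$ under \emph{any} routing in which players $[i']$ follow $\bfm\pi_{[i']}$. In particular, from the point of view of any player in $\Delta(c_r)\setminus[i-1]$, the players in $[i-1]$ behave as fixed, predictable obstacles whose queueing pattern is completely determined by $\bfm\pi_{[i-1]}$.

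The argument then splits in two. First, I would establish the inequality
\[
\tau_i^v\llbracket i-1\rrbracket\ \le\ t_j^v(\bfm\pi_{[i-1]},\bfm p_S)\qquad\text{for all }j\in S,\ v\in\Pi_i,\ \bfm p,\tag{$\star$}
\]
by contradiction. Assume $(\star)$ fails; among all offending pairs pick one minimizing $t^*:=t_j^v(\bfm\pi_{[i-1]},\bfm p_S)$. The minimality, together with (IH), lets me extract from $j$'s realized trajectory a path $R_j^{\#}\in\mathcal P[o_j(c_r),v]$ such that, with only the obstacles $\bfm\pi_{[i-1]}$ present, player $j$ using $R_j^{\#}$ still reaches $v$ by time $t^*$ and enters $v$ through an edge of priority no less than the last edge of $\Pi_i$ at $v$. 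Appending $\Pi_i[v,d]$ produces an $o_j(c_r)$-$d$ path $R_j:=R_j^{\#}\cup\Pi_i[v,d]$ along which $j$ (again with only $[i-1]$ present) reaches $d$ either strictly before $\tau_i^d\llbracket i-1\rrbracket$ or at the same time but via a path of strictly higher lexicographic edge-priority. In either scenario, when Algorithm~\ref{proc:id} made its $i$-th selection, the candidate $(j,R_j)$ would have been preferred to $(i,\Pi_i)$ in Step~5--7, contradicting the algorithm's actual choice. This establishes $(\star)$.

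Next I would deduce the equality $t_i^v(\bfm\pi_{[i]},\bfm p_{S\setminus\{i\}})=\tau_i^v\llbracket i-1\rrbracket$. By definition of $\Pi_i$, the value $\tau_i^v\llbracket i-1\rrbracket$ is achieved when only $[i-1]$ are present. Now $(\star)$ says that in the presence of the extra players from $S\setminus\{i\}$ (and, by the same token, of all players in $\Delta(c_r)\setminus[i])$, no one can reach $v$ earlier than $\tau_i^v\llbracket i-1\rrbracket$. By the highest-priority choice in Step~7 of Algorithm~\ref{proc:id}, the incoming edge of $\Pi_i$ at any $v\in\Pi_i$ has the highest priority among all edges via which some player in $\Delta(c_r)\setminus[i-1]$ could reach $v$ at that earliest time. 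Hence at every edge of $\Pi_i$, player $i$ is never overtaken in a queue by players outside $[i]$, so his traversal along $\Pi_i$ is identical to the one that realizes $\tau_i^v\llbracket i-1\rrbracket$.

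The main obstacle is the construction of $R_j^{\#}$ in the contradiction step. The queueing dynamics are non-monotone in the player set---as Example~\ref{addaplayer} shows, removing a player can delay someone else---so it is not automatic that $j$ can still hit time $t^*$ after we strip away all players outside $[i-1]$. The delicate point is to walk chronologically along $j$'s realized trajectory and, at each edge, use the minimality of $t^*$ together with (IH) to argue that any non-$[i-1]$ player queueing ahead of $j$ must be violating $(\star)$ less severely (and hence not at all); this allows us to either keep $j$ on his realized edge or reroute him through an equally-early, equally-prioritized alternative until he reaches $v$. Once this edge-by-edge substitution is carried out, the rest of the argument runs as sketched above.
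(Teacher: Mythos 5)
Your overall strategy coincides with the paper's: induction on the index $i$ produced by Algorithm~\ref{proc:id}, using the induction hypothesis to freeze the players in $[i-1]$ as invariant obstacles, and then deriving a contradiction with the algorithm's selection rule via the path surgery $P_j[o_j(c_r),v]\cup\Pi_i[v,d]$. Your equality argument is also essentially the paper's (which takes the first vertex of $\Pi_i$ at which $i$ is delayed and the first edge on which he is overtaken, and shows the overtaker would have been preferred by the algorithm); your version leaves implicit that a player reaching a vertex of $\Pi_i$ at the ideal time via a higher-priority edge can be extended along $\Pi_i$ to a candidate achieving the minimum at $d$ as well --- which is what actually places him in the surviving set $D'$ that Step~7 compares --- and it omits the Step~11 tie-break for players queued on the same starting edge. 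Both of these are minor.

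The one step whose justification would fail as written is your resolution of the ``main obstacle,'' i.e., the construction of $R_j^{\#}$. You propose to use the minimality of $t^*$ to argue that any non-$[i-1]$ player queueing ahead of $j$ ``must be violating $(\star)$ less severely (and hence not at all).'' But $(\star)$ only constrains arrival times at vertices of $\Pi_i$: a player blocking $j$ at a vertex off $\Pi_i$ violates nothing, so the minimality of $t^*$ gives you no leverage over him, and the edge-by-edge substitution stalls. Fortunately the step does not need this machinery. Once everyone except $[i-1]$ and $j$ is deleted, the induction hypothesis makes the schedule of $[i-1]$ literally invariant, so the deletion can only remove players from the queues ahead of $j$ along his fixed path $P_j$; the non-monotonicity of Example~\ref{addaplayer} requires a third surviving player to be sped up and then block $j$, and no such player remains. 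A short induction along $P_j$ then yields $\tau_j^v\llbracket i-1\rrbracket\le t_j^v(\bfm\pi_{[i-1]},P_j)\le t_j^v(\bfm\pi_{[i-1]},\bfm p_S)$, which is exactly how the paper obtains the inequality --- it then gets $\tau_i^v\llbracket i-1\rrbracket\le\tau_j^v\llbracket i-1\rrbracket$ directly from the algorithm's choice, with no minimal-counterexample argument needed. With that repair your proof goes through and is, in substance, the paper's.
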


 \begin{proof}We prove  by induction on $i$. Consider first the base case $i=1$, whose proof is quite similar to the general case. First, $t_1^v(\bfm\pi_{[1]}, \bfm p_{S\,\sm\,\{1\}})\ge \tau_1^v\llbracket 0\rrbracket$ is apparent, because $\tau_1^v\llbracket 0\rrbracket$ is the shortest path length without waiting cost from $o_1(c_r)$ to $v$.
To see that $t_1^v(\bfm\pi_{[1]}, \bfm p_{S\,\sm\,\{1\}})=\tau_1^v\llbracket 0\rrbracket$, it suffices to prove that player 1 never queues on any edge under routing $(\bfm\pi_{[1]}, \bfm p_{S\,\sm\,\{1\}})$. Suppose on the contrary that player 1 queues after some player $k\in S$ on edge $xy$, and let $xy$ be the first of such edge on $\Pi_1$. Then, under  $(\bfm\pi_{[1]}, \bfm p_{S\,\sm\,\{1\}})$, either player $k$ enters $xy$ earlier than {player 1} or they enter $xy$ at the same time but player $k$ comes from an edge with a higher priority.  Define a path of $k$ as $X_k:=P_k[o_k(c_r),y]\cup \Pi_1[y,d]\in \mathcal P[o_k(c_r),d]$. Then, {$t_k^w(X_k)\leq t_1^w(\Pi_1)$ for all vertices $w\in \Pi_1[y,d]$. {Note that $\tau_1^w\llbracket 0\rrbracket=t_1^w(\bfm\pi_{[1]}, \bfm p_{S\,\sm\,\{1\}})$ and  $\tau_k^w\llbracket 0\rrbracket\leq t_k^w(\bfm\pi_{[1]}, \bfm p_{S\,\sm\,\{1\}})$ for all $w\in \Pi_1[y,d]$. Then, either (i) $\tau_1^x\llbracket 0\rrbracket>\tau_k^x\llbracket 0\rrbracket$, or (ii) $\tau_1^x\llbracket 0\rrbracket=\tau_k^x\llbracket 0\rrbracket$ and via $X_k$ player $k$ is able to enter $xy$ from an edge with a higher priority and follow the remaining path of $\Pi_1$,}} or (iii) $e_1(c_r)=e_k(c_r)=xy$ and player 1 queues after $k$ on this edge, all contradicting the choice of player 1.
This finishes the first part of (\ref{eq:conclusion}) for $i=1$. The second part of (\ref{eq:conclusion}) for $i=1$ is obvious, whether $j=i$ or not.

Suppose now $i\ge 2$ and (\ref{eq:conclusion}) is valid for smaller values. {This means that, as long as {players in $[i-1]$ follow $\bfm\pi_{[i-1]}$, they} will never be affected by other players; in fact, no player $j\ge i$ can reach any vertex earlier than any one in $[i-1]$.}  This also means that, as long as  {players in $[i-1]$ follow $\bfm\pi_{[i-1]}$,  they will} exert {\em invariant influences} on the movements of other players: the set $Q_e^s\cap [i-1]$ depends only on time $s\ge r$ and edge $e$ but not on the choices of players in $S\sm [i-1]$; in addition, players in $Q_e^s\cap [i-1]$  {(if nonempty)} always queue before other players. It is {this} invariant influence property that makes the proof of the general case almost the same as the base case, as demonstrated below.

{First, due to the above invariant influences, it's not hard to see that $t_i^v(\bfm\pi_{[i]}, \bfm p_{S\,\sm\,\{i\}})\geq \tau_i^v\llbracket i-1\rrbracket$: players in $S\sm [i]$ cannot speed up $i$, because the latency caused by players in $[i-1]$ alone is always $\tau_i^v\llbracket i-1\rrbracket$, regardless of the choices of  $S\sm [i]$.
Combining $\tau_i^v\llbracket i-1\rrbracket\le\tau_j^v\llbracket i-1\rrbracket$ {(because otherwise, using $X_j:=P_j[o_j(c_r),v]\cup \Pi_i[v,d]\in\mathcal P[o_j(c_r),d]$, player $j$ 
{would be} able to reach $v$ earlier than $i$ does and take the remaining part of $\Pi_i$, contradicting the choice of $i$ and $\Pi_i$)} and $\tau_j^v\llbracket i-1\rrbracket \le t_j^v(\bfm\pi_{[i-1]},
\bfm p_S)$ (due to the invariant influences from $[i-1]$) also gives the inequality part of (\ref{eq:conclusion}).}



{So it remains to show that $t_i^v(\bfm\pi_{[i]}, \bfm p_{S\,\sm\,\{i\}})\leq \tau_i^v\llbracket i-1\rrbracket$, i.e., players in $S\,\sm\,\{i\}$ will not slow down $i$.  Suppose 
  {on} the contrary that
$t_i^v(\bfm\pi_{[i]}, \bfm p_{S\,\sm\,\{i\}}) > \tau_i^v\llbracket i-1 \rrbracket=t^v_i(\bfm\pi_{[i]})$
 for some vertex $v\in \Pi_i$. Let $v$ be the first such
vertex along $\Pi_i$,  indicating that
\begin{mylabel}
\item[(1)]  $t_i^{w}(\bfm\pi_{[i]}, \bfm p_{S\,\sm\,\{i\}})
           =\tau_i^{w}\llbracket i-1 \rrbracket$  for every vertex ${w}\in \Pi_i[o_i(c_r),v)$.
\end{mylabel}}

{In view of  the invariant influences from players in $[i-1]$, there must exist some player $k\in S\sm[i]$ and an edge $xy\in \Pi_i[o_i(c_r),v]$ such that $k$ enters $xy$ earlier than $i$ does under  $(\bfm\pi_{[i]}, \bfm p_{S\,\sm\,\{i\}})$. Let $xy$ be the {\em first} such edge along $\Pi_i[o_i(c_r),v]$. By (1), we have
\begin{mylabel}
\item[(2)] under routing $(\bfm\pi_{[i]},\bfm p_{S\,\sm\,\{i\}})$, player $k$
           reaches vertex $x$ and enters edge $xy$ at time $t_k^x(\bfm\pi_{[i]},\bfm p_{S\,\sm\,\{i\}})$ $\le t_i^x(\bfm\pi_{[i]},\bfm p_{S\,\sm\,\{i\}})=\tau_i^x\llbracket i-1 \rrbracket$.
\end{mylabel}
}

{Construct a
path $X_k:=P_k[o_k(c_r),x]\cup \Pi_i[x,d]\in\mathcal P[o_k(c_r),d]$ for $k$. Since  $t^x_k(\bfm\pi_{[i-1]},X_k)= t^x_k(\bfm\pi_{[i-1]},P_k)\le t_k^{x}(\bfm\pi_{[i]},\bfm p_{S\,\sm\,\{i\}})$ (where the equality is simply from definition of $X_k$ and the inequality is due to the invariant influences of $[i-1]$ that $k$ cannot be sped up by players in $S\cup \{i\}$),
which together with (2) implies
$t^x_k(\bfm\pi_{[i-1]},X_k)\le \tau_i^x\llbracket i-1 \rrbracket=t^x_i(\bfm\pi_{[i-1]},\Pi_i)$.
Consequently,
\begin{mylabel}
\item[(3)] $t^{w}_k(\bfm\pi_{[i-1]},X_k)\le \tau_i^{w}\llbracket i-1 \rrbracket$
           for each vertex ${w}\in X_k[x,d]=\Pi_i[x,d]$.
\end{mylabel}}

{By definition of player $i$ from Algorithm~\ref{proc:id} and  (3), we have
$t^{w}_k(\bfm\pi_{[i-1]},X_k)=\tau_i^{w}\llbracket i-1 \rrbracket$ for  each vertex
${w}\in X_k[x,d]=\Pi_i[x,d]$. {Therefore,}
from (3), we know
either (if $X_k$ and $\Pi_i$ have different incoming edges into $x$) $X_k$ has a higher priority incoming edge into $x$ than $\Pi_i$ does, or (by the choice of edge $xy$)
$e_k(c_r)=xy=e_i(c_r)$, $X_k=\Pi_i$, and player $k$ queues before player $i$ 
{at} $e_k(c_r)=e_i(c_r)$. 
However, the choice made at  {the} $i$th iteration 
of Algorithm~\ref{proc:id}
excludes the possibilities of both cases. This finishes the proof.}
\end{proof}

\subsection{Generalized iterative dominations}\label{apx:general}
{As can be seen from the proof of Lemma \ref{lem:ineq},  our induction hypothesis only involves the equation (the first part) of (\ref{eq:conclusion}), which guarantees the critical invariant influences property.} 
This leads us to the following generalization of Algorithm \ref{proc:id}, which computes an iterative dominating {\em partial} path profile {\em based} on the fixed routing of some players.

 \begin{algorithm} \SetAlgoNoLine \KwIn{a feasible configuration $c_r$ at time $r$;   a partial path profile $\bfm b=(B_\ell)_{\ell\in U}$ for players in  a {(possibly empty)} finite subset $U\subseteq\Delta(c_r)$ such that, for all $\ell \in U$ and all $v\in B_\ell$, $t_\ell^v(\bfm b, \bfm p_S)$ is the same  over 
all path profile $\bfm p$ of $\Gamma^n(c_r)$ and $S\subseteq \Delta(c_r)\sm U$.}
\KwOut{the iterative dominating partial path profile (routing) $\bfm\pi=(\Pi_i)_{i\in\Delta(c_r)\,\sm\, U}$ for $\Delta(c_r)\sm U$ along with the corresponding player indices 1,2,\ldots.}
\begin{mylabel5}
\item[\,\,\,1.] 
{Initiate $D\leftarrow U$,
          $\bfm\pi_{[0]}\leftarrow\emptyset$, $i\leftarrow0$.}
\item[\,\,\,2.]\vspace{-1mm} $i\leftarrow i+1$  {(NB: Start to search for the new dominator and his dominating path).}
\item[\,\,\,3.]\vspace{-1mm} 
For each player $j\in \Delta(c_r)\sm D$ and vertex $v\in\bar G$,
\begin{mylabel4}
\item[-] let {$\tau_{j}^v=\min\{t^v_j(\bfm b,\bfm\pi_{[i-1]},R_j)\,|\,R_j\in\mathcal P[o_j(c_r),d]\}$}
          be the earliest time for {$j$} to reach vertex $v$ from his {\em current location in $c_r$}, assuming
          that \emph{all  other} players in {$\bar G$} are those in $D$ and they    {go along their paths specified in $(\bfm b,\bfm\pi_{[i-1]})$};
          \item[-] {let $\mathcal P_{j}^v$ denote} the set of all the \emph{corresponding} {$o_j(c_r)$-$v$ paths for $j$ to reach $j$ at time          $\tau_{i}^v$}. 
          \item[-] if there is no path in $\bar G$ to $v$ from the current location
          of player {$j$}, then set {$\tau_j^v\leftarrow\infty$ and $\mathcal P_j^v\leftarrow\emptyset$}.
          \end{mylabel4}
\item[\,\,\,4.]\vspace{-1mm} {Run Steps 4-- 12 of Algorithm \ref{proc:id} to identify dominator $i$ and his dominating path $\Pi_i$.}
\item[\,\,\,5.]\vspace{-1mm}   {$D\leftarrow D\cup \{i \}$, $\bfm\pi_{[i]}\leftarrow(\bfm\pi_{[i-1]},\Pi_i)$. (NB: the algorithm outputs player $i$ and his path $\Pi_i$.)}
\item[\,\,\,6.]\vspace{-1mm} Go to Step 2.
\end{mylabel5}
\caption{\sc(Iterative Dominating Partial Path Profile with a Base)}\label{alg:gernal}
\end{algorithm}

The verbatim adaption of the proof {of} Lemma \ref{lem:ineq} gives the following generalization for iterative domination, which {will play a critical role in the discussion of {Section \ref{sec:NE-vs-SPE}}.} 
\begin{lemma}\label{lem:basedominant}
Given the input and output of Algorithm \ref{alg:gernal}, if $j\in S\subseteq\Delta(c_r)\sm(U\cup[i-1])$, then for every vertex $v\in \Pi_i$ and  path profile  $\bfm p$ of $\Gamma^n(c_r)$,  it holds that
\[
t_i^v(\bfm b,\bfm\pi_{[i]}, \bfm p_{S\,\sm\,[i]})=\min_{R_i\in\mathcal P[o_i(c_r),d]}t_i^v(\bfm b,\bfm\pi_{[i-1]},R_i)\le t_j^v(\bfm b,\bfm\pi_{[i-1]}, \bfm p _{S}).\]
\end{lemma}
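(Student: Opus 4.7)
The plan is to follow the proof of Lemma \ref{lem:ineq} essentially verbatim, replacing the role of $\bfm\pi_{[i-1]}$ throughout by the combined partial profile $(\bfm b, \bfm\pi_{[i-1]})$, and inducting on $i\ge1$ with the equality part of the conclusion as the induction hypothesis. The critical ``invariant influences'' property -- that the players in $U\cup[i-1]$, when following $(\bfm b, \bfm\pi_{[i-1]})$, contribute the same queue on every edge at every time regardless of the strategies chosen by the remaining players -- will be supplied by two sources here rather than one: for the $U$-component it is built into the input hypothesis on $\bfm b$, and for the $[i-1]$-component it is the equality statement at smaller indices, which the inductive hypothesis furnishes.

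Given this invariance, the inequality part of the statement reduces to two subclaims. First, minimality in the definition of $\tau_j^v\llbracket i-1\rrbracket$ (see (\ref{eq:deftau}), now applied to the base $(\bfm b,\bfm\pi_{[i-1]})$) gives $\tau_j^v\llbracket i-1\rrbracket \le t_j^v(\bfm b,\bfm\pi_{[i-1]},\bfm p_S)$, because nothing in $\bfm p_{S\sm\{j\}}$ can speed $j$ up past his ``base-only'' earliest arrival. Second, the selection rule executed in Step 4 of Algorithm \ref{alg:gernal} (which imports Steps 5--10 of Algorithm \ref{proc:id}) forces $\tau_i^v\llbracket i-1\rrbracket \le \tau_j^v\llbracket i-1\rrbracket$: otherwise the candidate path $P_j[o_j(c_r),v]\cup\Pi_i[v,d]$ for $j$ would beat $(i,\Pi_i)$ in the edge-by-edge backward construction of $\Pi$, contradicting the algorithm's choice.

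For the equality $t_i^v(\bfm b,\bfm\pi_{[i]},\bfm p_{S\sm\{i\}})=\tau_i^v\llbracket i-1\rrbracket$, the direction ``$\ge$'' is immediate from the invariance (no player outside $U\cup[i]$ can make $i$ faster than his base-only earliest arrival). For ``$\le$'', I would argue by contradiction: pick the first $v\in\Pi_i$ at which $i$ is slowed down, locate the first edge $xy\in\Pi_i[o_i(c_r),v]$ on which some $k\in\Delta(c_r)\sm(U\cup[i])$ blocks $i$, and form the splice $X_k:=P_k[o_k(c_r),x]\cup\Pi_i[x,d]$. Using the invariance to bound $t_k^x$ from above by $\tau_i^x\llbracket i-1\rrbracket$ yields $t_k^w(\bfm b,\bfm\pi_{[i-1]},X_k)\le \tau_i^w\llbracket i-1\rrbracket$ for all $w\in\Pi_i[x,d]$, which in turn forces equality by the minimality of $\tau_i^w$. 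Then either $X_k$ enters $x$ via a strictly higher-priority edge than $\Pi_i$ does, or $e_k(c_r)=e_i(c_r)=xy$ with $k$ queued ahead of $i$; in both cases, $(k,X_k)$ should have been chosen in place of $(i,\Pi_i)$ at the $i$-th iteration, a contradiction.

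The one place where the generalization genuinely differs from Lemma \ref{lem:ineq} is the need to certify that the invariance property continues to hold after the base $\bfm b$ is glued on top of $\bfm\pi_{[i-1]}$. This is the only potential obstacle, and it is disposed of by the input condition imposed on $\bfm b$ in the statement of Algorithm \ref{alg:gernal}, which says precisely that $t_\ell^v(\bfm b,\bfm p_S)$ does not depend on $\bfm p_S$ for $\ell\in U$ and $v\in B_\ell$. With that in hand, the entire line of reasoning from the proof of Lemma \ref{lem:ineq} transports without modification, justifying the ``verbatim adaptation'' assertion in the text.
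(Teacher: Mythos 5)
Your proposal is correct and follows exactly the route the paper takes: the paper's ``proof'' of this lemma is literally the one-line assertion that the argument for Lemma~\ref{lem:ineq} adapts verbatim with $(\bfm b,\bfm\pi_{[i-1]})$ in place of $\bfm\pi_{[i-1]}$, and you have correctly identified that the only point requiring new justification is the invariant-influence property for the $U$-component, which is supplied by the input condition on $\bfm b$ in Algorithm~\ref{alg:gernal}.
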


\subsection{Nice properties of the special SPE}\label{apx:propspe}
{\begin{definition}\label{def:preempt}[Preemption] Given a strategy profile $\bfm p$ of interim game $\Gamma^n(c_r)$,  
we say that  $i$ {\em preempts} $j$ at vertex $v$ (under $\bfm p$) if {they both pass $v$ and} $i$ reaches $v$  earlier than $j$ does (under $\bfm p$); we say $i$ {\em weakly preempts} $j$ at vertex $v$ if either $i$ preempts $j$ at vertex $v$, {or}  $i$ and $j$ reach  $v$ at the same time but $i$ comes from an edge {(with head $v$)} with a higher priority than $j$ does.
\end{definition}}
\begin{definition}\label{def:original}
A player $i\in\Delta$ is said to have a \emph{higher original priority} than player $j\in\Delta$ if $i$ preempts or weakly preempts $i$ at the {origin $o$.}
\end{definition}
{It  can be summarized from the above discussions} that the special SPE $\sigma^*$ which we construct in the proof of Theorem \ref{thm:SPE} has the following related but distinct nice properties.

\medskip
{\em{Sequential Independence.} If} a player $i$ and all players 
{with} higher original priorities than $i$
 fix their strategies as in the SPE, then their realized paths as well as the arrival times at
 all vertices are independent of other players' strategies.

{\em Sequential Optimality.} The latency of any player realized under the SPE is minimum among all the feasible flows
 in which players 
 {with} higher original priorities follow their strategies in the SPE.

{\em Pareto Optimality.} No group of players can be strictly better off by deviating together from $\sigma^*$. That is,
 the SPE outcome from any configuration is a strong NE (\cite{a59}) of the corresponding subgame. Thus  the SPE outcome of $\sigma^*$ is weakly Pareto optimal.

{\em Global {FIFO}.} 
In any routing realized under the SPE,  
if  player $i$ 
{weakly preempts} player $j$ at some vertex of $G$ (particularly its origin $o$), then $i$ leaves 
the system {no later than $j$.} 

{\em No Overtaking.} If player $i$ has a higher original priority than player $j$ and both pass through some vertex $v$
 in the realization of the SPE, then $i$ 
 {weakly preempts} $j$ at $v$. 

{\em Earliest Arrival.} Given the other players' strategies in the SPE, each player using his realized path under the SPE is
 guaranteed to reach any vertex on the path (not only the destination $d$) at an earliest time among all of his possible
 {choices of paths.} 

{\em Markov and Anonymity.} According to the SPE, the action each player takes at each node of the game tree of $\Gamma$ depends
 \emph{only} on the 
 {immediately previous configuration but not on earlier configurations in the history,  and the identities of
 other players do not matter}.\footnote{A strategy $\sigma_i$ of player $i$ is called {\em Markovian} if $\sigma_i(h_r)=\sigma_i(h_{r'})$ holds for {all} histories $h_r=(c_0,c_1,\ldots,c_r)$ and $h_{r'}=(c'_0,c'_1,\ldots,c'_{r'})$ with $c'_r=c_{r'}$ and $i\in\Delta(c_r)$.}

\section{Details in Section 4}
\subsection{{Proof of  Lemma \ref{dominating}}}

\begin{proof}
For each vertex $v\in \bar V$, we denote $I_v\subset\Delta(c_r)$ {as} the set of players apart from $\zeta$ whose arrival {times at} $v$ can be influenced by player $\zeta$'s unilateral 
{path} changing.  {Apparently,}  if $j\in I_v$, then it must be the case that $v\in A_j$. {Thus,} we need to prove{, for} every player $j\in I_v$, {that} player $\zeta$ dominates player $j$ at vertex $v$. Suppose edge $(o_\zeta(c_r),v_0)$ is the {starting} edge of path $A_\zeta$. 

\begin{claim}\label{clm:path}
If $I_v\neq\emptyset$, then there exists a directed path from $v_0$ to $v$, i.e., $\tau^v\neq\infty$.
\end{claim}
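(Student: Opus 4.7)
The plan is to show that the claim follows from a more general \emph{propagation lemma}: any difference in some player's arrival time at some vertex, induced by $\zeta$'s unilateral deviation, can be traced backwards through the queuing dynamics to $\zeta$'s own trajectory in $\bar G$, and hence to $v_0$. Fix $j\in I_v$ together with a path $A'_\zeta \in \mathcal P[o_\zeta(c_r),d]$ witnessing $j\in I_v$, and set $\bfm\beta := (A'_\zeta, \bfm\alpha_{-\zeta})$. Every player $k\ne \zeta$ follows the same fixed route $A_k$ in both $\bfm\alpha$ and $\bfm\beta$; only $\zeta$'s route changes, and both $A_\zeta$ and $A'_\zeta$ must pass through the edge $e_\zeta(c_r)=(o_\zeta(c_r),v_0)$ and hence through $v_0$.

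First I would prove the propagation lemma: for every pair $(k, w)$ with $t^w_k(\bfm\alpha) \ne t^w_k(\bfm\beta)$, there is a directed $v_0$-to-$w$ path in $\bar G$. The proof is by induction on $\mu(k,w) := \min\{t^w_k(\bfm\alpha), t^w_k(\bfm\beta)\}$, where any pair with only one arrival time finite (which forces $k=\zeta$) is treated as a base case. If $k = \zeta$, then $w$ lies on $A_\zeta$ or $A'_\zeta$, both of which have $v_0$ as an internal vertex from which $w$ is reached along $\zeta$'s route. If $k \ne \zeta$, let $u$ be the in-neighbor of $w$ on $A_k$. The identity $t^w_k = t^u_k + (\text{wait of }k\text{ on }(u,w)) + 1$ forces one of two subcases: either $t^u_k$ differs between the two profiles, in which case the induction hypothesis applied to $(k,u)$ gives a $v_0$-to-$u$ path that I extend by the edge $(u,w)$; or $k$'s waiting time on $(u,w)$ differs. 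In the latter subcase the FIFO plus edge-priority queuing rule implies that some player $k' \ne k$ has its position in the queue of $(u,w)$ at time $t^u_k$ change across the profiles; if $k'=\zeta$ then $u \in A_\zeta\cup A'_\zeta$ is reachable from $v_0$ along $\zeta$'s route, and otherwise $k'$'s arrival at $u$ differs with $\mu(k', u) \le t^u_k \le \mu(k,w)-1$, so the induction hypothesis gives a $v_0$-to-$u$ path which I extend by $(u,w)$.

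Applying the propagation lemma to $(j,v)$ then yields a directed $v_0$-to-$v$ path in $\bar G$; together with $e_\zeta(c_r)$ and the subpath $A_j[v,d]$, this assembles an $o_\zeta(c_r)$-$d$ path through $v$, certifying $\tau^v < \infty$. The main obstacle will be the second subcase of the inductive step, where I have to argue cleanly that any change in $k$'s wait on $(u,w)$ is witnessed by a genuine change in some queue-mate's arrival time at $u$, rather than by a spurious artifact. This requires unpacking the dependence of $k$'s waiting time on the multiset of entry times into $(u,w)$ at times no later than $t^u_k$, and checking that the induction measure $\mu$ strictly decreases in the reduction to $(k',u)$; the acyclicity of $\bar G$ together with the unit edge capacities and lengths makes this bookkeeping tractable.
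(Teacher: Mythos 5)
Your proposal is correct and rests on the same essential mechanism as the paper's proof: both trace the perturbed arrival time backwards through the in-neighbor $u$ of the affected vertex along the affected player's fixed route, with the dichotomy that either the edge into that vertex lies on $A_\zeta\cup A'_\zeta$ (so the vertex is reachable from $v_0$ via $\zeta$'s own route) or some player's arrival time at $u$ has already changed, to which one recurses. The only difference is bookkeeping: the paper inducts on the topological order of vertices using the influence sets $I_u$ and disposes of the queue-dynamics step in one sentence, while you fix a single deviation, induct on the minimum arrival time, and explicitly identify the responsible queue-mate $k'$ --- a more detailed but equivalent argument whose one delicate point (that a changed wait on $(u,w)$ with unchanged $t^u_k$ forces a changed entry time into $(u,w)$ at some time $\le t^u_k$) you correctly flag and can indeed close as you describe.
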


\begin{proof}
Suppose $j\in I_v$. Let $e_v(A_j)=(u,v)$. Since {there exists} $A'_\zeta\in \mathcal P[o_\zeta(c_r),d]$ such that $t_j^v(A'_\zeta,\bfm{\alpha}_{-\zeta})\neq t_j^v(\bfm{\alpha})$, {one of the following cases must happen}:
\begin{mylabel}
\item $(u,v)\in A'_\zeta$ or $(u,v)\in A_\zeta$;
\item $I_u\neq\emptyset$.
\end{mylabel}

This is {true} because, if $I_u=\emptyset$ and $(u,v)\not\in A'_\zeta\cup A_\zeta$, then $j$'s arrival time {at} $u$ will be a constant and $j$'s 
{waiting} time on edge $(u,v)$ will also be a {constant,} 
contradicting the fact {that} $j\in I_v$. If case (i) happens, then obviously there is a path from $v_0$ to $v$. Otherwise, $I_u\neq\emptyset$,  we 
{can find a directed path from $v_0$ to $v$ by backward induction}.
\end{proof}

Define $V_\zeta:=\{v\in\bar V| \tau^v\neq\infty\}$. 
{It follows from Claim \ref{clm:path} that} $\{v|I_v\neq\emptyset\}\subseteq V_\zeta$. Since the graph $\bar G$ is acyclic, there exists a full order among the vertices in $V_\zeta$, such that each directed edge's tail vertex's order is smaller than the head vertex's order. {Apparently,} $v_0$'s order is the smallest.  So it {suffices} to prove that{, for} any vertex $v\in V_\zeta$, player $\zeta$ dominates every player $j\in I_v$ at vertex $v$. We will prove this  by induction on the order of the vertices in $V_\zeta$. The base case is {obvious because} $I_{v_0}=\emptyset$. Now  for some vertex $w\in V_\zeta$, assume the above statement is true for all vertices  with {orders smaller} than $w$ in $V_\zeta$, we prove that it's also 
{the case} for vertex $w$.

{Since the case $I_w=\emptyset$ is trivial, we suppose now} $I_w\neq\emptyset$. For any player $j\in I_w$, suppose edge $(u,w)\in A_j$. In the following, we {prove first that} player $\zeta$ dominates player $j$ at vertex $u$, then {show} the dominance at vertex $w$. If $j\in I_u$, since $u$ has a smaller order than $w$, then by the assumption, player $\zeta$ dominates player $j$ at vertex $u$. If $j\not\in I_u$, then no matter how $\zeta$ {changes his path}, player $j$'s arrival time {at} $u$ is a constant. However, since $j\in I_w$, {there exists a path} $A'_\zeta\in \mathcal P[o_\zeta(c_r),d]$ such that $t_j^w(A',\bfm{\alpha}_{-\zeta})\neq t_j^w(\bfm{\alpha})$. Suppose 
{w.l.o.g.} that $t_j^w(A',\bfm{\alpha}_{-\zeta})<t_j^w(\bfm{\alpha})$.  {Then,} combining the above two facts of $j\not\in I_u$ and $j\in I_w$, we know  one of the  {two following} cases must {happen}:
\begin{mylabel}
\item $\exists i\in I_u$ with $uw\in A_i\cap A_j$,  s.t.,  $t_i^u(\bfm{\alpha})<t_j^u(\bfm{\alpha})$,   or $t_i^u(\bfm{\alpha})=t_j^u(\bfm{\alpha})$ {and} $e_u(A_i)\prec_u e_u(A_j)$.
\item $uw\in A_\zeta\cap A_j$, and $t^u_\zeta(\bfm{\alpha})<t_j^u(\bfm{\alpha})$, or $t^u_\zeta(\bfm{\alpha})=t_j^u(\bfm{\alpha})$ {and} $e_u(A_\zeta)\prec_u e_u(A_j)$.
\end{mylabel}
If it is the case (i), then combining the assumption that $\zeta$ {dominates} all players in $I_u$ at vertex $u$ and the fact {that} $\tau_j^u=t_j^u(\bfm{\alpha})$, we can deduce that $\zeta$ dominates player $j$ at vertex $u$; If it is the case (ii), then apparently $\zeta$ still dominates player $j$ at vertex $u$. Next we prove $\zeta$ dominates $j$ at vertex $w$.

{It can be observed from the above analysis that} no matter whether 
{$j\in I_u$ or not}, player $\zeta$ always dominates player $j$ at vertex $u$. Now suppose path $A^*_\zeta$  satisfies $\tau^u=t^u_\zeta(A^*_\zeta,\bfm{\alpha}_{-\zeta})$ (if there are more than one such {paths, then} let $A^*_\zeta$ be the one with {the} highest edge priority at vertex $u$),  and $ A'_\zeta\in\mathcal P(o_\zeta(c_r),d)$ satisfies $\tau^w_j=t^w_j(A'_\zeta,\bfm{\alpha}_{-\zeta})$. Define $\bar A_\zeta=A^*_\zeta[o_\zeta(c_r),u]\cup A_j[u,d]$, then apparently $\bar A_\zeta\in \mathcal P[o_\zeta(c_r),d]$. Under the strategy profile $(\bar A_\zeta, \bfm{\alpha}_{-\zeta})$, consider first {the case that} $\zeta$ travels along the edge $uw$ and there is no queue. 
{In this case,} $t^w_\zeta(\bar A_\zeta, \bfm{\alpha}_{-\zeta})=\tau^u+1\le \tau^u_j+1\le t^u_j(A'_\zeta,\bfm{\alpha}_{-\zeta})+1\le t^w_j(A'_\zeta,\bfm{\alpha}_{-\zeta})=\tau^w_j$. Combining with the facts that $\tau^w\le t^w_\zeta(\bar A_\zeta, \bfm{\alpha}_{-\zeta})$ and $e_w(\bar A_\zeta)=e_w(A_j)$, we can deduce that player $\zeta$ dominates player $j$ at vertex $w$. Now we are 
{left} with the case {that} $\zeta$ travels along the edge $uw$ under the strategy profile $(\bar A_\zeta, \bfm{\alpha}_{-\zeta})$ {and} there is a queue. Let $I'$ be the set of players 
{queuing before} $\zeta$ 
and 
{those who} pass through $uw$ earlier than that queue. {Let} $i\in I'$ {be} the 
{player that queues immediately before $\zeta$,} 
i.e., $t^w_\zeta(\bar A_\zeta, \bfm{\alpha}_{-\zeta})=t^w_i(\bar A_\zeta, \bfm{\alpha}_{-\zeta})+1$. Since $\zeta$ dominates all players in $I_u$ and $t^u_\zeta(\bar A_\zeta, \bfm{\alpha}_{-\zeta})=\tau^u$,  we can see that $I'\cap I_u=\emptyset$ and $\zeta$ {cannot} dominate any player in $I'$ at vertex $u$. So no matter how $\zeta$ changes his strategy, the time that players in $I'$  pass through 
{edge} $uw$ will never be influenced by $\zeta$ 
{or} players in $I_u$, which also means $I'\cap I_w=\emptyset$. {Thus,} $t^w_i(A'_\zeta, \bfm{\alpha}_{-\zeta})=\tau^w_i=t^w_i(\bar A_\zeta, \bfm{\alpha}_{-\zeta})$. Recall that $\zeta$ dominates $j$ at vertex $u$ and $uw\in A_j\cap A_i$. {Therefore,} no matter how $\zeta$ {chooses} his path strategy, player $j$ will always arrive at vertex $w$ at least one {unit of} time later than $i$ does. {So,} by the definition of path $A'_\zeta${, we} have $\tau_j^w=t^w_j(A'_\zeta,\bfm{\alpha}_{-\zeta})\ge t^w_i(A'_\zeta,\bfm{\alpha}_{-\zeta})+1=t^w_\zeta(\bar A_\zeta, \bfm{\alpha}_{-\zeta})$. This along with the facts {that} $\tau^w\le t^w_\zeta(\bar A_\zeta, \bfm{\alpha}_{-\zeta})$ and $e_w(\bar A_\zeta)=e_w(A_j)$ still implies that player $\zeta$ dominates player $j$ at vertex $w$. By the arbitrariness of player $j$ in $I_w$, we can see that player $\zeta$ dominates all players in $I_w$ at vertex $w$.
\end{proof}

\subsection{Proof of Lemma \ref{lem:dp}}

\begin{proof}

We only need to show the first equality because the second one is  true by definition  {and the correctness of the first}. Since the network is acyclic, we have a natural full order among all vertices that player $\zeta$ can reach, i.e., the vertices  {$v$} with $\tau^v\neq\infty$. We prove by 
{induction} on the order of these vertices. The base case that $v$ is {the head of} the initial edge $e_\zeta(c_r)$ is obvious due to the initial setting. Let  {us} now consider the case that $v$ is not the  {head} of $e_\zeta(c_r)$, and suppose the lemma is true for all vertices with  orders smaller than $v$.

We claim that,  for the players in $Q_{uv}^{\tau^u}-Q_{uv}^{\tau^u}(\tau^u,e^*(u))$, no matter how $\zeta$ chooses his path, their arrival times at $u$ will never be influenced. Suppose the contrary. Then,  {by} Lemma~\ref{dominating}, $\zeta$ dominates at least one player $j\in Q_{uv}^{\tau^u}-Q_{uv}^{\tau^u}(\tau^u,e^*(u))$. Note first from the definition of $Q_{uv}^{\tau^u}$  that $\tau^u_j\leq \tau^u$, where $\tau^u_j$ is the earliest time that $j$ can reach $u$ when $\zeta$ changes 
{his} path. By the definition of domination, it can only be the case that $\tau^u_j=\tau^u$ and $\zeta$ is able to arrive at $u$ at time $\tau^u$ via an edge  {$e'$} that has a priority no lower  than the one taken by $j$.  {By definition of $e^*(u)$, the priority of $e^*(u)$ is at least that of $e'$, and therefore at least that of the edge taken by $j$}. However, this is impossible because $j\notin Q_{uv}^{\tau^u}(\tau^u,e^*(u))$. Hence the claim is valid. It follows from the claim  {and the definition of $\tau^u$} that $\zeta$ cannot influence the arrival times of players in $Q_{uv}^{\tau^u}-Q_{uv}^{\tau^u}(\tau^u,e^*(u))$ at $v$ either.

Consequently,  if $\zeta$ uses edge $uv\in E^-(v)$ to reach $v$,  {his} arrival time  at $v$ is at least $\tau_u+1+|Q_{uv}^{\tau^u}|-|Q_{uv}^{\tau^u}(\tau^u,e^*(u))|$. On the other hand, the above value is obtainable  by reaching $u$ at $\tau^u$ via $e^*(u)$. It follows that the earliest time that $\zeta$ can reach $v$ via edge $uv$ is exactly  $\tau_u+1+|Q_{uv}^{\tau^u}|-|Q_{uv}^{\tau^u}(\tau^u,e^*(u))|$. Since $\zeta$ must use one edge in $E^-(v)$ to reach $v$,  the minimization is correct for vertex $v$.  This finishes the proof.
\end{proof}

\subsection{NE properties of interim games}
{Recall the notions of preemption and weak preemption in Definition \ref{def:preempt}.} If player $i$ weakly preempts player $j$ at a vertex $u$ and  {both} $i$ and $j$ choose to enter the same edge $uv$, then $i$ preempts $j$ at vertex $v$. It is possible that player $i$ preempts player $j$ at vertex $x$ and $j$ preempts $i$ at another vertex $y$ (even under  equilibrium routings).
{Note that while the notion of domination  compares the arrival times of two players at the same node  under  possibly different strategy settings, weak preemption compares two arrival times under the same strategy setting.}



Letting {$\Gamma^n(c_r)$} be an interim game with {$c_r=(Q^r_e)_{e\in\bar E}$} and $S$ a subset of $\Delta(c_r)$, we use {$\Gamma^n(c_r,S)$} to denote the game where only the ones in $S$ play the game (the other players are assumed to disappear and the {orders of players of $S$} 
in any queue $Q_e^t$, {$e\in\bar E$} are modified in accordance). {In contrast to Example \ref{better}, we have the following useful lemma.}


\begin{lemma}\label{lem:preempt}Let $\Gamma^n(c_r)$ be an interim game {and} $S$ a player subset of $\Delta(c_r)$. {Fix} $\bfm f$ as a strategy profile for players in $S$. If, for any player  {$j\in \Delta(c_r)\sm S$}, he can never {weakly} preempt any player in $S$ with any $o_j(c_r)$-$d$ path in game $\Gamma^n(c_r,S\cup\{j\})$, then in game $\Gamma^n(c_r)$, for any strategy profile $\bfm r$ of players in  {$\Delta(c_r)\sm S$}, no player outside $S$ can {weakly} preempt any player of $S$ under $(\bfm f, \bfm r)$.
\end{lemma}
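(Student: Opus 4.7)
The plan is to argue by contradiction. Assuming the conclusion fails, pick a strategy profile $\bfm r$ together with a triple $(j^*,i^*,v^*)$ in which $j^*\in\Delta(c_r)\sm S$ weakly preempts $i^*\in S$ at $v^*$ under $(\bfm f,\bfm r)$, chosen to minimize the preempter's arrival time $t^*:=t^{v^*}_{j^*}(\bfm f,\bfm r)$. The goal is to convert this multi-player preemption into a single-player preemption in $\Gamma^n(c_r,S\cup\{j^*\})$, contradicting the hypothesis.

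The first step will be to show that the $S$-trajectories up to time $t^*$ coincide in three scenarios: the big game $\Gamma^n(c_r)$ under $(\bfm f,\bfm r)$, the small game $\Gamma^n(c_r,S\cup\{j^*\})$ in which $j^*$ uses the path $P_{j^*}$ he takes under $\bfm r$, and the $S$-only game $\Gamma^n(c_r,S)$ under $\bfm f$. The key observation is that whenever an outside player $j'$ sits ahead of some $i\in S$ in the queue on an edge $e$, $j'$ must weakly preempt $i$ at the tail of $e$ (both select $e$, and $j'$ enters earlier or simultaneously with a higher-priority incoming edge). By the minimality of $t^*$, no such preemption can occur at any time strictly less than $t^*$ in the big game; combined with the lemma's hypothesis applied to $j^*$ alone, the same holds in the small game. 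Hence, throughout $[r,t^*)$, no outside player delays any $S$-player in either game, so the $S$-trajectories, and in particular $i^*$'s arrival at $v^*$ at time $t^*$, agree across all three scenarios.

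Next, I will compare $j^*$'s progress along $P_{j^*}$ in the small and big games by induction on the vertices of $P_{j^*}$, showing that $j^*$'s arrival time in the small game is always at most his arrival time in the big game. For the inductive step on an edge $uu'\in P_{j^*}$, let $t_1$ and $t_2\ge t_1$ be $j^*$'s entry times into $uu'$ in the small and big games respectively (the inequality holds by induction). Since the $S$-dynamics on $uu'$ are identical in the two games, the quantity $t+|Q^t_{uu'}\cap S|$ is non-decreasing in $t$ (the $S$-queue loses at most one player per time unit), so the numbers $p_1^S$ and $p_2^S$ of $S$-players ahead of $j^*$ in the respective games satisfy $t_1+p_1^S\le t_2+p_2^S$. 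Since $j^*$ may additionally be delayed in the big game by other outside players ahead of him in the queue, his exit time from $uu'$ in the small game is at most his exit time in the big game, completing the induction. Applying this at $u'=v^*$, $j^*$ reaches $v^*$ in the small game at some time $\tilde t\le t^*$. Since the incoming edges to $v^*$ taken by $j^*$ and $i^*$ are determined by $P_{j^*}$ and $\bfm f$ in both games, either $\tilde t<t^*$ yields strict preemption, or $\tilde t=t^*$ inherits the tie-breaking edge-priority relation witnessed in the big game; in either case $j^*$ weakly preempts $i^*$ in $\Gamma^n(c_r,S\cup\{j^*\})$ using $P_{j^*}$, contradicting the hypothesis.

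The main obstacle will lie in making the monotonicity of the second step airtight, because Example~\ref{addaplayer} warns that removing players can in general harm others and one might worry that $j^*$'s arrival in the small game could be strictly later than in the big game. The argument crucially depends on the invariance of the $S$-trajectories across the two games; this invariance is precisely what the minimality of $t^*$ and the lemma's hypothesis together buy us, and it prevents the chain-effect reshuffling responsible for the pathology in Example~\ref{addaplayer}.
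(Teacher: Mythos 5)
Your proposal is correct and follows essentially the same route as the paper's own (much terser) proof: take a counterexample minimizing the preempter's arrival time, use that minimality to conclude that the $S$-players' trajectories are undisturbed before that time, and then argue that deleting the other outside players can only accelerate $j^*$, so the weak preemption survives in $\Gamma^n(c_r,S\cup\{j^*\})$. Your two-step elaboration (invariance of the $S$-dynamics, then the FIFO monotonicity $t_1+p_1^S\le t_2+p_2^S$ along $P_{j^*}$) is just a careful unpacking of the paper's one-line claim that removal ``can only possibly reduce $j$'s queuing time.''
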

\begin{proof}Suppose on the contrary that
$j\notin S$ {weakly} preempts
$i\in S$ at some vertex $v$ under $(\bfm f, \bfm r)$, and {further that} $t_j^v(\bfm f, \bfm r)$ is the minimum. 
The minimality implies that players in {$\Delta(c_r)\sm (S\cup \{j\})$ do not {weakly} preempt those in $S$ {before} 
$t_j^v(\bfm f, \bfm r)$. So} deleting them can only possibly reduce $j$'s queuing time {before time $t_j^v(\bfm f, \bfm r)$} and accelerate his arrival time at $v$.
Therefore,  $j$ will still {weakly} preempt $i$ at $v$ under {$(\bfm f, \bfm r_{\{j\}})$} 
in game {$\Gamma^n(c_r,S\cup\{j\})$}, {contradicting} the hypothesis. \end{proof}

\begin{proof}[Proof of Lemma \ref{NEproperty'}]
For each player {$j\in \Delta(c_r)\sm B$}, define  {$\tau(B,j)$} as the earliest time when $j$ {can} {weakly} preempt some player of {$B$} in game {$\Gamma^n(c_r,B\cup\{j\})$}  {under strategy profile $(\bfm{\pi}_{B},R_j)$   among all possible $o_j(c_r)$-$d$ paths {$R_j\in\mathcal P[o_j(c_r),d]$}.  If} player $j$ {can} never {weakly} preempt  {any player in} {$B$}  {\em in this sense, we set $\tau(B,j):=\infty$}.
  We show that  $\min\{\tau(B,j)\,|\,j\in  \Delta(c_r)\sm B\}=\infty$, 
{which,} together with Lemma \ref{lem:preempt}, will imply that the  {inequality} 
  of (\ref{NEproperty:eq1}) is true. The  {equality} 
  of (\ref{NEproperty:eq1}) will also be valid because  {as long as players in {$B$} follow $\bfm{\pi}_{B}$, they are not  affected by the remaining ones}.

 {Assume the contrary and} {let $B$ be the smallest nonempty set $\Delta(\bfm\pi,[k])$ such that
 $\tau^*:=\min\{\tau(B,j)\,|\,j\in  \Delta(c_r)\sm B\}<\infty$.} 
This means that,  {\em before time $\tau^*$, no player in {$\Delta(c_r)\sm B$} can {weakly} preempt any one in {$B$} in the sense described above}. {By definition, there exists a player $j\in\Delta(c_r)\sm B$ who {weakly} preempts some player $i^*\in B$  under $(\bfm{\pi}_{B},R_j)$ for some path $R_j\in\mathcal P[o_j(c_r),d]$ at some vertex $u\in R_j\cap \Pi_{i^*}$.} 
{Therefore, under $(\bfm{\pi}_{B},R_j)$,  player $i^*$ just reaches a vertex  $u'\in\Pi_{i^*}[o_{i^*}(c_r),u]$ at time $\tau^*$.}
{Since no weak preemption happens before time $\tau^*$, {the} movement of every player in $B$ before time $\tau^*$ will not be affected. By the minimality of $\tau^*$, we have  $\tau^*=t_{i^*}^{u'}(\bfm{\pi}_{B},\mathbf r)$ {for all} partial path profile $\mathbf r$, {which,} along with the trivial relation $t_{i^*}^{u'}(\bfm{\pi}_B, \mathbf r)\le t_{i^*}^{u}(\bfm{\pi}_B, \mathbf r)$, {gives}} \[\tau^*\le t_{i^*}^u(\bfm{\pi}_B, \mathbf r), \forall \mathbf r.\]
We  define  $j^*$, using  an adaptation of Algorithm \ref{alg:gernal} with vertex  $u$ (resp. $B$, $\bfm\pi_B$) in place of destination $d$ (resp. $U$, $\bfm b$) over there, as the player who owns a  {dominating $o_{j^*}(c_r)$-$u$ path $P_{j^*}$ with {$e_{j^*}(c_r)\in P_{j^*}$}} given the choices of {$\bfm\pi_B$ by players in} $B$. That is, given $\bfm{\pi}_{B}$, player $j^*$  is not {weakly} preempted by  any player in {$\Delta(c_r)\sm(B\cup\{j^*\})$} {when he travels along $P_{j^*}$,}  regardless of  {the choices of players   in {$\Delta(c_r)\sm(B\cup\{j^*\})$}}. 
{Then combining with the minimalities of $B$ and $\tau^*$, we have $t_{j^*}^u(P_{j^*},\bfm{\pi}_{-j^*})=\tau^*$ and $j^*$ weakly preempts $i^*$ at vertex $u$ under path profile $(P_{j^*},\bfm{\pi}_{-j^*})$.}
Define an   $o_{j^*}(c_r)$-$d$ path {$R_{j^*}:=P_{j^*}\cup \Pi_{i^*}[u,d]\in\mathcal P[o_{j^*}(c_r),d]$}.
Then  
$t^d_{j^*}(R_{j^*},\bfm{\pi}_{-j^*})\le t^d_{i^*}(R_{j^*},\bfm{\pi}_{-j^*})$,  where the  equality can only happen when $u=d$. If $\exists v\in \Pi_{i^*}$ such that $t^v_{i^*}(R_{j^*},\bfm{\pi}_{-j^*})\neq t^v_{i^*}(\bfm{\pi})$, then by Lemma \ref{dominating} we know, $j^*$ dominates $i^*$ at vertex $v$ under $\bfm\pi$, which is a contradiction to Lemma \ref{nodominating} since $\bfm\pi$ is an NE. Otherwise $\forall v\in \Pi_{i^*}$, we have $t^v_{i^*}(R_{j^*},\bfm{\pi}_{-j^*})=t^v_{i^*}(\bfm{\pi})$. It follows that  $t^d_{j^*}(R_{j^*},\bfm{\pi}_{-j^*})\le t^d_{i^*}(R_{j^*},\bfm{\pi}_{-j^*})=t_{i^*}^d(\bfm{\pi})=\tau(\bfm\pi,k)<t_{j^*}^d(\bfm\pi)${, where} the last inequality follows from $j^*\not\in B${, contradicting} the fact that $\bfm\pi$ is an NE.  This proves
the correctness of (\ref{NEproperty:eq1}).

Once the players in {$B$} have chosen their strategies as specified by $\bfm{\pi}_{B}$, then by the correctness of (\ref{NEproperty:eq1}),  we can apply  {Algorithm \ref{alg:gernal}} {with $U:=B$ and $\bfm b:=\bfm\pi_B$},  {which provides us} a player in {$\Delta(c_r)\sm B$}, {denoted as $j'$}, who owns a {dominating}  {$o_{j'}(c_r)$-$d$ path in {$\mathcal P[o_{j'}(c_r),d]$}, denoted $R_{j'}$}, among players in {$\Delta(c_r)\sm B$}.  Therefore, by {Lemma \ref{lem:basedominant} and $j'\not\in B$}, for any {$j\in \Delta(c_r)\sm B$ and partial path profile $\bfm r$ of $\Delta(c_r)\sm B$}, we  have
\begin{equation*}
t_j^d(\bfm{\pi}_{B}, \bfm r)\ge t_{j'}^d(\bfm{\pi}_{B}, R_{j'})=t_{j'}^d(\bfm{\pi})=\tau(\bfm{\pi},k+1).\end{equation*}
{This implies that the third inequality of (\ref{NEproperty:eq1}) is valid. The fourth  {inequality} of (\ref{NEproperty:eq1}) is true due to {$i\in B$} and the first {equality} of (\ref{NEproperty:eq1}).}
\end{proof}

\begin{proof}[Proof of Theorem \ref{cor:nobetter'}]  (i) This is simply an interpretation of {$t^v_i(\bfm{\pi})=t^v_i(\bfm{\pi}_{B}, \bfm r)$ with $B=\Delta(\bfm\pi,[k])$ for each $k\ge1$}
in {Lemma \ref{NEproperty'}.}

(ii) For {each $k\ge1$, let $B:= \Delta(\bfm\pi,[k-1])$. The equality stated in (\ref{NEproperty:eq1}) enables us to apply Algorithm \ref{alg:gernal} and Lemma \ref{lem:basedominant}, which provides us a player $i\in \Delta(c_r)\sm B$} who possesses a dominating path $R_{i}$ provided the players in $ B$ follow their routes as in $\bfm\pi$. Let {$\tau^*$} denote the earliest time a player in $\Delta(c_r)\sm B$ reaches $d$ among all routings {of $\Gamma^n(c_r)$} in which players in {$ B$} take their routes as in $\bfm\pi$. It follows from {Lemma~\ref{lem:basedominant}} that  for any partial strategy profile~{$\bfm r$} 
{of} players in $\Delta(c_r)\sm B$ and any player $j\in\Delta(c_r)\sm B$ {it holds that}
\begin{center}
$t_j^d(\bfm{\pi}_{[k-1]}, \bfm r)\ge \tau^*=t_{i}^d(\bfm{\pi}_{[k-1]}, R_{i}, \bfm r_{\Delta(c_r)\,\sm\, (B\cup \{i\})})$.
\end{center}
In particular, {the arbitrary choice of $\bfm r$ gives}
\begin{center}
$t_j^d(\bfm{\pi})\ge \tau^*=t_{i}^d( R_{i}, \bfm\pi_{\Delta(c_r)
\setminus\{i\}})$.
\end{center}
Since $i$ {cannot} be better off via deviation to $R_{i}$, the minimality of $\tau^*$ enforces $t_{i}^d(\bfm\pi)=\tau^*$ and $i\in\Delta(\bfm\pi,k)$. 

 (iii) This can be shown easily by taking $\bfm r=\bfm{\pi}_{\Delta(c_r)\,\sm\,\Delta(\bfm\pi,[k])}$ in the {first inequality
 of (\ref{NEproperty:eq1}) in Lemma \ref{NEproperty'}.}

 (iv) Suppose on the contrary that there exists a set $S\subseteq\Delta(c_r)$ of players  who are able to strictly better off through deviating together from an NE $\bfm\pi$ of $\Gamma^n(c_r)$. Let $k$ be the smallest number such that $S\cap \Delta(\bfm{\pi},k)\neq \emptyset$. Due to 
 {Hierarchal Optimality in (ii),} all players in $\Delta(\bfm{\pi},k)$ obtain their optimal latencies provided that no player in $\Delta(\bfm{\pi},[k-1])$ deviates from $\bfm{\pi}$, a contradiction.
\end{proof}

Recalling original priorities defined in Definition \ref{def:original}, {let} players in $\Delta$ be indexed as $1,2,\ldots$ according to their original priorities (smaller indices correspond {to} higher priorities). We have the following straightforward corollary of {Lemma \ref{lem:dp} and} Global FIFO stated in {Theorem \ref{cor:nobetter'}}. 

\begin{corollary}\label{cor:NEproperty} If $\bfm\pi$ is an NE of $\Gamma^n$, then it satisfies the following properties:
\begin{mylabel6}
\item {{\em Weak Earliest Arrival.} For each player $i$ and a partial path profile of players other than $i$, player $i$ possesses a best response that is Earliest Arrival. This implies that any NE for a new game with the restriction that all players take Earliest Arrival paths is still an NE without this restriction.}

     \item {\em Consecutive Exiting.} The indices of players within the same batch under $\bfm \pi$ are consecutive. That is, if $i,j\in \Delta(\bfm \pi, k)$ with $i<j$, then $h\in \Delta(\bfm \pi, k)$ for all $i\le h\le j$.

    \item {\em Temporal Overtaking.} If   under $\bfm\pi$  player $i$ overtakes player $j$ at some vertex $v\in V\setminus \{o\}$ (i.e., $i$ enters $G$ with a lower original 
        priority but reaches $v$ earlier than $j$ does),  then under $\bfm\pi$ they reach the destination $d$ at the same time. Namely, the overtaking is temporal.  
\end{mylabel6}
\end{corollary}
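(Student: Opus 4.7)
My plan is to derive each of the three assertions directly from results established earlier in the excerpt, namely the dynamic-programming characterization of earliest-arrival best responses in Lemma \ref{lem:dp} and the Global FIFO property in Theorem \ref{cor:nobetter'}(iii). None of the three parts requires new machinery, only careful bookkeeping.

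For assertion (i), I would apply Lemma \ref{lem:dp} to the partial profile $\bfm\pi_{-i}$ to obtain an $o_i(c_r)$-$d$ path $P^*$ that achieves the minimum $\tau^v$ at every vertex $v$ reachable from $o_i(c_r)$, and in particular at $v=d$. Since $\tau^d$ is by definition the smallest possible arrival time at $d$ over $i$'s strategy set, $P^*$ is automatically a best response to $\bfm\pi_{-i}$, and it is earliest-arrival by construction. For the second claim, let $\bfm\pi'$ be an NE of the restricted game in which every player is confined to earliest-arrival paths; supposing for contradiction that some player $i$ has a strictly improving deviation $P$ in the unrestricted game, the earliest-arrival best response to $\bfm\pi'_{-i}$ --- a legal move in the restricted game --- would reach $d$ no later than $P$ does, violating the NE property of $\bfm\pi'$ inside the restricted game.

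For assertion (ii), let $i<h<j$ with $i,j\in \Delta(\bfm\pi,k)$. By the original-priority indexing in Definition \ref{def:original}, $i$ weakly preempts $h$ and $h$ weakly preempts $j$ at the origin $o$. Two applications of Global FIFO yield $t_i^d(\bfm\pi)\le t_h^d(\bfm\pi)\le t_j^d(\bfm\pi)$; since the two extreme values both equal $\tau(\bfm\pi,k)$, equality propagates and $h\in \Delta(\bfm\pi,k)$.

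For assertion (iii), suppose $i$ overtakes $j$ at some $v\ne o$ under $\bfm\pi$. Then $i$ has lower original priority than $j$, so $j$ weakly preempts $i$ at $o$; Global FIFO gives $t_j^d(\bfm\pi)\le t_i^d(\bfm\pi)$. Conversely, $i$ reaches $v$ strictly earlier than $j$ and hence weakly preempts $j$ at $v$; a second application of Global FIFO gives $t_i^d(\bfm\pi)\le t_j^d(\bfm\pi)$, forcing simultaneous arrival at $d$. The only delicate step in the whole corollary is the reduction in the second half of (i), where one must verify that the earliest-arrival best response produced by Lemma \ref{lem:dp} indeed belongs to the restricted strategy set, so that the derived contradiction is legitimate.
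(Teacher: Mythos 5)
Your proposal is correct and matches the paper's intent exactly: the paper states this result as a ``straightforward corollary'' of Lemma~\ref{lem:dp} (for part (i)) and the Global FIFO property in Theorem~\ref{cor:nobetter'}(iii) (for parts (ii) and (iii), via the weak-preemption order at the origin from Definition~\ref{def:original}), and your write-up simply fills in those same deductions. Nothing is missing.
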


\subsection{{Constructing {an} SPE from {a} given NE}}\label{apx:construct}

 {Recall that} $(e_i)_{i\in\Delta(c_{r-1})}$ is the action profile for players in $\Delta(c_{r-1})$ at configuration $c_{r-1}$ such that no {action in the profile deviates} from $\bfm\rho$. More specifically, for each $i\in\Delta(c_{r-1})$, letting $f_i$ denote the first edge of $P_i$, we have
 \begin{mylabel}
 \item[$\bullet$] $e_i=f_i$ if  $i$ queues on $e_i$ after someone else under $c_{r-1}$;
  \item[$\bullet$] $e_i$ ($\ne f_i$) is the  second  edge of $P_i$ if     $i$ queues first on $f_i$ under $c_{r-1}$;
   \item[$\bullet$] $e_i$ is a null action if   $i$ queues first on $f_i$ under $c_{r-1}$ and $f_i$'s head is  $d$.\end{mylabel} 

{For} every $i \in\Delta(c_r)$, we observe that either $e'_i=f_i$ and $o_i(c_r)=o_i(c_{r-1})$, or $e'_i\ne f_i$ and $o_i(c_r)$ is the head of $f_i$. Therefore, given any $o_i(c_r)$-$d$ path $X_i\in\mathcal P[o_i(c_r),d]$, $\{f_i\}\cup X_i$ is an $o(c_{r-1})$-$d$ path in $\mathcal P[o_i(c_{r-1}),d]$.

 Recall {Construction  I with} the definition of $\Bbbk$ in {(\ref{eq:defk}), Construction II} and all the other settings in Subsection \ref{sec:relation}. Observe that either $\Delta(c_{r-1})=\Delta(c_r)$, or $\Delta(c_{r-1})\sm\Delta(c_r)\ne\emptyset$  and each player in $\Delta(c_{r-1})\sm\Delta(c_r)$ exits $\bar G$ at time $r$, giving  $\Delta(c_{r-1})\sm\Delta(c_r)=\Delta(\bfm\rho,1)\subseteq\Delta(\bfm\rho,[\Bbbk])$. In any case we have
 \begin{equation}\label{eq:subset}
\Delta(c_{r-1})\sm\Delta(c_r)\subseteq\mathbb B:=\Delta(\bfm\rho,[\Bbbk]).
 \end{equation}

 Moreover, we recall that $\bfm\rho$ is an NE of $\Gamma^n(c_{r-1})$ and the Hierarchal Independence of $\bfm\rho$ stated in {Theorem \ref{cor:nobetter'}:} 
 as long as the chosen paths of players in $\mathbb B$ remain as in   $\bfm\rho_{\mathbb B}$, no matter what paths the players in $\Delta(c_{r-1})\sm\mathbb B$ choose, they have no impact on the arrival time of any player in $\mathbb B$ at any vertex. This implies the following important property, which is the base of Construction II.

 \begin{lemma}\label{lem:bbbk}
 For any  partial  path profile  $\bfm\chi=(X_j)_{i\in\Delta(c_r)\setminus \mathbb B}$, where $X_j$ can be any  $o_i(c_{r})$-$d$ path in $\mathcal P[o_j(c_{r}),d]$, any player $i\in \mathbb B\cap\Delta(c_{r})$ and any vertex $v\in P_i'$, it holds that
 $t_i^v(\bfm\rho)=t_i^v(\bfm\rho_{ \mathbb B},(\{f_j\}\cup X_j)_{j\in\Delta(c_r)\setminus\mathbb B})=t_i^v((P'_j)_{j\in\mathbb B\cap\Delta(c_r)},\bfm\chi)$.
   \end{lemma}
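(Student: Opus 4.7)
The plan is to establish the two equalities separately. For the first equality, I would invoke the Hierarchal Independence property of the NE $\bfm\rho$ stated in Theorem~\ref{cor:nobetter'}(i). Since $\mathbb B=\Delta(\bfm\rho,[\Bbbk])$ is a union of the first $\Bbbk$ batches of $\bfm\rho$, that property guarantees that the arrival time of any player $i\in\mathbb B$ at any vertex of $P_i$ under $\bfm\rho$ is invariant under changes of strategy by players outside $\mathbb B$. It remains only to check that the deviating profile $(\bfm\rho_{\mathbb B},(\{f_j\}\cup X_j)_{j\in\Delta(c_r)\setminus\mathbb B})$ is a legitimate profile on $\Gamma^n(c_{r-1})$. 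The inclusion (\ref{eq:subset}) gives $\Delta(c_{r-1})\setminus\mathbb B=\Delta(c_r)\setminus\mathbb B$, and a short check shows that $\{f_j\}\cup X_j$ is a valid $o_j(c_{r-1})$-$d$ path in $\mathcal P[o_j(c_{r-1}),d]$ (containing $f_j=e_j(c_{r-1})$): if $j$ stayed on $f_j$ then $o_j(c_r)=o_j(c_{r-1})$ and $X_j$ already starts with $f_j$, so the concatenation collapses to $X_j$; if $j$ advanced then $o_j(c_r)$ is the head of $f_j$ and prepending $f_j$ yields a well-defined path. Once this check is complete, the first equality is an immediate instance of Hierarchal Independence applied to $B=\mathbb B$.

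For the second equality, my plan is to argue that the partial routing $(\bfm\rho_{\mathbb B},(\{f_j\}\cup X_j)_{j\in\Delta(c_r)\setminus\mathbb B})$ realizes exactly the transition from $c_{r-1}$ to $c_r$ at time $r-1$, after which it continues with precisely the paths used on the right-hand side. Concretely, I would verify that every time-$(r-1)$ action induced by this routing equals the actual transition action $e'_j$: for $i\in\mathbb B$ this is immediate because $e_i=e'_i$ by the defining condition (\ref{eq:defk}) of $\Bbbk$; for $j\in\Delta(c_r)\setminus\mathbb B$ I split into the two sub-cases above. In case (a) $j$ is not the queue head on $f_j$, so the routing's action is $f_j=e'_j$. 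In case (b) $j$ is the queue head, and since the first edge of $X_j$ is $e_j(c_r)=e'_j$, the routing's action is the second edge of $\{f_j\}\cup X_j$, namely $e'_j$. Therefore the routing produces configuration $c_r$ at time $r$. From time $r$ onward, the continuation paths are $P_i[o_i(c_r),d]=P'_i$ for $i\in\mathbb B\cap\Delta(c_r)$ and $X_j$ for $j\in\Delta(c_r)\setminus\mathbb B$, which is identical to the profile $((P'_j)_{j\in\mathbb B\cap\Delta(c_r)},\bfm\chi)$ run from $c_r$ under $\Gamma^n(c_r)$. Hence, for any $v\in P'_i$ reached after time $r$, the arrival times in the two interim games coincide, giving the second equality.

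The main obstacle will be the case analysis for $j\in\Delta(c_r)\setminus\mathbb B$, which requires a careful bookkeeping of the relationship among the edges $f_j$, $e_j(c_r)=e'_j$, the vertices $o_j(c_{r-1})$ and $o_j(c_r)$, and whether or not $j$ is at the head of the queue on $f_j$ in $c_{r-1}$. A secondary subtlety is the timing convention at the starting vertex $o_i(c_r)$ itself: as noted in the footnote in Section~\ref{sec:model}, the clock is normalized so that $i$ is said to reach $o_i(c_r)$ at time $r$ in $\Gamma^n(c_r)$, and this normalization must be applied consistently when comparing the routings in $\Gamma^n(c_{r-1})$ and $\Gamma^n(c_r)$; under that convention all three expressions refer to the same physical arrival time, which is what the lemma asserts.
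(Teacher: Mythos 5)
Your proposal is correct and follows essentially the same route as the paper: the first equality is obtained by observing via (\ref{eq:subset}) that $(\bfm\rho_{\mathbb B},(\{f_j\}\cup X_j)_{j\in\Delta(c_r)\setminus\mathbb B})$ is a legitimate profile of $\Gamma^n(c_{r-1})$ and then applying Hierarchal Independence (Theorem~\ref{cor:nobetter'}(i)), and the second by noting that this routing reproduces the transition to $c_r$ and thereafter coincides with $((P'_j)_{j\in\mathbb B\cap\Delta(c_r)},\bfm\chi)$, the only discrepancy being the exiting players of $\Delta(c_{r-1})\setminus\Delta(c_r)\subseteq\Delta(\bfm\rho,1)\subseteq\mathbb B$, whose null actions affect no one. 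Your case analysis for $j\in\Delta(c_r)\setminus\mathbb B$ just makes explicit what the paper treats as straightforward.
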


  \begin{proof}Notice from (\ref{eq:subset}) that $(\bfm\rho_{ \mathbb B},(\{f_j\}\cup X_j)_{j\in\Delta(c_r)\setminus\mathbb B})$ is a strategy profile 
{of} game {$\Gamma^n(c_{r-1})$}. So the first equality of the conclusion follows from {the Hierarchal Independence in Theorem \ref{cor:nobetter'}(i).} 
The second equality is straightforward from the fact that each player in $\Delta(c_{r-1})\sm\Delta(c_r)$ (if any) belongs to $\Delta(\bfm\rho,1)\subseteq \mathbb B$, and he {has} only a null action under $c_{r-1}$ in any case, which has no {effect on} other players.
  \end{proof}

  For easy expression of null actions $e'_j$ of players in $j\in \Delta(c_{r-1})\sm\Delta(c_r)$, we reserve symbol $\bfm\phi$ for the profile $(e'_j)_{j\in\Delta(c_{r-1})\setminus\Delta(c_r)}$ of null actions.

\begin{lemma}\label{lem:inductive}
$\bfm\rho'$ is an NE of game {$\Gamma^n(c_{r})$}.
\end{lemma}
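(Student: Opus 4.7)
The plan is to verify the NE condition of $\bfm\rho'$ by treating separately the two groups of players created by the construction: the group $\mathbb B\cap\Delta(c_r)$ whose paths come from Construction~I, and the group $\Delta(c_r)\sm\mathbb B$ whose paths come from the generalized iterative-dominance Algorithm~\ref{alg:gernal} of Construction~II.

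A player $i\in\Delta(c_r)\sm\mathbb B$ is handled directly by Lemma~\ref{lem:basedominant}. Algorithm~\ref{alg:gernal}, when applied with base $\bfm b=(P_j')_{j\in\mathbb B\cap\Delta(c_r)}$, assigns $i$ an iteration index $i^*$ and a dominating path $\Pi_{i^*}=P_i'$. Specializing Lemma~\ref{lem:basedominant} to $j=i$ and $v=d$, one reads off
\[t_i^d(\bfm\rho')\;=\;\min_{R\in\mathcal P[o_i(c_r),d]}t_i^d(\bfm b,\bfm\pi_{[i^*-1]},R)\;\le\;t_i^d(\bfm\rho'_{-i},Y_i)\]
for every deviation $Y_i\in\mathcal P[o_i(c_r),d]$: the left equality comes from the fact that the dominating arrival time is invariant in the choices of players indexed after $i^*$, and the right inequality from applying the lemma with $\bfm p$ accommodating $Y_i$. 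Thus $i$ cannot benefit from deviating.

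The crux of the proof is a player $i\in\mathbb B\cap\Delta(c_r)$, say $i\in\Delta(\bfm\rho,k_0)$ with $k_0\le\Bbbk$, for which I set $\mathbb C:=\Delta(\bfm\rho,[k_0-1])$. Given a deviation $Y_i\in\mathcal P[o_i(c_r),d]$, I will build a strategy profile $\bfm s$ of the prior game $\Gamma^n(c_{r-1})$ whose dynamics from time $r$ onward coincide with those of $(\bfm\rho'_{-i},Y_i)$ in $\Gamma^n(c_r)$. Define $s_j:=P_j$ for each $j\in\mathbb B\sm\{i\}$; set $s_i:=\hat Y_i$ where $\hat Y_i=Y_i$ if $o_i(c_r)=o_i(c_{r-1})$ and $\hat Y_i=\{f_i\}\cup Y_i$ otherwise; and for $j\in\Delta(c_{r-1})\sm\mathbb B$ (a subset of $\Delta(c_r)$) let $s_j$ be the analogous lift of the Construction~II path $P_j'$. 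The crucial point is that every strategy in $\mathcal P[o_j(c_r),d]$ is required to contain the current edge $e_j(c_r)$, and by the very defining property of $\Bbbk$ every $j\in\mathbb B$ satisfies $e_j=e_j'$; combining these ensures that the action induced by $\bfm s$ at $c_{r-1}$ equals $e_j'$ for every $j\in\Delta(c_{r-1})$. Therefore $\bfm s$ drives $c_{r-1}$ into $c_r$ at time $r$, and from time $r$ onward it executes precisely $(\bfm\rho'_{-i},Y_i)$, giving $t_i^d(\bfm s)=t_i^d(\bfm\rho'_{-i},Y_i)$.

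Finally, the Hierarchal Optimality in Theorem~\ref{cor:nobetter'}(ii), applied to the NE $\bfm\rho$ of $\Gamma^n(c_{r-1})$ at batch~$k_0$ with $j=i$, delivers $t_i^d(\bfm\rho)\le t_i^d(\bfm\rho_{\mathbb C},\bfm s_{\Delta(c_{r-1})\sm\mathbb C})=t_i^d(\bfm s)$, because $\bfm s_{\mathbb C}=\bfm\rho_{\mathbb C}$ by construction. Coupling this with Lemma~\ref{lem:bbbk}, which supplies $t_i^d(\bfm\rho)=t_i^d(\bfm\rho')$, yields $t_i^d(\bfm\rho'_{-i},Y_i)=t_i^d(\bfm s)\ge t_i^d(\bfm\rho)=t_i^d(\bfm\rho')$, so $i$ likewise cannot profit from deviating. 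The anticipated main obstacle is the action-matching step in the third paragraph: forcing $\bfm s$ to push $c_{r-1}$ into the prescribed child $c_r$ rather than some other child depends delicately on both structural facts just highlighted, and without them the proposed lift of $(\bfm\rho'_{-i},Y_i)$ would fail to correspond to any single strategy profile of $\Gamma^n(c_{r-1})$.
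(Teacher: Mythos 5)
Your proposal is correct and follows essentially the same route as the paper's own proof: the same two-case split, with Lemma~\ref{lem:basedominant} handling the Construction~II players, and the combination of Lemma~\ref{lem:bbbk}, Hierarchal Optimality of $\bfm\rho$ in $\Gamma^n(c_{r-1})$, and the lift of the deviated profile back to $\Gamma^n(c_{r-1})$ handling the players of $\mathbb B\cap\Delta(c_r)$. The only difference is presentational --- you spell out the action-matching step showing the lifted profile drives $c_{r-1}$ into $c_r$, which the paper leaves implicit in "the definition of $\bfm\rho'$".
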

\begin{proof}
We need to prove that {$t^d_i(\bfm\rho')\le t^d_i(X_i', \bfm\rho_{\Delta(c_r)\,\sm\,\{i\}}')$} holds for all player $i\in\Delta(c_r)$ and all $o_i(c_r)$-$d$ path $X'_i\in\mathcal P[o_i(c_r),d]$.

{\bf Case 1.} $i\in  \mathbb B$. Suppose $i\in\Delta(\bfm\rho,k)$ for some $k\le \Bbbk$. Therefore, for any path profile $\bfm\chi=(X_j)_{j\in\Delta(c_r)}$ of $\Gamma^n(c_r)$, {with $B:=\Delta(\bfm\rho,[k-1])$} we have
\begin{eqnarray*}
t_i^d(\bfm\rho') 
      &=& t^d_i(\bfm\rho)\\
      &\le& t^d_i(\bfm\rho_{ B},  (\{f_j\}\cup X_j)_{j\in\Delta(c_{r})\setminus  B},\bfm\phi_{\Delta(c_{r-1})\setminus \Delta(c_r)\setminus B})\\
&=&t^d_i(\bfm\rho'_{ B\cap\Delta(c_r)},\bfm\chi_{\Delta(c_r)\setminus B}),
\end{eqnarray*}
where  the first equality is
by Lemma \ref{lem:bbbk},
 the inequality is from  Hierarchal Optimality in {Theorem \ref{cor:nobetter'}(ii)} 
 and the last equality is due to again the definition  of $\bfm\rho'$. In particular, when taking $X_i=X'_i$ and $X_j=P'_j$ for every $j\in\Delta(c_r)\sm \{i\}$, we obtain $t^d_i(\bfm\rho')\le t^d_i(X_i', \bfm\rho_{\Delta(c_r)\,\sm\,\{i\}}')$.

 {\bf Case 2.}  $i\in \Delta(c_r)\sm \mathbb B$. It can be seen from {Lemma \ref{lem:basedominant}} that the path $P_j$ of each player $j\in\Delta(c_r)\sm \mathbb B$  is his  best response to other players' choices, giving $t^d_i(\bfm\rho')\le t^d_i(X'_i,\bfm\rho_{\Delta(c_r)\,\sm\,\{i\}}')$. 
\end{proof}


\begin{proof}[{Proof of Theorem \ref{thm:equilibrium-relation}}]  Let $\sigma=(\sigma_i)_{i\in\Delta}$ be 
{a strategy profile of} 
$\Gamma$ defined as follows: at each {history $h_r=(c_0,\ldots,c_r)$}, players in $\Delta(c_r)$ take actions as specified by the NE  {$\bfm\pi(h_r)$  constructed in Section \ref{sec:relation} for $h_r$}, where $\bfm\pi(c_0)=\bfm\pi$. Similar to the proof of Theorem \ref{thm:SPE}, it can be deduced from Construction I (i.e., (\ref{eq:construct1})) and Construction II {(and Lemma \ref{lem:basedominant})} that for each {history $h_r$} the path profile induced by {$h_r$} and $\sigma$ is exactly {$\bfm\pi(h_r)$}.

To see that $\sigma$ is an SPE of $\Gamma$, we fix an arbitrary $r\ge0$ and an arbitrary history $h_r=(c_0,\ldots,c_r)\in\mathcal H_r$. Let $\bfm\rho'=(P'_i)_{i\in\Delta(c_r)}$ denote the NE {$\bfm\pi(h_r)$ of $\Gamma^n(c_r)$} we {have constructed} for {$h_r$}. In case of $r=0$, we set $\bfm\rho':=\bfm\pi$. Moreover, we consider any $i\in\Delta(c_r)$, any $\sigma'_i\in\Sigma_i$, and the path profile {$\bfm\chi=(X_j)_{j\in\Delta(c_r)}$} induced by {$h_r$} and $\sigma':=(\sigma'_i,\sigma_{-i})$. We need {to verify that}
$t_i(\sigma|{h_r})\le t_i(\sigma'|h_r)$.

If $r=0$, then {we} suppose that $i\in\Delta(\bfm\pi(c_r),k)$ {and write $B=\Delta(\bfm\pi,[k-1])$}. Construction~I, i.e., (\ref{eq:construct1}), implies that {$X_j=P'_j$ for all $j\in B$}, and in turn Hierarchal Optimality in 
{Theorem \ref{cor:nobetter'}(ii)} says that $t_i(\sigma|{h_0})=t^d_i(\bfm\pi)\le t^d_i(\bfm\pi_{ B},\bfm\chi_{\Delta\setminus B})=t^d_i(\bfm\chi)=t_i(\sigma'|h_0)$.

So we assume {now} $r\ge1$. Therefore {$h_r$ is a child history of some (unique) history $h_{r-1}=(c_0,\ldots,c_{r-1})\in\mathcal H_{r-1}$}. Let {$\bfm\rho=(P_j)_{j\in\Delta(c_{r-1})}$} denote the {NE $\bfm\pi(h_{r-1})$ of $\Gamma^n(c_{r-1})$},  $\Bbbk$  be defined as in (\ref{eq:defk}),   {and $\mathbb B:=\Delta(\bfm\rho,[\Bbbk])$}.

If $i\in\Delta(\bfm\rho,k)$ for some $k\le \Bbbk$, then (\ref{eq:construct1})  implies that {$X_j=P_i[o_i(c_r),d]=P'_j$} for all $j\in B\cap\Delta(c_r) $, {where $B=\Delta(\bfm\rho,[k-1])$.  As} in Case 1 of the proof of Lemma~\ref{lem:inductive} we deduce that $t_i(\sigma|{h_r})=t^d_i(\bfm\rho')\le t^d_i(\bfm\rho'_{ B\cap\Delta(c_r)},\bfm\chi_{\Delta(c_r)\setminus B})=t^d_i(\bfm\chi)=t_i(\sigma'|h_r)$.

It remains to consider the case of $i\in \Delta(c_r)\sm\mathbb B$. Assume  {w.l.o.g.} that $i$ is exactly the $i$th  player in the ordering {$1,2,\ldots $} of players in $\Delta(c_r)\sm\mathbb B$ associated with the iterative dominating path profile constructed in Construction II. Again (\ref{eq:construct1}) guarantees $\bfm\chi_{\mathbb B\cap\Delta(c_r)}=\bfm\rho'_{\mathbb B\cap\Delta(c_r)}$. It follows from Lemma \ref{lem:basedominant} that $\bfm\chi_{[i-1]}=\bfm\rho'_{[i-1]}$,  and $t_i(\sigma|{h_r})=t^d_i(\bfm\rho')\le t^d_i(\bfm\rho'_{\mathbb B\cap\Delta(c_r)},\bfm\rho'_{[i-1]},\bfm\chi_{\Delta(c_r)\setminus\mathbb B\setminus [i-1]})=t^d_i(\bfm\chi)=t_i(\sigma'|h_r)$, which completes the proof.
\end{proof}

\section{Details in Section 5}

\begin{proof} [{Proof of Lemma \ref{lem:degree}}]
By symmetry, it suffices to prove $n_1 \le2\Lambda L(2\Lambda+n_2)$. For $i=1,2$, let $S_i$ denote the set of
$n_i$ players who are inside $G_i=(V_i,E_i)$ at time $t$.  {Denote by $o'$ and $d'$ the origin and destination nodes of
the resulting network from connecting $G_1$ and $G_2$ in parallel.}

Take $\zeta $ to be a fixed player in $S_1$ who enters $G_1$  {the} latest. Then all the players in
$S_1\sm\Delta^{o'}_\zeta$ reach $o'$  before time $t^{o'}_\zeta$. Among them, we choose a subset $H$ of $\Lambda$ players     who reach $d'$ under $\bfm \pi$ as late as possible ({note that} the players in $H$ {cannot} reach $d'$ {at the same time}).  It follows that each $h\in H$ reaches $d'$ later than {those in} 
$S_1\sm(\Delta^{o'}\cup\{\zeta\}\cup\Delta^{d'}_h\cup H)$. Since  each time at most $|E_1^-(d')|$ players in $G_1$ can reach $d'$ {at the same time, and} $|E_1^-(d')|$, the {in-degree} of $d'$ in $G_1$, is at most $\Lambda-1$, we have
\begin{equation}\label{eq:g1}
t^{d'}_h(\bfm\pi)\ge t+\left\lceil\frac{|S_1\sm(\Delta^{o'}_\zeta\cup\Delta^{d'}_h\cup H|)}{|E_1^-(d')|}\right\rceil
\ge t+\frac{n_1-3\Lambda}{\Lambda-1}, \ h\in H.
\end{equation}
Moreover, from the choice of $\zeta$ and $H\subseteq S_1\sm\Delta^{o'}_\zeta$ we derive
\begin{equation}\label{eq:eta}
t^{o'}_h(\bfm\pi)<t^{o'}_\zeta(\bfm\pi)\le t, \  h\in H.
\end{equation}

Since $h$ reaches $o'$ earlier than $\zeta$, it is  {immediate} from the {FIFO} property
in {Theorem \ref{cor:nobetter'}(iii)} 
that $t^{d}_h(\bfm\pi)\le t^{d}_\zeta(\bfm\pi)$ for all $h\in H$. Since
$|\{\zeta\}\cup H|=|\Lambda|+1>|\Delta^d_\zeta|$, there exists $\eta\in H$ such that
\begin{equation}\label{eq:strict}
 t^{d}_\eta(\bfm\pi)< t^{d}_\zeta(\bfm\pi).
\end{equation}

Next, we distinguish between two cases depending on whether there exists a player who enters $G_2$
{within time period} $(t^{o'}_\zeta(\bfm\pi),t+L(2\Lambda+n_2)]$.

\paragraph{Case 1.} Under $\bfm \pi$ no player enters $G_2$ {within time period} $(t^{o'}_\zeta(\bfm\pi),
t+L(2\Lambda+n_2)]$. {Consider} the routing profile $\mathbf r=(R_\zeta,\bfm\pi_{\Delta\sm\{\zeta\}})$,
in which $\zeta$ changes the part of his route  from $\Pi_\zeta[o',d']\subseteq G_1$ to an $o'$-$d'$ path
$R_\zeta[o',d']\subseteq G_2$, while the other parts of his route, $R_\zeta[o,o']=\Pi_\zeta[o,o']$ and $R_\zeta[d',d]
=\Pi_\zeta[d',d]$, are kept unchanged. By (\ref{eq:strict}) and  {Lemma \ref{NEproperty'},} 
we see that $\zeta$
cannot preempt $\eta$  ({i.e.,} reach some vertex earlier than $\eta$) under $\mathbf r$, which along
with (\ref{eq:g1}) gives $t+\frac{n_1-3\Lambda}{\Lambda-1}\le t^{d'}_\eta(\bfm\pi)\le t^{d'}_{\zeta}(\mathbf r)$. If
$t^{d'}_{\zeta}(\mathbf r)\le t+L(2\Lambda+n_2)$, then $n_1\le 3\Lambda+(\Lambda-1)L(2\Lambda+n_2)$ and we are done.
So we assume
\begin{equation}\label{eq:ass}
t^{d'}_{\zeta}(\mathbf r)> t+L(2\Lambda+n_2).
\end{equation}

Let $S$ consist of players who enter $G_2$  {no later than} $t^{o'}_\zeta(\bfm\pi)$ under $\bfm\pi$. {According
to the condition} of Case~1, inequality (\ref{eq:ass}) and the series-parallel structure of $G$, it is not hard to see that
from time $t^{o'}_\zeta(\mathbf r)=t^{o'}_\zeta(\bfm\pi)$ to {time} $t+L(2\Lambda+n_2)$, no player outside $S$
can queue before $\zeta$ {at} the same edge at the same time of $\zeta$ under $\mathbf r$.
Since all players in $S\sm\Delta^{o'}_\zeta$ preempt $\zeta$ under $\bfm\pi$, it follows from the FIFO property in {Theorem \ref{cor:nobetter'}(iii)} 
that the players of $S\sm(\Delta^{o'}_\zeta\cup\Delta^d_\zeta)$ all reach $d$  earlier than time $t^d_\zeta(\bfm\pi)$.
Then by 
{Lemma \ref{NEproperty'},} players in $\Delta^d_\zeta$  {cannot} preempt any player of
$S\sm(\Delta^{o'}_\zeta\cup\Delta^d_\zeta)$ under $\mathbf r$. Since at time $t$  the players 
{in} $S\sm(\Delta^{o'}_\zeta
\cup\Delta^d_\zeta)$ who are inside $G_2$ is a subset of $S_2$,  we see that {under $\mathbf r$ player}
$\zeta$ reaches $d'$ {no later than} time
$t+|E(R_\zeta[o',d'])|\cdot|S_2\cup\Delta^{o'}_\zeta\cup\Delta^{d}_\zeta|\le t+L(n_2+2\Lambda)$, 
{because}
{under $\mathbf r$ by the time $ t+L(n_2+2\Lambda)$ the transit time of $\zeta$}
through each edge of $E(R_\zeta[o',d'])$ is at most $|S_2\cup\Delta^{o'}_\zeta\cup\Delta^{d}_\zeta|$,
a contradiction to (\ref{eq:ass}).

\paragraph{Case 2.} There exists $\psi\in \Delta$ who enters $G_2$ at an earliest time
$t^{o'}_\psi(\bfm\pi)\in(t^{o'}_\zeta(\bfm\pi),t+L(2\Lambda+n_2)]$. If  {a player} $j\in \Delta$ preempts $\psi$
at some time point in $[t^{o'}_\psi(\bfm\pi),t^{d'}_\psi(\bfm\pi)]$ under $\bfm\pi$, then, {by Theorem \ref{cor:nobetter'}(iii),} 
either $j\in\Delta^d_{\psi}$ or $j$ enters $G_2$ no later than $\psi$ does. The definition of $\psi$ (i.e., the minimality of
$t^{o'}_\psi(\bfm\pi)$) implies that either $j\in\Delta^d_\psi\cup\Delta^{o'}_\psi$ or $j$ enters $G_2$ at time
$t^{o'}_j(\bfm\pi)\le t^{o'}_\zeta(\bfm\pi)\le t$. So either $j\in\Delta^d_\psi\cup\Delta^{o'}_\psi$, or $j\in S_2$,
or $t^{d'}_j\le t$ {($j$ leaves $G_2$ at time $t$)}. It follows that
\begin{equation}\label{eq:psi}
t^{d'}_\psi(\bfm\pi)\le \max\{t,t^{o'}_\psi(\bfm\pi)\}+|E(\Pi_\psi[o',d'])|
\cdot|\Delta^d_\psi\cup\Delta^{o'}_\psi\cup S_2|\le t+2L(2\Lambda+n_2).
\end{equation}
Recalling  {the definition of $\psi$ and} (\ref{eq:eta}), we have $t^{o'}_\psi(\bfm\pi)>t^{o'}_\zeta(\bfm\pi)>
t^{o'}_h(\bfm\pi)$ for all $h\in H$. If $t^{d'}_\psi(\bfm\pi)<t^{o'}_h(\bfm\pi)$ for all $h\in H$, then we deduce from
 {Consecutive Exiting in Corollary \ref{cor:NEproperty}(ii)} that all players in $\{\psi\}\cup H$ reach $d$ at the same time under $\bfm\pi$, i.e.,
$t^d_\psi(\bfm\pi)$. However, $\{\psi\}\cup H\subseteq\Delta^d_{\psi}$ and $|\{\psi\}\cup H|=\Lambda+1$ contradict
$|\Delta^d_\psi|\le\Lambda$. {Hence,} we have $t^{d'}_\psi(\bfm\pi)\ge t^{d'}_h(\bfm\pi)$ for some $h\in H$. {In turn,}
(\ref{eq:psi}) and (\ref{eq:g1}) give $ t+2L(2\Lambda+n_2)\ge t^{d'}_\psi(\bfm\pi)\ge t+\frac{n_1-3\Lambda}{\Lambda-1}$,
yielding $n_1\le3\Lambda+2L(\Lambda-1)(2\Lambda+n_2)\le2\Lambda L(2\Lambda+n_2)$, as desired.
\end{proof}


\begin{proof} [{Proof of Lemma \ref{lem:series1}}]
Suppose $G_s$ is obtained from $G_1$ and $G_2$ by identifying $d_1$ and $o_2$.
We distinguish between two cases.

In {the} case of $|\Xi(G_1)|\le |\Xi(G_2)|$, clearly $\Xi(G_s)$ is just $\Xi(G_1)$. To prove $G_s\in\mathfrak F$, as $G_1,G_2\in\mathfrak F$, it suffices to show that  {the number of players $G_2^l$ can accommodate is upper bounded} at any time. Since $G_1^r$ can accommodate at most $F^r_1$ players, and $\Xi(G_2)$ is full as long as $G_2^l$ accommodates more than $F^l_2$ players, we deduce that $\Xi(G_2)$ is full as long as $G_1^r\cup G_2^l$ accommodates more than $F^r_1+F^l_2$ players. Consider  any time point $t$ such that $G_1^r\cup G_2^l$ accommodates at most $F^r_1+F^l_2$ players at time $t$, and more than $F^r_1+F^l_2$ players at time $t+1$. {Then,}  {at time $t+1$, we know} $G_1^r\cup G_2^l$ accommodates at most $F^r_1+F^l_2+|\Xi(G_1)|$ players and $\Xi(G_2)$ is full. So $|\Xi(G_2)|$ players   leave $G_1^r\cup G_2^l$ at time $t+2$ while at most $|\Xi(G_1)|$ players enter $G_1^r\cup G_2^l$. It follows from $|\Xi(G_1)|\le|\Xi(G_2)|$ that the number of players inside $G_1^r\cup G_2^l$ is nonincreasing unless the number decreases below $F^r_1+F^l_2+1$. {Therefore, at any time, $G^r_1\cup G^l_2$ and  {consequently} $G^l_2$} can accommodate at most $F^r_1+F^l_2+|\Xi(G_1)|$ players.

In {the} case of $|\Xi(G_1)|> |\Xi(G_2)|$, we see that $\Xi(G_s)$ is just $\Xi(G_2)$. To prove $G_s\in\mathfrak F$, we only need to show that $\Xi(G_2)$ is full as long as  $G_s^l$ ($=G_1\cup G_2^l$) accommodates more than a certain finite number  of players.  {Let $t$ be a time when $\Xi(G_2)$ is not full and $G_1\cup G_2^l$ accommodates $F$ ($>F^l_1+F^r_1+F_2^l$) players, then it must be the case that $F\le F^l_1+(1+\Lambda)( F_1^r+F_2^l)$. Since $G_2\in\mathfrak F$, we know $G_2^l$ accommodates 
 {fewer} than $F^l_2$ players at time $t$. It follows that $G_1^l$ accommodates at least $F-(F^r_1+F_2^l)>F^l_1$ players.   Let $t'$ be the latest time before $t$ when $G_1^l$ accommodates at most $F^l_1$ players. Since from $t'+1$ to $t$,  at most $\Lambda$ players can enter $G_s$ from $o_1$ at each time, then}
\[
F-(F^r_1+F_2^l)-F^l_1\le (t-t')\Lambda.
\]
Since, by the choice of $t'$, at time $t'+1,t'+2,\ldots,t$, there are more than $F^l_1$ players in $G^l_1$, cut $\Xi(G_1)$ is full at all these consecutive $t-t'$ time points. Therefore, at time $t$, the number of players inside $G_1^r\cup G_2^l$ is at least
\[
(t-t')(|\Xi(G_1)|-|\Xi(G_2)|)\ge t-t'\ge (F-F^l_1-F^r_1-F_2^l)/\Lambda.
\]
On the other hand, by $G_1\in\mathfrak F$ and the choice of $t$, there are at most $F_1^r+F_2^l$ players inside $G_1^r\cup G_2^l$ at time $t$. 
{It follows from $(F-F^l_1-F^r_1-F_2^l)/\Lambda\le F_1^r+F_2^l$ that $F\le F^l_1+(1+\Lambda)( F_1^r+F_2^l)$, proving the lemma}.
\end{proof}


\begin{proof} [{Proof of Theorem \ref{thm:queue-length-SPE}}]
{Given the conditions in the theorem, it is easy to see that $|\bar{E}^+(v)|\ge |\bar{E}^-(v)|$
for any vertex $v\in V\sm\{o,d\}$.} Since $G$ is acyclic, there is a natural {full} order {$\prec$ on its} vertices.
For all {$u,v\in V$}, if there is a {$u$-$v$} path in $G$, we  say that $v$ is larger than $u$. We may also say that
an edge {$e\in E$} is larger than $u$ if there is a directed  path {in $G$} starting from $u$ and {going} through
$e$.

Note that the theorem is equivalent to stating that there is an {upper bound on} the lengths of the queues on all edges {of $G$. For easy reference, we} call an edge that has no such {upper bound} an {\em infinity edge}.
  Using the partial order $\prec$, we prove by induction to show that each edge has an {upper bound}. The base cases are easy. In fact, for each maximal vertex {$u\in V\sm\{d\}$}, {by the apparent boundedness on parallel networks,}  no {outgoing edge from} $u$ is an infinity edge.

For any {$u\in V\sm\{d\}$}, suppose now there is no {vertex of $G$} that is larger than $u$ and has an infinity {outgoing edge. Thus, for each edge {$e\in E\sm E^+(u)$}} that is larger than $u$, there exists an {upper bound, denoted $U_e$,  on} the queue lengths on $e$. For each out-neighbor $v$ of $u$, {we write $\mathscr P[v,d]$ for  the set of  $v$-$d$ paths in $G$, and  observe that {$U_v:=\max_{P\in \mathscr P[v,d]}\sum_{e\in P}(U_e+1)$} is an upper bound  on the costs} of  all  the {$v$-$d$ paths}.

Let {$U_u:=1+\max_{v } U_v$, where the maximum is taken over the set of all out-neighbors $v$ of $u$.} Since the traffic is {induced by an SPE},  at any time, the {difference} between any two queue lengths of the {outgoing} edges of $u$ is at most $U_u$. Therefore, whenever the total number of players on the {outgoing} edges of $u$ exceeds $(|E^+(u)|-1)U_u$, each {outgoing edge of $u$ has a nonempty queue on it}. Since the {out-degree} of $u$ is no less than {its in-degree in $\bar G$}, it can be seen that the total number of players on the {outgoing} edges of $u$ does not increase in the next time step, and hence it never exceeds  {$(|E^+(u)|-1)U_u+|E^-(u)|$}. This gives an {upper bound} for the queue lengths of the {outgoing} edges of $u$ {and} finishes the proof.
\end{proof}

\end{appendix}

\end{document}